\pgfplotsset{compat=1.18}
\newlength\caseLen
\newlist{mycases}{enumerate}{1}
\setlist[mycases,1]{label=\textbf{Case~\arabic*.}, 
  labelwidth=\dimexpr-\caseLen-\labelsep\relax,leftmargin=0pt,align=right}
\numberwithin{table}{section}
\numberwithin{figure}{section}
\numberwithin{equation}{section}
\newtheorem{theorem}{Theorem}[section]
\newtheorem{lemma}[theorem]{Lemma}
\newtheorem{corollary}[theorem]{Corollary}
\theoremstyle{definition}
\newtheorem{algorithm}{Algorithm}[section]
\newtheorem{definition}[theorem]{Definition}
\newtheorem{remark}[theorem]{Remark}
\newcommand{\genGraphTable}{General graphs}
\newcommand{\eye}{I}
\newcommand{\degr}[1]{\mathsf{deg}\left({#1}\right)}
\newcommand{\setFunct}[2]{ \left\{ {#1} \, : \, {#2}\right\}}
\newcommand{\feasLass}[2]{F^{#1}_{#2}}
\newcommand{\lambdaMax}[1]{\lambda_{\mathrm{max}}\mleft({#1}\mright)}
\newcommand{\hyperGeom}{{}_2 \mathrm{F}_1}
\newcommand{\modSubscript}[2]{\left({#1}\right)_{\! \! \! \! \! \mod {#2}}}
\newcommand{\R}{\mathbb{R}}
\newcommand{\edge}[2]{\left\{{#1},{#2}\right\}}
\newcommand{\optTheta}{1-e^{-x/20}}
\newcommand{\qmcFullName}{Quantum Max-Cut}
\newcommand{\C}{\mathbb{C}}
\newcommand{\ratioGeneral}{0.603}
\newcommand{\ratioBipart}{0.8162}
\newcommand{\ratioBipartUnrounded}{0.8162}
\newcommand{\ratioNoTri}{0.61383}
\newcommand{\unroundedApproxOpt}{0.61383}
\newcommand{\ratioBipartUB}{0.8339}
\newcommand{\Tr}[1]{\mathrm{tr}\left({#1}\right)}
\newcommand{\edgeSubset}[1]{E_{#1}}
\newcommand{\imagUnit}{\mathbf{i}}
\newcommand{\argFunc}[1]{\mathrm{Arg}\left({#1}\right)}
\newcommand{\signFunction}[1]{\mathrm{sgn}\!\left({#1} \right)}
\newcommand{\ratioUB}{0.61392}
\Crefname{enumi}{Step}{Steps}
\crefname{enumi}{step}{steps}
\renewcommand{\i}{\mathbf{i}}
\newcommand{\stkout}[1]{\ifmmode\text{\sout{\ensuremath{#1}}}\else\sout{#1}\fi}
\newenvironment{proofsketch}{
  \proof}{\endproof}
\title{Improved approximation ratios for the \qmcFullName{} problem on general, triangle-free and bipartite graphs} 
\author{Sander Gribling\thanks{CentER,  Department of Econometrics and OR, Tilburg University, The Netherlands \\
\hphantom{asd} \texttt{\{s.j.gribling,l.m.sinjorgo,r.sotirov\}@tilburguniversity.edu}} \\  \and Lennart Sinjorgo\footnotemark[1]
\and Renata Sotirov\footnotemark[1]}
\begin{document}

\date{}
\maketitle

\vspace{-3em}
\begin{abstract}
     We study polynomial-time approximation  algorithms for the \qmcFullName{} (QMC) problem. Given an edge-weighted graph $G$ on $n$ vertices, the QMC problem is to determine the largest eigenvalue of a particular $2^n \times 2^n$ matrix that corresponds to $G$. We provide a sharpened analysis of the currently best-known QMC approximation algorithm for general graphs. This algorithm achieves an approximation ratio of $0.599$, which our analysis improves to \ratioGeneral{}. Additionally, we propose two new approximation algorithms for the QMC problem on triangle-free and bipartite graphs, that achieve approximation ratios of \ratioNoTri{} and $\ratioBipart{}$, respectively. These are the best-known approximation ratios for their respective graph classes. 
\end{abstract}

\newcommand{\githubLink}{\href{https://github.com/LMSinjorgo/QMC_proofVerification}{\texttt{https://github.com/LMSinjorgo/QMC\_proofVerification}}}
\section{Introduction}
\label{section_introduction}
The \qmcFullName{} (QMC) problem is to determine the largest energy eigenvalue and corresponding eigenstate of the Hamiltonian
\begin{align}
    \label{eqn_hamiltonDef}
    H_G := \sum_{\edge{i}{j} \in E} w_{ij} H_{ij}, \text{ for } H_{ij} :=   \eye - X_i X_j - Y_i Y_j - Z_i Z_j  \text{ and } w_{ij} > 0 
\end{align}
where $E$ denotes the edge set of a positive edge-weighted graph $G = (V,E,w)$ on $n$ vertices, and $X_i :=  \eye^{\otimes (i-1)}_2 \otimes X \otimes  \eye^{\otimes(n-i)}_2.$
Here $X$ is one of the well-known Pauli matrices and $\eye_2$ the identity matrix of order two. The matrices $Y_i$ and $Z_i$ are similarly defined. We provide further details in \Cref{section_notation}.

The QMC problem is an example of a $k$-local Hamiltonian problem with $k = 2$, since each local Hamiltonian $H_{ij}$ acts on 2 qubits. The QMC problem is one of the simplest \textsf{QMA}-hard $k$-local Hamiltonian problems \cite[Thm.~2]{PM17}, as the general $k$-local Hamiltonian problem is \textsf{QMA}-hard if and only if $k \geq 2$ \cite{kempe2006complexity}. This motivates the search for polynomial-time (in the number of qubits $n$) approximation algorithms. 
The QMC problem is the $k$-local Hamiltonian problem that has received the most attention when it comes to developing such approximation algorithms, see \cite{GP19,PT21-lev2,Lee22,PT22,king2023improved,lee2024improved,jorquera2024algorithms,kannan2024quantum,huber2024second}. This interest is partially due to the fact that the QMC problem is equivalent to the anti-ferromagnetic Heisenberg model from physics.

The study of QMC approximation algorithms is also motivated by the similarities and differences between the QMC problem and the classical \textsf{NP}-hard Max-Cut problem. Indeed, taking as local Hamiltonian $H_{ij} = \eye - Z_i Z_j$ reduces problem \eqref{eqn_hamiltonDef} to the Max-Cut problem. Famously, the Max-Cut problem admits a polynomial-time $0.878$-approximation algorithm \cite{GW95}, which was later shown to be optimal \cite{KKEO07} under the Unique Games Conjecture (UGC) \cite{khot2002power}. In contrast, the current best-known approximation ratio for the QMC problem is far from its upper bound: the current best-known ratio (proven in this paper) equals 0.603, while the upper bound equals 0.956 (under UGC and a related conjecture, see \cite{HNPTW21}). \Cref{table_approxLBs} provides an overview of QMC approximation algorithms for different graph classes, and their approximation ratios. Note that these algorithms are designed to run on classical computers.

A natural question is whether one can design better approximation algorithms if we allow quantum algorithms. Quantum algorithms, based on the well-known Quantum Approximate Optimization Algorithm \cite{farhi2014quantum}, have recently been proposed in \cite{kannan2024quantum,marwaha2024performance}. While these algorithms are empirically shown to achieve strong approximation ratios on some unweighted QMC instances, their approximation ratios are currently unknown. Instead, several asymptotic results have been established when restricting the QMC problem to certain graph classes. For example, let $\ket{v_p}$ be the output state of the the algorithm from \cite{kannan2024quantum}, with circuit depth $p \in \mathbb{N}$. On unweighted bipartite graphs $G$, the state $\ket{v_p}$ converges to an eigenvector corresponding to the largest eigenvalue of $H_G$, as $p \to \infty$ (the convergence rate is currently not known). 

Most QMC approximation algorithms use a semidefinite programming (SDP) relaxation of the QMC problem. SDP relaxations of the QMC problem belong to the field of noncommutative polynomial optimization, and are based on the NPA hierarchy~\cite{navascues2008convergent,pironio2010convergent}, which is the noncommutative variant of the Lasserre hierarchy~\cite{lasserre2001global}.  The connection to noncommutative polynomials follows from considering the local Hamiltonians in \eqref{eqn_hamiltonDef} as degree-2 polynomials in noncommutative variables $x_i, y_i$ and $z_i$ that represent the Pauli matrices (we provide further details in \Cref{section_SDP_defs}). SDP relaxations of the QMC problem, based on the variables $x_i, y_i$ and $z_i$, were first considered in \cite{GP19}. Later research \cite{KPTTZ23,CEHKW23} investigated SDP relaxations of the QMC problem based on the noncommutative SWAP operators $S_{ij} := (\eye + X_i X_j + Y_i Y_j + Z_i Z_j)/2$.

\paragraph{Contributions.}
In this work, we improve the analysis of the QMC approximation algorithm \cite[Alg.~5]{lee2024improved} for general graphs, and propose two new QMC approximation algorithms, one for triangle-free graphs, and the other for bipartite graphs. This yields the best-known approximation ratios  for these three graph classes: \ratioGeneral{}, \ratioNoTri{} and $\ratioBipart{}$ respectively, see \Cref{table_approxLBs}. 

The improved analysis of \cite[Alg.~5]{lee2024improved} follows by showing that a particular vector induced by an SDP relaxation of the QMC problem is contained in the matching polytope. To prove the containment,  we compute a bound on the optimal QMC value for all (up to isomorphism) unweighted graphs on $\leq 13$ vertices. We derive properties of the used SDP relaxation that reduce the required computation time to approximately 16 hours.

Our QMC approximation algorithm for triangle-free graphs is inspired by~\cite[Alg.~17]{king2023improved}, which achieves an approximation ratio of 0.582. Our improvements here are due to the use of the matching-based QMC approximation algorithm from \cite{lee2024improved}, and improved parameters. One such parameter is a real-valued function $\Theta$ that is required to satisfy non-trivial constraints, which we capture using a set of functions $\mathcal{A}$. Despite these non-trivial constraints, we prove that our choice of $\Theta \in \mathcal{A}$ yields an approximation ratio that is at most 0.00009 below the ratio obtained by an optimal $\Theta \in \mathcal{A}$.

Our QMC approximation algorithm for bipartite graphs is inspired by \cite[Alg.~1]{Lee22}, which is suited for general graphs and achieves an approximation ratio of 0.562. One step of \cite[Alg.~1]{Lee22} is rounding the used SDP relaxation of the QMC to a cut of the input graph. For bipartite graphs, we instead take the cut as the bipartition. The resulting algorithm requires a real-valued function $\Theta \in \mathcal{A}$ as parameter, similar to our algorithm for triangle-free graphs. The function $\Theta$ maps an optimal solution of the used QMC SDP relaxation, to angles $\theta_e \in [0,\pi/2]$ used in the outputted state, for all edges $e$ of the bipartite input graph. To further improve the algorithm, we carefully adjust the value of $\theta_e$ when the SDP relaxation assigns a sufficiently large objective value to edge~$e$.

\paragraph{Outline.} This paper is organized as follows. We provide our notation regarding graphs and noncommutative polynomials in \Cref{section_notation}. \Cref{section_graphTheoryPrelim} provides definitions of all the concepts from graph theory that we use. In \Cref{section_SDP_defs}, we state the SDP relaxation that is used for the approximation algorithms, and consider some simple properties of it. \Cref{section_sharpenedAnalysis} provides the improved analysis of \cite[Alg.~5]{lee2024improved}. 
In \Cref{section_triangleFreeGraphs} we provide our QMC approximation algorithm for triangle-free graphs, and prove that it achieves an approximation ratio of \ratioNoTri{}. In \Cref{section_approxBipart}, we provide our QMC approximation algorithm for bipartite graphs, and prove that it achieves an approximation ratio of $\ratioBipart{}$. Concluding remarks and future research directions are given in~\Cref{section_conclusions}. Some of our results are based on computations; our code is available in an online repository available at \githubLink.

\begin{table}[ht]
    \centering
\begin{tabular}{l|l|l}
\hline
Reference & Ratio & Remark \\ \hline
\cite{GP19} & $0.498$ & \genGraphTable{}, outputs product state \\
\cite{PT22} & $1/2$ & \genGraphTable{}, outputs product state \\
\cite{huber2024second} & $0.526$ & \genGraphTable{}, uses SOC instead of SDP \\
\cite{AGM20} & $0.531$ & \genGraphTable{} \\
\citep{PT21-lev2} & $0.533$ & \genGraphTable{} \\
\cite{Lee22} & $0.562$ & \genGraphTable{} \\
\cite{lee2024improved} & $0.595$ & \genGraphTable{} \\
\cite{jorquera2024algorithms} & $0.599$ & \genGraphTable{}, improved analysis of \cite{lee2024improved} \\
Thm. \ref{thm_boundN5} & $\ratioGeneral{}$ & \genGraphTable{}, improved analysis of \cite{lee2024improved} \\
\cite{king2023improved} & $0.582$ & Triangle-free graphs \\
Thm. \ref{lemma_UB_triangleFree} & $\ratioNoTri{}$ & Triangle-free graphs \\
\cite{king2023improved} & $\frac{1}{\sqrt{2}} \approx 0.707$ & Bipartite graphs \\
Thm. \ref{thm_LBbipartAlg} & $\ratioBipart{}$ & Bipartite graphs \\ \hline
\end{tabular}
    \caption{Lower bounds on the approximation ratios of classical QMC approximation algorithms.}
    \label{table_approxLBs}
\end{table}

\subsection{Notation}
\label{section_notation}

The Pauli matrices are defined as
\begin{align*}
    X := \begin{bmatrix}
        0 & 1 \\
        1 & 0
    \end{bmatrix}, \, Y := \begin{bmatrix}
        0 & -\imagUnit \\
        \imagUnit & 0
    \end{bmatrix}, \text{ and } Z := \begin{bmatrix}
        1 & 0 \\
        0 & -1
    \end{bmatrix},
\end{align*}
where $\imagUnit$ is the imaginary unit. Recall that the matrix $X_i$ is defined as
$X_i :=  \eye^{\otimes (i-1)}_2 \otimes X \otimes  \eye^{\otimes(n-i)}_2$  where $n \in \mathbb{N}$. Matrices  $Y_i$ and $Z_i$ are similarly defined. The $3n$ matrices in $\mathbf{P} := \{ X_i, Y_i, Z_i \, : \, i \in [n] \}$, where $[n]:=\{1,\dots,n\}$, are also referred to as Pauli matrices. 

We denote by  $G = (V,E,w)$ an edge-weighted graph with vertex set $V$, edge set $E$, and positive edge weights $w$. 
If a graph is unweighted,  we write $G=(V,E)$. All graphs we consider are simple and undirected. We also denote the vertex and edge sets of a graph $G$ as $V(G)$ and $E(G)$, respectively. The neighborhood of a vertex $i$ is defined as $N(i) := \setFunct{ j \in V}{ \edge{i}{j} \in E}$, and its degree as $\degr{i} := |N(i)|$. The set of all unweighted graphs on $n$ vertices is denoted $\mathcal{G}_n$.

A Hermitian matrix is a matrix that is equal to its conjugate transpose. The largest eigenvalue of a Hermitian matrix $A$ is denoted $\lambdaMax{A}$. We write $A \succeq 0$ to indicate that $A$ is positive semidefinite (PSD). For any $x \in \mathbb{R}$, $x^+ := \max\{x,0\}$ and $\lfloor x \rfloor$ is the floor of $x$. The expectation operator is denoted $\mathbb{E}(\cdot)$. The function $\argFunc{\cdot} : \mathbb{C} \to [0, 2\pi)$  returns the angle of a complex number, with $\argFunc{0} = 0$. The function $\signFunction{\cdot}: \mathbb{R} \to \{ \pm 1 \}$ returns the sign of the input, with $\signFunction{0} = 1$. The Euclidean norm of a vector  $x \in \mathbb{R}^k$ is denoted by  $\| x \|$.

In this work we use noncommutative polynomials in the $3n$ variables $x_1,\ldots,x_n,$ $ y_1,\ldots,y_n,$ $z_1,\ldots,z_n$. We denote the set of variables by $\mathbf p$, that is, $\mathbf{p} := \setFunct{ x_i, y_i, z_i}{i \in [n]}$. A product of noncommutative variables is referred to as a word, or monomial. The length, or degree, of a word is defined as the sum of the exponents in the word. We define $\langle \mathbf{p}^n \rangle_k$ as the set of all words of degree at most~$k$, consisting of the $3n$ variables from $\mathbf{p}$. The degree of a polynomial $p$, denoted $\deg(p)$, equals the largest degree among its words with nonzero coefficients. 
We denote the ring of noncommutative polynomials with complex coefficients by $\C \langle \mathbf{p} \rangle$, and we let $\C\langle \mathbf{p} \rangle_k$ be the set of all such polynomials of degree at most $k$, where $k \in \mathbb{N}$. The sets $\R \langle \mathbf{p} \rangle$ and $\R\langle \mathbf{p} \rangle_k$ are defined analogously, using real coefficients instead of complex. The involution $w \mapsto w^*$ returns the word $w$ with the order of the symbols reversed. This involution is extended to $\C \langle \mathbf{p} \rangle$ and $\R \langle \mathbf{p} \rangle$ by conjugate linearity. 

The dual spaces of $\C\langle \mathbf{p} \rangle, \, \C\langle \mathbf{p} \rangle_k, \, \R\langle \mathbf{p} \rangle$ and $\R\langle \mathbf{p} \rangle_k$ are denoted by $\C\langle \mathbf{p} \rangle^*, \, \C\langle \mathbf{p} \rangle^*_k, \, \R\langle \mathbf{p} \rangle^*$ and $\R\langle \mathbf{p} \rangle^*_k$. Let $f_1, \dots, f_m \in \C \langle \mathbf p \rangle$. 
We define $\langle f_1, \dots, f_m \rangle := \setFunct{ \sum_{i=1}^m g_i f_i}{ g_i \in \C \langle \mathbf p \rangle \, \forall i \in [m]}$ as the ideal generated by the polynomials $f_1, \dots, f_m$. We define the ideal truncated at degree $k$ as $\langle f_1, \dots, f_m \rangle_k := \setFunct{ \sum_{i=1}^m g_i f_i}{ g_i \in \C\langle \mathbf p \rangle, \deg(g_i f_i) \leq k  \, \forall i \in [m]}$. For $f \in \C\langle \mathbf p\rangle$, we write $f(\mathbf{P}) \in \mathbb{C}^{2^n \times 2^n}$ to denote the polynomial $p$ evaluated at the Pauli matrices.

\subsection{Graph theory preliminaries}
\label{section_graphTheoryPrelim}
We  present some concepts from graph theory that will be used in the following sections.
\begin{definition}[Graph matching]
    \label{def_graphMatching}
    A matching $\mathcal{M}$ of $G = (V,E,w)$ is a mapping $\mathcal{M}: E \to \{0,1\}$, satisfying $\mathcal{M}_e \mathcal{M}_{e'} = 0$ if edges $e$ and $e'$ are adjacent. We say that the edges satisfying $\mathcal{M}_e = 1$ form a matching. The weight of a matching is given by $\sum_{e \in E} w_e \mathcal{M}_e$. We say that a vertex $i \in V$ is unmatched by $\mathcal{M}$ if $\mathcal{M}_e = 0$ for all edges $e$ incident to $i$.
\end{definition}
Given a weighted graph $G= (V, E, w)$, Edmonds \cite{Ed65} showed that the following linear program (LP), gives the weight of a maximum weight matching in $G$:
\begin{align}
& \max &&  \sum_{e \in E} w_e x_e \tag{MW-LP} \label{eq:matching:lp} \\
& \mathrm{s.t.} &&  \sum_{j \in N(i)} x_{ij} \leq 1 \quad \forall i \in V, \quad x_e \geq 0 \quad \forall e \in E \label{eq:matching:vertex} \hspace{5em}  \\ 
& && \sum_{e \in \edgeSubset{S}} x_e \leq \frac{|S|-1}{2} \quad \forall S \subseteq V, \, |S| \text{ odd}  \label{eq:matching:set}
\end{align}
where $\edgeSubset{S} := \setFunct{ \edge{i}{j} \in E }{ i,j \in S }$ for all $S \subseteq V$. 

The feasible set of \ref{eq:matching:lp} is referred to as the \textit{matching polytope}. It is common to refer to the constraints \eqref{eq:matching:set} as \textit{odd set inequalities}. \ref{eq:matching:lp} is an LP with an exponential number of constraints, and thus cannot be efficiently solved.  
 However, there exist polynomial time algorithms for finding a maximum matching in a general graph, see e.g., \cite{Edmonds1965PathsTA,Blum1990ANA}.

\begin{definition}[Edge-induced subgraph]
    \label{def_edgeInduceSub}
    Let $G = (V,E)$, and $\widetilde{E} 
    \subseteq E$. The edge-induced subgraph of $G$ is denoted by $G[\widetilde{E}]$ and defined as $G[\widetilde{E}] := (\widetilde{V}, \widetilde{E})$,  for
        $\widetilde{V} = \setFunct{i 
        \in V}{ \exists j \in V \text{ such that } \edge{i}{j} \in \widetilde{E} }$.
\end{definition}

\begin{definition}[Vertex cover number $\tau(G)$]
    \label{def_graphCover}
    For $G = (V,E)$, we say that $S \subseteq V$ is a vertex cover  of $G$ if every edge of $E$ is adjacent to at least one vertex in $S$. The vertex cover number of $G$, denoted $\tau(G)$, is defined as the  cardinality of the smallest  vertex cover of $G$. Equivalently,
    \begin{align*}
        \tau(G) = \min_{S \subseteq V(G)} \setFunct{ |S| }{ \edge{i}{j} \cap S \neq \emptyset \, \, \, \forall \edge{i}{j} \in E}.
    \end{align*}
\end{definition}

\section{SDP bounds}
\label{section_SDP_defs}
The matrix $H_G$, see \eqref{eqn_hamiltonDef}, is of order $2^n$ which prohibits computing $\lambdaMax{H_G}$ directly. We present two SDP hierarchies, based on noncommutative polynomial optimization \cite{navascues2008convergent,pironio2010convergent}, see also~\cite{GP19}, that yield upper bounds on $\lambdaMax{H_G}$ in time polynomial in $n$.

The problem of computing $\lambdaMax{H_G}$
can be considered as a noncommutative polynomial optimization problem. 
Indeed, for each of the Pauli matrices $X_i, Y_i, Z_i$ $i\in[n]$ from  $\mathbf{P}$ we introduce a noncommuting variable $x_i, y_i,z_i$, respectively. We then consider the ring of noncommutative polynomials $\C\langle \mathbf{p} \rangle$. 
The Pauli (anti)commutation relations
can be expressed as $p(\mathbf{P}) = 0$ for all $p \in \mathcal{I}$, 
where   $\mathcal{I}$ is the ideal defined as: 
\begin{equation*}
\begin{aligned}
    \mathcal{I} = \langle &x_i^2 - 1, y_i^2-1, z_i^2-1, x_i y_j - y_j x_i, x_i z_j - z_j x_i, y_i z_j - z_j y_i, \\
    &x_i y_i- \imagUnit z_i, x_i z_i + \imagUnit y_i, y_i z_i - \imagUnit x_i,  y_i x_i + \imagUnit z_i, z_i x_i - \imagUnit y_i, z_i y_i + \imagUnit x_i, : i,j \in V, i \neq j \rangle.
\end{aligned}    
\end{equation*}
Let us also write $H_G(\mathbf{p})$ as the polynomial obtained after replacing the Pauli matrices in \eqref{eqn_hamiltonDef} with their corresponding variables $x_i$, $y_i$ and $z_i$. We have the following upper bound on $\lambdaMax{H_G}$, parametrized by some $k \in \mathbb{N}$: 
\begin{equation}
\label{eqn_firstSDPattempt}
\begin{aligned}
    \lambdaMax{H_G}  \leq  \max_{L \in \C\langle \mathbf{p} \rangle_{2k}^*}  & L(H_G(\mathbf{p})) \\
     \mathrm{s.t.}  &\,  L(p) = 0 \quad  \forall p \in \mathcal{I}_{2k}, \,  L(p^*p) \geq 0  \ \forall p \in \C\langle \mathbf{p} \rangle_{k}, \, L(1) = 1.
\end{aligned}
\end{equation}
Here, $L : \C\langle  \mathbf{p} \rangle_{2k} \to \C$ is a linear functional. The condition $L(p^*p) \geq 0$ for all degree-$k$ polynomials is equivalent to the \textit{moment matrix} $M(L)$ being PSD. The moment matrix $M(L)$ is indexed by the elements of 
\begin{align}
    \label{eqn_matsInP}
    \langle \mathbf{p}^n \rangle^{\mathcal{I}}_k := \setFunct{ u }{u = \prod_{i = 1}^n u_i, \, u_i \in \{1,x_i,y_i,z_i \} \, \, \forall i \in [n], \, \deg(u) \leq k},
\end{align}
and  $M(L)_{u,v} = L(u^*v) = L(uv)$. Observe that $\langle \mathbf{p}^n \rangle^{\mathcal{I}}_k \equiv \langle \mathbf{p}^n \rangle_k \mod \mathcal{I}$. Program \eqref{eqn_firstSDPattempt} is the $k$th level of the NPA hierarchy~\cite{navascues2008convergent,pironio2010convergent} for the QMC problem on the graph $G$.

To see that \eqref{eqn_firstSDPattempt} defines an upper bound on $\lambdaMax{H_G}$, consider
\begin{align}
    \label{eqn_finiteRealization}
    \widetilde{L}(p) := \Tr{ \left( p(\mathbf P) +  p^*(\mathbf P) \right) 
  \rho } /2,
\end{align}
where $\rho = \ket{\psi}\bra{\psi}$, for $\psi$ a unit length eigenvector of $H_G$ corresponding to $\lambdaMax{H_G}$. It can be shown that $\widetilde{L}$ is feasible for \eqref{eqn_firstSDPattempt} and achieves an objective value of $\lambdaMax{H_G}$ \cite[App.~A.1]{Lee22}. When a functional $L$ can be written in the form \eqref{eqn_finiteRealization}, for some unit-trace matrix $\rho$, we say that $L$ corresponds to an evaluation at some finite-dimensional realization of the Pauli algebra.

Note that $\widetilde{L}(p)$ is real for all $p \in \C \langle \mathbf{p} \rangle_{2k}$, because $p(\mathbf P) +  p^*(\mathbf P)$ is a Hermitian matrix for all $p \in \C \langle \mathbf{p} \rangle$. Moreover, $\widetilde{L}(p) = 0$ if $p+p^* \in \mathcal{I}_{2k}$. Thus, if we restrict in \eqref{eqn_firstSDPattempt} $L \in \R \langle \mathbf{p} \rangle_{2k}^*$, and  $L(p) = 0$ whenever $p+p^* \in \mathcal{I}_{2k}$, $\widetilde{L}$ remains  feasible. Therefore, the following real SDP, introduced in \cite[Sect.~4]{GP19}, also provides an upper bound on $\lambdaMax{H_G}$:
\begin{equation}
    \label{eqn_sdpRelax}
    \tag{SDP$^k$}
    \begin{aligned}
        & \max_{L \in \R\langle \mathbf{p}\rangle^*_{2k}} && \sum_{\edge{i}{j} \in E(G)} w_{ij} \left(1- L(x_i x_j) - L(y_i y_j) - L(z_i z_j)\right) \\
        & \quad \, \, \mathrm{s.t.} &&L(p) = 0 \quad \forall p \in \R\langle \mathbf p \rangle_{2k} \text{ satisfying }  p + p^* \in \mathcal{I}_{2k} \\
        &   &&  M(L) \succeq 0, \, \, L(1) = 1.
    \end{aligned}
    \noeqref{eqn_sdpRelax}
\end{equation}
Observe that the polynomial $p + p^*$ has only real coefficients, and that the positivity condition $L(p^*p) \geq 0$ has been replaced by the (equivalent) PSD constraint $M(L) \succeq 0$. We denote the feasible set of \ref{eqn_sdpRelax} by 
\begin{align}
    \label{eqn_feasLass}
    \feasLass{k}{n} :=  \setFunct{L \in \R\langle \mathbf{p} \rangle^*_{2k}}{L(p) = 0  \text{ for all }  p \in \R\langle \mathbf p \rangle_{2k} \text{ satisfying }  p + p^* \in \mathcal{I}_{2k}, M(L) \succeq 0}. 
\end{align}
The order of the PSD variable $M(L)$ in \ref{eqn_sdpRelax} is given by $\left| \langle \mathbf{p}^n \rangle^{\mathcal{I}}_k \right|= \sum_{i=0}^k 3^i \binom{n}{i} \in \mathcal{O}(n^k)$.
Hence, larger values of $k$ induce larger SDPs that are harder to solve, but offer tighter bounds. When $k=n$, we have the following exactness result. For a similar statement about an SDP hierarchy based on the SWAP operators, see \cite[Thm.~4.10]{CEHKW23}.
\begin{lemma}  \label{lemma_exactnessResult}
    Let $G$ be a graph on $n$ vertices. The optimal value of $\textup{SDP}^n$ equals $\lambdaMax{H_G}$.
\end{lemma}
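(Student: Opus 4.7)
\begin{proofsketch}
The inequality $\lambdaMax{H_G} \leq \textup{SDP}^n$ is already noted in the text: the functional $\widetilde L$ from \eqref{eqn_finiteRealization} is feasible with objective value $\lambdaMax{H_G}$. The plan is therefore to prove the reverse inequality, namely that every $L \in \feasLass{n}{n}$ satisfies $L(H_G(\mathbf p)) \leq \lambdaMax{H_G}$, by extracting from $L$ a density matrix $\rho$ such that $L(H_G(\mathbf p)) = \Tr{H_G \rho}$.

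The crucial feature of level $k = n$ is that the basis $\langle \mathbf p^n \rangle_n^{\mathcal{I}}$ from \eqref{eqn_matsInP} has cardinality $4^n$, matching the dimension of the full Pauli matrix algebra $\C^{2^n\times 2^n}$. I would first establish that the evaluation $\mathbf p \mapsto \mathbf P$ descends to a $*$-algebra isomorphism $\C\langle \mathbf p \rangle/\mathcal{I} \cong \C^{2^n\times 2^n}$, and that every element of the quotient admits a unique representative in $\langle \mathbf p^n\rangle_n^{\mathcal{I}}$, obtained by repeatedly applying the degree-non-increasing Pauli relations to put a word into the normal form $\prod_{i=1}^n u_i$ with $u_i \in \{1, x_i, y_i, z_i\}$. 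As a consequence, any identification between two representatives of degree at most $n$ can be realized inside the truncated ideal $\mathcal{I}_n \subseteq \mathcal{I}_{2n}$.

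Using this identification, I would lift $L$ to a complex-linear functional $\widehat L$ on $\C^{2^n \times 2^n}$ by setting $\widehat L(p(\mathbf P)) := L\bigl((p+p^*)/2\bigr) + \imagUnit L\bigl((p-p^*)/(2\imagUnit)\bigr)$ for any representative $p \in \C\langle \mathbf p\rangle_n$. Well-definedness follows from the ideal-vanishing condition in \eqref{eqn_feasLass} applied to the Hermitian and anti-Hermitian parts of any difference of representatives. Moreover, $\widehat L$ inherits normalization $\widehat L(I) = L(1) = 1$ and positivity $\widehat L(A^*A) \geq 0$ from $M(L) \succeq 0$, since for $A = p(\mathbf P)$ with $p$ of degree at most $n$, $L(p^*p)$ is a diagonal entry of $M(L)$. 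Thus $\widehat L$ is a state on the full matrix algebra $\C^{2^n\times 2^n}$, hence of the form $A \mapsto \Tr{A\rho}$ for some density matrix $\rho$, giving $L(H_G(\mathbf p)) = \widehat L(H_G(\mathbf P)) = \Tr{H_G \rho} \leq \lambdaMax{H_G}$.

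The main obstacle I anticipate is the truncation bookkeeping: carefully checking that the PSD constraint on $M(L)$ (indexed only by words of degree at most $n$) together with vanishing on $\mathcal{I}_{2n}$ really suffices to encode a genuine state on the whole algebra. This works precisely because the Pauli relations are degree-non-increasing, so any product of two words of degree at most $n$ lies in $\R\langle \mathbf p\rangle_{2n}$, and its reduction to normal form stays within the truncated ideal $\mathcal{I}_{2n}$, so the constraints in \eqref{eqn_feasLass} interact cleanly with the algebra-quotient identification above.
\end{proofsketch}
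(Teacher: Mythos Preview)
Your argument is correct and takes a genuinely different route from the paper. The paper observes that $\langle \mathbf p^n\rangle^{\mathcal I}_n = \langle \mathbf p^n\rangle^{\mathcal I}_k$ for all $k\geq n$, so that an optimal $L$ for $\mathrm{SDP}^n$ is also optimal for $\mathrm{SDP}^{n+2}$ and automatically satisfies the rank flatness condition $\rankOp{M_{n+2}(L)}=\rankOp{M_n(L)}$; it then invokes the NPA flat-extension theorem \cite[Thm.~2]{pironio2010convergent} together with the classification of finite-dimensional $*$-representations of the Pauli relations \cite{chao2017overlapping} to conclude that $L$ is a convex combination of vector-state evaluations on $H_G$. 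Your approach instead bypasses both black boxes: you use directly that at level $n$ the monomial basis has size $4^n$, so the evaluation map identifies $\C\langle\mathbf p\rangle/\mathcal I$ with the full matrix algebra $\C^{2^n\times 2^n}$, and you then read off a density matrix $\rho$ from the complex-linear extension of $L$. This is more elementary and self-contained; the paper's route, on the other hand, makes explicit the connection to the general NPA stopping criterion and would adapt verbatim to related Hamiltonian problems. One small inaccuracy: for general $p$, $L(p^*p)$ is not a diagonal entry of $M(L)$ but rather $c^{\mathrm H} M(L)\,c$ for $c$ the coefficient vector of $p$ (after complex-linear extension of $L$), which is still nonnegative; your positivity conclusion is unaffected.
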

\begin{proof}
    The monomials in $\langle \mathbf{p}^n \rangle^\mathcal{I}_n$ form a basis of the quotient ring $\mathbb{R}\langle \mathbf{p} \rangle / \mathcal{I}$. In particular, for $k \geq n$ we have $\langle \mathbf{p}^n \rangle^\mathcal{I}_n = \langle \mathbf{p}^n \rangle^\mathcal{I}_k$ and therefore the optimal value of $\mathrm{SDP}^n$ equals that of \ref{eqn_sdpRelax}. It remains to argue that the optimal value of $\mathrm{SDP}^n$ equals $\lambdaMax{H_G}$. To that end, consider an optimal solution $L$ to $\mathrm{SDP}^{n+2}$. The associated moment matrix satisfies $\mathrm{rank}(M_{n+2}(L)) = \mathrm{rank}(M_{n}(L))$ since $\langle \mathbf{p}^n \rangle^\mathcal{I}_n = \langle \mathbf{p}^n \rangle^\mathcal{I}_{n+2}$. Therefore, by \cite[Thm.~2]{pironio2010convergent}, $L$ corresponds to an evaluation at some finite-dimensional realization of the algebra described by the relations in $\mathcal I$. 
    The finite-dimensional irreducible representations of this algebra are $*$-isomorphic to the algebra generated by the Pauli strings (cf.~\cite[Thm.~2.3]{chao2017overlapping}). Hence, for some $K \leq \mathrm{rank}(M_n(L))$, we can write $L(p) = \sum_{i=1}^{K} w_i \bra{\psi_i} \left( p(\mathbf P) +  p^*(\mathbf P) \right) \ket{\psi_i}/2$ for some states $\{\ket{\psi_i}\}_{i \in [K]}$ and weights $w_1, \ldots, 
    w_K \geq 0$ with $\sum_{i=1}^K w_i=1$. This shows that $L(H_G(\mathbf p)) \leq \lambda_{\max}(H_G)$, which concludes the proof. 
\end{proof}

\section{Improved analysis of a QMC approximation algorithm}
\label{section_sharpenedAnalysis}

In this section, we consider the polynomial-time QMC approximation algorithm from \cite{lee2024improved}, and provide a sharper analysis of this algorithm.

 To prove the 0.595 approximation ratio of \cite[Alg.~5]{lee2024improved}, Lee and Parekh showed how to convert a solution of \ref{eqn_sdpRelax} for relaxation level $k=2$, to a vector $h^+ = (h^+_e)_{e \in E}$ 
 (to be defined later) that has the property that $(\frac45 h^+_e)_{e \in E}$ is feasible for \ref{eq:matching:lp}. They did so by showing that $h^+$ satisfies the odd set inequalities \eqref{eq:matching:set} for $|S| \leq 3$. Recently, the authors of \cite{jorquera2024algorithms} showed that $h^+$ satisfies the odd set inequalities for $|S| \leq 5$. 
 Here, we show that $h^+$, for $k \geq 2$, satisfies the odd set inequalities \eqref{eq:matching:set} for all $S \subseteq V$ such that $|S| \leq 9$. 
 This implies that $(\frac{10}{11} h^+_e)_{e \in E}$ is feasible for \ref{eq:matching:lp} (see \Cref{lemma_generalShifting}). The improved prefactor $10/11$ (instead of $4/5$) directly translates to an improved approximation ratio of 0.602 for the algorithm. In case $k \geq 13$, we can even show that $(\frac{14}{15} h^+_e)_{e \in E}$ is feasible for \ref{eq:matching:lp}, which improves the approximation ratio further to \ratioGeneral{}. Our proof of this relies on the exactness of \ref{eqn_sdpRelax}, with $k \geq 13$, for graphs that have at most 13 vertices (see \Cref{lemma_exactnessResult}).  Let us first present \cite[Alg.~5]{lee2024improved}.
\begin{algorithm}[QMC approximation algorithm~\cite{lee2024improved}]
\label{alg:approxLee}
\hphantom{.}

Input: weighted graph  $G = (V,E,w)$ with $n$ vertices, $k \in \mathbb{N}$.
\begin{enumerate}
    \item \label{step_product_state} Find a product state as follows:
    \begin{enumerate}
        \item Solve \ref{eqn_sdpRelax} to obtain an optimal moment matrix $M(L)$. From this $M(L)$ obtain  unit length vectors $\mathbf{v}(u) \in \mathbb{R}^{\left| \langle \mathbf{p}^n \rangle^{\mathcal{I}}_k \right|}$, $u \in  \langle\mathbf{p}^n \rangle^{\mathcal{I}}_k$, satisfying $M(L)_{u,u'}  = \mathbf{v}(u)^{\top} \mathbf{v}(u')$ for any $u, u' \in \langle \mathbf{p}^n \rangle^{\mathcal{I}}_k$. Set
    \begin{align}
    \label{eqn_gramVectors}
    \mathbf{v}_i : = (\mathbf{v}(x_i)^\top, \mathbf{v}(y_i)^\top, \mathbf{v}(z_i)^\top)^\top /\sqrt 3, \quad i \in V.
\end{align}
        \item Sample a random matrix $R \in \mathbb{R}^{3\times 3 \left| \langle \mathbf{p}^n \rangle^{\mathcal{I}}_k \right|}$, 
        whose elements are independently drawn from the standard normal distribution.
        \item \label{item_uComputation} Compute $u_i:= R \mathbf{v}_i/ \| R \mathbf{v}_i \| \in \mathbb{R}^3$ for all $i \in V$.
               
        \item Set $\rho_1:= \prod_{i\in V}\frac{1}{2}( \eye + u_{i,1} X_i +u_{i,2} Y_i +u_{i,3} Z_i ) $. 

    \end{enumerate}
    \item \label{step_matching} Find a matching state as follows:
    \begin{enumerate}
        \item Find a maximum weight matching $\mathcal{M}$ of $G$, see \Cref{def_graphMatching}.  
        \item \label{step_rhoTwoDef} Set  $\rho_2 := 2^{-n} \prod_{\{i,j\}:  \mathcal{M}_{\{i,j\}} =1} (\eye-X_i X_j - Y_i Y_j - Z_i Z_j)$.
        \end{enumerate}
    \item Output $\rho_{i^*}$, for ${i^*} := \arg \max_{i \in \{1,2\}} \Tr{ \rho_i H_G}$.
\end{enumerate}
\end{algorithm} 

Note that \Cref{alg:approxLee} is a polynomial-time algorithm for any fixed $k \in \mathbb{N}$. In particular, solving \ref{eqn_sdpRelax} up to fixed precision, and computing a maximum weight matching \cite{Ed65} are both possible in time polynomial in $n = |V(G)|$.

For $k = 2$, it is shown \cite[Thm.~11]{lee2024improved} that a lower bound on the approximation ratio of \Cref{alg:approxLee} is given by the function value $\alpha(4/5)$ $(\geq 0.595)$, for $\alpha$ given by
\begin{align}
    \label{eqn_approxRatioCompute}
     \alpha(\mu) &:=  \max_{p \in [0,1]} \min_{x \in (-1, 1]}  \frac{ p q(x) + (1-p)\left(   1+3\mu x^+ \right)}{2+2x}, \, \mu \in [0,1],
\end{align}
where
\begin{align}
    \label{eqn_hyperGeoQ}
    q(x) := 1 +  \left( \frac{8+16x}{9 \pi}\right) \, \,   \hyperGeom\!\left(1/2,1/2,5/2,\left( \frac{1+2x}{3}\right)^2 \right),
\end{align}
for $\hyperGeom$ the hypergeometric function
\begin{align*}
\hyperGeom(a,b,c,z) := \sum\limits_{n=0}^\infty \frac{(a)_n (b)_n}{(c)_n}\frac{z^n}{n!}, \quad (t)_n := \frac{\Gamma(t+n)}{\Gamma(t)}=t(t+1)\cdots (t+n-1).
\end{align*}
Note that $q(x)$ is well-defined for all $x \in [-1,1]$,  in particular for $x=1$ due to Gauss's hypergeometric theorem, see e.g., \cite[Sect.~1.3]{bailey1935generalized}. 

\begin{remark}
    \label{remark_divByZero}
     The minimization over $x \in (-1,1]$ in  \eqref{eqn_approxRatioCompute} is well-defined, since the minimum does not occur near $x = -1$. Indeed, one can verify that $q(-1) \approx 0.71$, so that 
        $\lim_{x \downarrow -1} \frac{ p q(x) +(1-p) \left(  1+3\mu x^+ \right)}{2+2x} = + \infty$,
    for any $\mu, p \in [0,1]$.
\end{remark}
The variable $x$ in \eqref{eqn_approxRatioCompute} corresponds to $h_{ij}$, which is defined as follows:
\begin{definition}
    \label{def_sdpValues}
    Let $L \in \feasLass{k}{n}$, see \eqref{eqn_feasLass}, with $n \geq 2$ and $k \geq 1$. For $i,j \in [n]$, $i \neq j$, define the values
    \begin{align*}
        g_{ij} :=  1 - L(x_i x_j) - L(y_i y_j) - L(z_i z_j), \, \, \,  h_{ij} := g_{ij}/2 - 1, \, \, \, h_{ij}^+ := \max\{ h_{ij}, 0 \}.
    \end{align*}
    If $k \geq 2$, then $g_{ij} \in [0,4]$, $h_{ij} \in [-1,1]$ and $h^+_{ij} \in [0,1]$. Note that $g_{ij}$, $h_{ij}$ and $h^+_{ij}$ are functions of $L \in \feasLass{k}{n}$. Throughout the rest of this paper, we implicitly assume this dependence.
\end{definition}
\Cref{def_sdpValues} is similar to \cite[Def.~3]{lee2024improved} with the following difference:  $h:=(h_{ij})$ here differs by a factor of 2 compared to the definition of $h$ in \cite{lee2024improved}. We note that the lower and upper bounds on the variables  are tight  for all $k \geq 2$. The SDP relaxation \ref{eqn_sdpRelax} can be expressed in terms of the above values:
\begin{align}
    \label{eqn_sdpUBtoHMax}
    \max_{L \in \feasLass{k}{n}} \sum_{e \in E(G)}  w_e g_e = \max_{L \in \feasLass{k}{n}} \sum_{e \in E(G)}  w_e (2+2h_e).
\end{align}
The approximation ratio of \Cref{alg:approxLee} is related to the values $h^+$, see \cite[Thm.~11]{lee2024improved}.
\begin{theorem}[\cite{lee2024improved}]
    \label{lemma_approxRatioLee}
    Let $k \in \mathbb{N}$ and let $G$ be the input graph on $n$ vertices to \Cref{alg:approxLee}. If $(\mu h^+_e)_{e \in E(G)}$, for some $\mu \in [0,1]$, is contained in the matching polytope for all $L \in \feasLass{k}{n}$, then the approximation ratio of \Cref{alg:approxLee} with input $k$ is at least $\alpha(\mu)$, where the function $\alpha$ is defined in~\eqref{eqn_approxRatioCompute}.
\end{theorem}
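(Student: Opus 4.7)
My plan is to lower bound the expected output $\mathbb{E}[\Tr{\rho_{i^*}H_G}]$ by a convex combination $p\,\mathbb{E}[\Tr{\rho_1 H_G}]+(1-p)\,\Tr{\rho_2 H_G}$ for a free $p\in[0,1]$, divide by the SDP upper bound $\sum_{e}w_e(2+2h_e)\geq \lambdaMax{H_G}$ from \eqref{eqn_sdpUBtoHMax}, and reduce the resulting ratio to a per-edge minimum in the scalar $h_e\in(-1,1]$. The key inequality $\max\{a,b\}\geq pa+(1-p)b$ (applied after taking expectation over the random matrix $R$) permits a single unified bound on the output of Step~3 of \Cref{alg:approxLee}.

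For the product state, I would use the standard Pauli trace identity $\Tr{\rho_1 H_{ij}}=1-u_i\cdot u_j$, and compute the Gram inner product
\begin{align*}
\mathbf{v}_i^\top \mathbf{v}_j=\tfrac{1}{3}\bigl(L(x_ix_j)+L(y_iy_j)+L(z_iz_j)\bigr)=-\tfrac{1+2h_{ij}}{3}\in[-1,1],
\end{align*}
using the vectors from \eqref{eqn_gramVectors} and that $k\geq 2$. The Gaussian projection in Step~\ref{item_uComputation} maps these vectors to unit vectors $u_i\in S^2$ whose expected inner product depends only on $\mathbf v_i^\top\mathbf v_j$; the associated expectation, computed in \cite{lee2024improved}, equals $q(h_{ij})$ with $q$ as in \eqref{eqn_hyperGeoQ}, yielding $\mathbb E[\Tr{\rho_1 H_G}]=\sum_{e\in E(G)}w_e\,q(h_e)$.

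For the matching state, I would observe that $\rho_2$ restricted to a matched pair $\{i,j\}\in\mathcal M$ is the singlet (giving $\Tr{\rho_2 H_{ij}}=4$), while for $\{i,j\}\notin\mathcal M$ at least one of the partial traces is $\eye_2/2$, making every Pauli correlator vanish and forcing $\Tr{\rho_2 H_{ij}}=1$. This leads to
\begin{align*}
\Tr{\rho_2 H_G}=\sum_{e\in E(G)}w_e+3\sum_{e\in E(G)}w_e\,\mathcal M_e\ \geq\ \sum_{e\in E(G)}w_e\bigl(1+3\mu h_e^+\bigr),
\end{align*}
where the inequality invokes Edmonds' theorem (the maximum matching weight equals the optimum of \ref{eq:matching:lp}) together with the hypothesis that $(\mu h_e^+)_{e\in E(G)}$ is feasible for \ref{eq:matching:lp}.

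Combining the two bounds via the $\max$ inequality, dividing by $\sum_e w_e(2+2h_e)$, and bounding the resulting sum of nonnegative edge contributions by its minimum gives
\begin{align*}
\frac{\mathbb E[\Tr{\rho_{i^*}H_G}]}{\lambdaMax{H_G}}\ \geq\ \min_{x\in(-1,1]}\frac{p\,q(x)+(1-p)(1+3\mu x^+)}{2+2x},
\end{align*}
for every $p\in[0,1]$, with the restriction to $(-1,1]$ justified by \Cref{remark_divByZero}. Maximizing over $p$ yields $\alpha(\mu)$. The main obstacle I anticipate is the hypergeometric identity $\mathbb E[1-u_i\cdot u_j]=q(h_{ij})$ for the Gaussian-projected unit vectors on $S^2$; all the analytic work sits in this expectation, which I would invoke directly from \cite{lee2024improved} rather than rederive. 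The rest of the argument is an edge-wise bookkeeping that crucially uses the positivity of the weights $w_e$ and the assumed containment of $(\mu h_e^+)_e$ in the matching polytope.
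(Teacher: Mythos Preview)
The paper does not prove this theorem; it is attributed to \cite[Thm.~11]{lee2024improved} and stated without proof. Your proposal correctly reconstructs the standard argument, and in fact mirrors the structure the paper itself uses later in the proof of \Cref{thm_ratioTriangleFree}: the product-state expectation $\mathbb{E}[1-u_i^\top u_j]=q(h_{ij})$ (cited there as \cite[Lem.~2.1]{BrietGrothendieck}), the matching-state bound $\Tr{\rho_2 H_G}\geq\sum_e w_e(1+3\mu h_e^+)$ (cited as \cite[Eq.~9]{lee2024improved}), the inequality $\max\{a,b\}\geq pa+(1-p)b$, and the reduction to a per-edge minimum over $x\in(-1,1]$ after dividing by the SDP value.
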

\Cref{table_alphaMuVals} presents lower bounds on $\alpha(\mu)$, for different values of $\mu$. It can be observed that $\alpha(\mu)$ is increasing in $\mu$, and we are therefore interested in proving that $(\mu h^+_e)_{e \in E}$ is contained in the matching polytope for $\mu$ as close to $1$ as possible. To do so, we require the following well-known property of the matching polytope, see e.g., \cite[Ex.~6]{braun2015matching} or \cite[App.~A]{sinha2018lower}.

\begin{table}
    \centering
    \begin{tabular}{l|lllllll}
    \hline
        $\mu$ & $4/5$ & $6/7$ & $8/9$ & $10/11$ & $12/13$ & $14/15$ & $1$ \\ \hline 
        $\alpha(\mu) \geq$ & $0.595$ & $0.599$ & $0.601$ & $0.602$ & $0.602$ & $\ratioGeneral{}$ & $0.606$ \\
        $p^*$ & $0.672$ & $0.697$ & $0.709$ & $0.716$ & $0.721$ & $0.724$ & $0.744$ \\ 
        $x^*$ & $0.152$ & $0.153$ & $0.146$ & $0.139$ & $0.142$ & $0.131$ & $0.115$ \\ \hline
    \end{tabular}
    \caption{Lower bounds on $\alpha(\mu)$ 
obtained by rounding down $\alpha(\mu)$ to 3 digits. The rows $p^*$ and $x^*$ present a corresponding approximate maximizer/minimizer for the optimization problem in the definition of $\alpha(\mu)$.}
\label{table_alphaMuVals}
\end{table}

\begin{lemma}
\label{lemma_generalShifting}
    Let $x\in \mathbb{R}^{|E|}$ be a vector that satisfies the constraints \eqref{eq:matching:vertex}. If 
     $x$ also satisfies the odd set inequalities \eqref{eq:matching:set} for all $S \subseteq V$ with $|S| \leq s$, $s$ odd, then $\frac{s+1}{s+2}x$ is contained in the matching polytope.
\end{lemma}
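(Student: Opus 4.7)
The plan is to show directly that $y := \frac{s+1}{s+2} x$ satisfies all three families of constraints defining the matching polytope: nonnegativity, the vertex constraints \eqref{eq:matching:vertex}, and the odd set inequalities \eqref{eq:matching:set} for every odd $S \subseteq V$. Nonnegativity and the vertex constraints are immediate, since scaling by a factor in $[0,1]$ preserves them (in fact $\sum_{j \in N(i)} y_{ij} \leq \frac{s+1}{s+2} < 1$).

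The only real work is verifying the odd set inequalities, which I would split into two cases based on the size of $S$. For $|S| \leq s$, the hypothesis already gives $\sum_{e \in E_S} x_e \leq (|S|-1)/2$, and multiplying by $\frac{s+1}{s+2} \leq 1$ preserves this, so $y$ satisfies the required bound on such $S$. The interesting case is $|S| > s$ with $|S|$ odd, where we have no direct information about $\sum_{e \in E_S} x_e$.

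Here I would fall back on the coarse bound coming from the vertex constraints alone: summing $\sum_{j \in N(i)} x_{ij} \leq 1$ over $i \in S$ double-counts every edge in $E_S$, yielding $\sum_{e \in E_S} x_e \leq |S|/2$. Consequently $\sum_{e \in E_S} y_e \leq \frac{s+1}{s+2} \cdot \frac{|S|}{2}$, and the desired inequality $\frac{s+1}{s+2} \cdot \frac{|S|}{2} \leq \frac{|S|-1}{2}$ rearranges to $|S| \geq s+2$. The parity hypothesis closes the argument: since $s$ is odd, $s+1$ is even, so the smallest odd integer strictly larger than $s$ is $s+2$; thus every odd $|S| > s$ indeed satisfies $|S| \geq s+2$.

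There is no real obstacle; the only subtlety is noticing that the odd parity of both $s$ and $|S|$ is exactly what makes the bound $\frac{s+1}{s+2}$ tight in the splitting between the two cases, and that the vertex-constraint bound $|S|/2$ is enough to handle all the large $S$ uniformly without using any odd set information for $|S| > s$.
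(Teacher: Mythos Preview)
Your argument is correct. The paper does not actually provide its own proof of this lemma; it states the result as a well-known property of the matching polytope and cites \cite[Ex.~6]{braun2015matching} and \cite[App.~A]{sinha2018lower}. Your proposal is precisely the standard elementary argument behind those references: use the given odd set inequalities for small $S$, and for large odd $S$ with $|S| \geq s+2$ use the double-counting bound $\sum_{e \in E_S} x_e \leq |S|/2$ coming from the vertex constraints, together with the arithmetic $\frac{s+1}{s+2}\cdot\frac{|S|}{2} \leq \frac{|S|-1}{2} \iff |S| \geq s+2$.
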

We will use \Cref{lemma_generalShifting} for $x = (h^+_e)_{e \in E}$, see \Cref{def_sdpValues}. Therefore, we need to verify whether the odd set inequalities $\sum_{e \in E_S} h_e^+ \leq \lfloor s /2 \rfloor$, with $s := |S|$ and $E_S$ defined below \ref{eq:matching:lp}, hold. To do so, we compute an upper bound on $\sum_{e \in E_S} h_e^+$ for all feasible solutions to \ref{eqn_sdpRelax}. That is, we compute
\begin{align}
    \label{eqn_nonLinearObj}
    \max_{L \in \feasLass{k}{s}} \sum_{ 1 \leq i <j \leq s} h^+_{ij},
\end{align}
for odd $s > 1$ and $\feasLass{k}{s}$ the feasible set of 
\ref{eqn_sdpRelax}, see \eqref{eqn_feasLass}. Observe that \eqref{eqn_nonLinearObj} equals 
\begin{align*}
    \max_{z \in \{0,1\}^{s(s-1)/2}, \, L \in \feasLass{k}{s}} \sum_{ 1 \leq i <j \leq s} z_{ij} h_{ij}.
\end{align*}
Any fixed $z \in \{0,1\}^{s(s-1)/2}$ defines a graph $G \in \mathcal{G}_s$ with edge set $\setFunct{ \{i,j\}}{z_{ij} =1}$, so that \\ $\max_{L \in \feasLass{k}{s}} \sum_{ 1 \leq i <j \leq s} z_{ij} h_{ij} = c(G,k)$, where 
\begin{equation}
\label{eqn_cFunctionDef}
     c(G,k) := \max_{L \in \feasLass{k}{s}} \sum_{e \in E(G) } h_e. 
 \end{equation}
 Note that $c(G,k)$ can be computed in polynomial time, up to finite precision, by solving the corresponding SDP. 
   The connection between $c(G,k)$ and $\sum_{e \in E} h^+_e$ is clarified by the following result. 
 \begin{lemma}
    \label{lemma_hUBinC}
     Let $s \geq 2$ and $k \geq 1$. For any $G = (V,E) \in \mathcal{G}_s$ and $h^+$ as in \Cref{def_sdpValues}, we have
     \begin{align*}
        \max_{L \in \feasLass{k}{s}} \sum_{e \in E(G)} h^+_{e} \leq  \max_{L \in \feasLass{k}{s}} \sum_{ 1 \leq i <j \leq s} h^+_{ij} = \max_{G \in \mathcal{G}_s} c(G,k).
 \end{align*}
 \end{lemma}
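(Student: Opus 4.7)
My plan is to address the inequality and the equality separately; both follow directly from the definitions together with the nonnegativity of $h^+$.

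For the inequality, I argue pointwise in $L \in \feasLass{k}{s}$: since $h^+_e \geq 0$ for every pair $\{i,j\}$ (see \Cref{def_sdpValues}) and $E(G) \subseteq \binom{[s]}{2}$, restricting the sum to edges of $G$ only drops nonnegative terms, so $\sum_{e \in E(G)} h^+_e \leq \sum_{1 \leq i<j \leq s} h^+_{ij}$. The inequality then survives taking the maximum over $L \in \feasLass{k}{s}$ on both sides.

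For the equality, I use the elementary identity $\max\{t,0\} = \max_{z \in \{0,1\}} zt$ applied term-wise to rewrite
$$\sum_{1 \leq i<j \leq s} h^+_{ij} \; = \; \max_{z \in \{0,1\}^{s(s-1)/2}} \sum_{1 \leq i<j \leq s} z_{ij}\, h_{ij}.$$
Each binary vector $z$ corresponds bijectively to a graph on vertex set $[s]$ whose edge set is $\{\{i,j\} : z_{ij}=1\}$, and conversely every graph in $\mathcal{G}_s$ arises this way. Hence for every fixed $L$ the sum equals $\max_{G \in \mathcal{G}_s} \sum_{e \in E(G)} h_e$. Interchanging the (finite) maximum over $G$ with the maximum over $L$ and invoking the definition \eqref{eqn_cFunctionDef} of $c$ yields the equality $\max_{L \in \feasLass{k}{s}} \sum_{i<j} h^+_{ij} = \max_{G \in \mathcal{G}_s} c(G,k)$.

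I foresee no substantive obstacle; the proof is essentially unpacking definitions and exchanging two maxima over independent domains. The only mild point worth flagging is that if $\mathcal{G}_s$ is interpreted as isomorphism classes of graphs rather than labeled graphs on $[s]$, one additionally observes that $c(\cdot,k)$ is invariant under vertex relabeling, which is immediate from the permutation symmetry of the ideal $\mathcal{I}$ and hence of $\feasLass{k}{s}$.
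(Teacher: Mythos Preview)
Your proof is correct and follows essentially the same approach as the paper. The paper also rewrites $\sum_{i<j} h^+_{ij}$ via binary indicators $z_{ij}$ and identifies each $z$ with a graph in $\mathcal{G}_s$; the only cosmetic difference is that the paper's proof establishes the equality by exhibiting explicit optimizers for each direction, whereas you obtain it in one stroke by interchanging the two (independent, finite) maxima.
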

 \begin{proof}
     Since $h^+ \geq 0$, 
     we have $\max_{L \in \feasLass{k}{s}} \sum_{e \in E} h^+_{e} \leq \max_{L \in \feasLass{k}{s}} \sum_{1 \leq i < j \leq s} h^+_{ij}$.
     Let $\widetilde{L} \in \feasLass{k}{s}$, see \eqref{eqn_feasLass}, be the linear functional that maximizes this upper bound. Let $\widetilde{h}$ be the $h$ values corresponding to $\widetilde{L}$, as in \Cref{def_sdpValues}. Consider the graph $\widetilde{G} = (\widetilde{V},\widetilde{E})$, with $\widetilde{V} = [s]$ and $\widetilde{E} := \{ \{i,j\} :  \widetilde{h}_{ij} \geq 0 \}$. Then 
     \begin{align}
        \label{eqn_inequalityResultLemma}
         \max_{L \in \feasLass{k}{s}} \sum_{\edge{i}{j} \in E} h^+_{ij} \leq \max_{L \in \feasLass{k}{s}} \sum_{1 \leq i < j \leq s} h^+_{ij} = c\mleft( \widetilde{G},k \mright) \leq \max_{G \in \mathcal{G}_s } c(G,k).
     \end{align}
     It remains to show that the second inequality of \eqref{eqn_inequalityResultLemma} is an equality. To do so, let $G^\prime \in \arg \max_{G \in \mathcal{G}_s} c(G,k)$ with edge set $E^\prime$, and let $h^\prime$  be the $h$ values corresponding to the SDP defining $c(G^\prime,k)$. By optimality of $G^\prime$, $h^\prime_{ij} \geq 0$ for all $\edge{i}{j} \in E^\prime$, and $h^\prime_{ij} \leq 0$ for those $\{i,j\} \not\in E^\prime$. Hence,
     \begin{align*}
         \max_{G \in \mathcal{G}_s} c(G,k) = c(G^\prime,k) = \sum_{e \in E^\prime} h_e^\prime = \sum_{1 \leq i < j \leq s} \max\{ h_{ij}^\prime,0\} \leq \max_{L \in \feasLass{k}{s}} \sum_{1 \leq i < j \leq s} h^+_{ij}. \\[-45pt]
     \end{align*}
 \end{proof}

By combining \Cref{lemma_generalShifting,lemma_hUBinC}, we obtain the following corollary.
\begin{corollary}
    \label{corr_simpleCons}
    Let $k \in \mathbb{N}$ and let $s$ be the largest odd integer for which $\max_{G \in \mathcal{G}_s} c(G,k) \leq \lfloor s / 2 \rfloor$. Then, for any $G \in \mathcal{G}_n$ and $L \in \feasLass{k}{n}$, the vector $\left( \frac{s+1}{s+2}h^+_e \right)_{e \in E(G)}$ is contained in the matching polytope of $G$. 
\end{corollary}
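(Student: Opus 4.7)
The plan is to invoke \Cref{lemma_generalShifting} with $x = (h^+_e)_{e \in E(G)}$. This requires verifying two things for $h^+$: the vertex constraints \eqref{eq:matching:vertex} and the odd set inequalities \eqref{eq:matching:set} for every $S \subseteq V(G)$ of odd cardinality $|S| \leq s$.

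I would first handle the odd set inequalities. Fix odd $s' \leq s$ and $S \subseteq V(G)$ with $|S| = s'$. The key step is a restriction argument: from $L \in \feasLass{k}{n}$ one obtains a functional $L_S \in \feasLass{k}{s'}$ by restricting to monomials involving only the variables indexed by $S$. This restriction lies in $\feasLass{k}{s'}$ because (a) the generators of $\mathcal{I}$ are supported on at most two indices, so their analogues for indices in $S$ are still annihilated by $L_S$, and (b) the moment matrix of $L_S$ is a principal submatrix of $M(L)$, hence PSD. Since $h^+_e$ for $e \subseteq S$ depends only on $L(x_i x_j)$, $L(y_i y_j)$, $L(z_i z_j)$ for $i, j \in S$, it is unchanged under this restriction. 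Applying \Cref{lemma_hUBinC} to $L_S$ and the edge-induced subgraph $G[E_S] \in \mathcal{G}_{s'}$ then gives
\begin{equation*}
\sum_{e \in E_S} h^+_e \;\leq\; \max_{L' \in \feasLass{k}{s'}} \sum_{1 \leq i < j \leq s'} h^+_{ij}(L') \;=\; \max_{G' \in \mathcal{G}_{s'}} c(G', k) \;\leq\; \lfloor s'/2 \rfloor,
\end{equation*}
where the last inequality is the hypothesis of the corollary (read as holding for every odd $s' \leq s$, not merely for the maximal $s$).

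For the vertex constraints $\sum_{j \in N(i)} h^+_{ij} \leq 1$, I would apply the same restriction trick to the vertices $\{i\} \cup N(i)$ and combine it with the known SDP-level star bound available at level $k \geq 2$, which was already used in the original analysis of \cite{lee2024improved}. With both families of inequalities in hand, \Cref{lemma_generalShifting} applied to $x = (h^+_e)_{e \in E(G)}$ immediately yields that $\frac{s+1}{s+2}(h^+_e)_{e \in E(G)}$ lies in the matching polytope of $G$.

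The main obstacle I anticipate is the careful bookkeeping around the restriction $L \mapsto L_S$: one must show that it indeed lands in $\feasLass{k}{s'}$ and that the values $h^+_e$ for $e \subseteq S$ agree before and after restricting. A secondary point is that the hypothesis must be interpreted as $\max_{G' \in \mathcal{G}_{s'}} c(G',k) \leq \lfloor s'/2 \rfloor$ for \emph{every} odd $s' \leq s$; for the specific values $s \in \{9,13\}$ used later, this is verified computationally and so does not cause difficulties in practice, but it is the point at which the proof most delicately interacts with the statement of the corollary.
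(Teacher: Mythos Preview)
Your proposal is correct and follows essentially the same route as the paper: verify the odd set inequalities for $(h^+_e)_{e\in E(G)}$ via \Cref{lemma_hUBinC}, then invoke \Cref{lemma_generalShifting}. The paper's proof is terser---it writes the chain $\sum_{e\in E_S} h^+_e \leq \sum_{i,j\in S} h^+_{ij} \leq \max_{G'\in\mathcal{G}_{|S|}} c(G',k) \leq (|S|-1)/2$ and leaves both the restriction $L\mapsto L_S$ and the vertex constraints \eqref{eq:matching:vertex} (which follow from monogamy \eqref{eqn_monogamyStar}) implicit---whereas you spell these out, and you correctly flag that the hypothesis must be read as holding for every odd $s'\leq s$.
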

\begin{proof}
Let $G$ be a graph. We consider the odd set inequalities \eqref{eq:matching:set} for $x = (h_e^+)_{e \in E(G)}$. Take a set $S \subseteq V$ with $|S| \leq s$ and~$|S|$ odd. We have, for $E_S$ as defined below \ref{eq:matching:lp},
\begin{align*}
    \sum_{e \in \edgeSubset{S}} h^+_e \leq \sum_{i,j \in S} h^+_{ij} \leq \max_{G^\prime \in \mathcal{G}_{|S|}} c(G^\prime,k) \leq  \frac{|S|-1}{2},
\end{align*}
by \Cref{lemma_hUBinC} and the assumption of the corollary. Hence, $(h_e^+)_{e \in E(G)}$ satisfies the odd set inequalities $\forall S \subseteq V$ with $|S|$ odd and $|S| \leq s$. 
The claim then follows from \Cref{lemma_generalShifting}.
\end{proof}
    
 Considering \Cref{corr_simpleCons}, we aim to compute $\max_{G \in \mathcal{G}_s} c(G,k)$ for $s$ as large as possible. The case $s = 3$ is computed in \cite[Cor.~10]{lee2024improved}, where it was stated without the use of $c(G,k)$.
\begin{lemma}[\cite{lee2024improved}]
\label{lemma_s3Case}
For $k = 2$ and $s=3$, $\displaystyle\max_{G \in \mathcal{G}_s} c(G,k) = \displaystyle\max_{L \in \feasLass{k}{s}} \displaystyle\sum_{ 1 \leq i <j \leq s} h^+_{ij} = \left\lfloor \frac{s}{2} \right\rfloor$.
\end{lemma}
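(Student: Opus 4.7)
The plan is to use Lemma~\ref{lemma_hUBinC} to reduce the claim to computing $\max_{G \in \mathcal{G}_3} c(G, 2)$, and then to enumerate the four isomorphism classes of $\mathcal{G}_3$: the edgeless graph, a single edge (with an isolated vertex), the path $P_3$ and the triangle $K_3$. The edgeless graph trivially yields $c(G, 2) = 0$, and the single-edge graph yields $c(G, 2) = 1$: the upper bound $h_e \leq 1$ is immediate from Definition~\ref{def_sdpValues}, and the matching lower bound is witnessed by the functional $\widetilde{L}$ of \eqref{eqn_finiteRealization} built from the singlet state $\tfrac{1}{\sqrt{2}}(\ket{01} - \ket{10})$ on the two incident qubits (tensored with an arbitrary single-qubit state on vertex $3$). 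Together these two cases already give $\max_{G \in \mathcal{G}_3} c(G, 2) \geq 1$.

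It therefore remains to show $c(K_3, 2) \leq 1$ and $c(P_3, 2) \leq 1$. For $K_3$ my plan is to exhibit a degree-$2$ sum-of-squares certificate based on the noncommutative polynomial identity
\begin{equation*}
\sum_{1 \leq i < j \leq 3} H_{ij}(\mathbf{p}) \;\equiv\; \tfrac{15}{2} - 2\, S^{2} \pmod{\mathcal{I}_4},
\qquad
S^{2} := \tfrac{1}{4}\bigl[(x_1+x_2+x_3)^2 + (y_1+y_2+y_3)^2 + (z_1+z_2+z_3)^2\bigr],
\end{equation*}
which is a short expansion using $x_i^2 = y_i^2 = z_i^2 = 1$ and the cross-vertex commutativity $x_i x_j = x_j x_i$ (and similarly for $y, z$) for $i \neq j$. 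Setting $P := \tfrac{1}{3}(S^{2} - \tfrac{3}{4})$, the polynomial $P$ is Hermitian of degree $2$ and represents the projector onto the total-spin-$\tfrac{3}{2}$ subspace, so that $P^2 \equiv P \pmod{\mathcal{I}_4}$. The PSD constraint of \ref{eqn_sdpRelax} at level $2$ then yields $L(P) = L(P^{\ast} P) \geq 0$ for every $L \in \feasLass{2}{3}$, hence $L(S^{2}) \geq \tfrac{3}{4}$, and substituting into the identity above gives $\sum_{1 \leq i < j \leq 3} g_{ij} \leq 6$, i.e., $\sum_{1 \leq i < j \leq 3} h_{ij} \leq 0$. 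In particular $c(K_3, 2) \leq 0 \leq 1$.

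For $P_3$ I would observe that $\feasLass{2}{3}$ does not depend on the choice of graph, so the triangle bound $h_{12} + h_{13} + h_{23} \leq 0$ just established holds for every $L \in \feasLass{2}{3}$; combined with $h_{13} \geq -1$ from Definition~\ref{def_sdpValues}, this yields $h_{12} + h_{23} \leq 1$, hence $c(P_3, 2) \leq 1$. Putting the four cases together gives $\max_{G \in \mathcal{G}_3} c(G, 2) = 1 = \lfloor 3/2 \rfloor$, which together with Lemma~\ref{lemma_hUBinC} completes the proof. The main obstacle in this plan is the verification of the two polynomial identities $\sum_{i<j} H_{ij} \equiv \tfrac{15}{2} - 2\, S^{2}$ and $P^2 \equiv P$ modulo $\mathcal{I}_4$: both are elementary expansions using only the degree-$2$ Pauli relations that generate $\mathcal{I}$ (including $x_i y_i = \imagUnit z_i$ for the latter), but they are bookkeeping-heavy and essentially reprise the argument of~\cite[Cor.~10]{lee2024improved}.
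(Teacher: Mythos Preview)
The paper does not supply its own proof of this lemma; it is attributed to~\cite[Cor.~10]{lee2024improved}. Your argument is correct. It is worth noting that the two key ingredients you establish are already available elsewhere in the paper: the triangle bound $h_{12}+h_{13}+h_{23}\leq 0$ that you certify via the spin-$\tfrac32$ projector is precisely \Cref{lemma_ptTriLemma} (quoted from~\cite{PT22}), and for $P_3$ the monogamy-of-entanglement inequality~\eqref{eqn_monogamyStar} gives $h^+_{12}+h^+_{23}\leq 1$ directly by taking the centre vertex~$2$, bypassing both the triangle bound and the lower bound $h_{13}\geq -1$. Your route is more self-contained---the projector identity $P^2\equiv P\pmod{\mathcal{I}_4}$ is indeed a valid degree-$4$ certificate, since every Pauli rewrite used to reduce $(S^2)^2$ to normal form is degree-non-increasing---whereas simply citing \Cref{lemma_ptTriLemma} and~\eqref{eqn_monogamyStar} would let one skip that bookkeeping entirely.
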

The recent paper \cite{jorquera2024algorithms} proves that \Cref{lemma_s3Case} also holds for $s = 5$.
 Here, we compute $\max_{G \in \mathcal{G}_s} c(G,k)$ for $3\leq s \leq 13$ and $k \geq 2$ by exploiting properties of the function $c(G,k)$ that allow us to reduce the number of computations needed. To prove our results on $c(G,k)$, we require an important property of the $h^+$ and $h$ values derived from \textit{monogamy of entanglement}. Monogamy of entanglement is a physical property of quantum systems, restricting the maximum entanglement between different quantum states. The connection between monogamy of entanglement and \ref{eqn_sdpRelax}, $k \geq 2$, is due to \cite[Thm.~11]{PT21-lev2} and \cite[Lem.~4]{lee2024improved}, which state that 
\begin{align}
    \label{eqn_monogamyStar}
    \sum_{j \in S} h_{ij} \leq \sum_{j \in S} h^+_{ij}  \leq 1,
\end{align}
for any fixed $i$ and subset $S$ of the vertices. It is known that \eqref{eqn_monogamyStar} is not implied by \ref{eqn_sdpRelax} when $k = 1.5$, see \cite[Thm.~10]{PT21-lev2}. To compare \cite[Thm.~11]{PT21-lev2} with \eqref{eqn_monogamyStar}, observe that $x_e$ from \cite{PT21-lev2} satisfies $x_e = (1 + 2 h_e)/3$.
\begin{lemma}
    \label{lemma_vertexCover}
     For any $G \in \mathcal{G}_s$, $s \geq 2$, we have the following bounds on $c(G,k)$:
     \begin{enumerate}
        \item \label{item_subgraphDecomp} $c(G,k) \leq \sum_{i = 1}^p c\left(G[E^i],k \right)$, for $\left\{E^1,\dots, E^p\right\}$ a partition of $E$, see \Cref{def_edgeInduceSub}.
        \item \label{item_tauBound} $c(G,k) \leq \tau(G)$ for $\tau(G)$ as in \Cref{def_graphCover} and $k \geq 2$.
        \item  \label{item_vertexBound} $c(G,k) \leq s / 2$, for $k \geq 2$.
         \item \label{item_lasserImp} $c(G,k) \leq c(G,k-1)$ for $k \geq 2$.
        \item \label{item_vertex1B} $c(G,k) \leq 1+\max_{G \in \mathcal{G}_{s-2}} c(G,k)$ if $G$ contains a vertex of degree $1$, $s \geq 4$ and $k \geq 2$.
        \item \label{item_exactEVB} $c(G,k) \geq \lambda_{\mathrm{max}}(H_{G})/2 - |E(G)|$. This inequality is tight if $k \geq s$.
     \end{enumerate}
 \end{lemma}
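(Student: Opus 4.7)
The plan is to address the six items in sequence, relying on three main tools: the behavior of $\feasLass{k}{s}$ under restriction and extension of the vertex set, the monogamy-of-entanglement inequality \eqref{eqn_monogamyStar}, and the exactness result of \Cref{lemma_exactnessResult}. Items (1), (4), and (6) will follow from direct manipulations of the SDP formulation, while items (2), (3), and (5) crucially invoke monogamy.

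For (1), I would split $\sum_{e\in E} h_e = \sum_{i=1}^p \sum_{e\in E^i} h_e$; any $L\in \feasLass{k}{s}$, once restricted to the variables indexed by $V(G[E^i])$, remains feasible for the SDP defining $c(G[E^i],k)$ and preserves the relevant values of $h_e$, so pulling the maximum inside the sum gives the bound. Item (4) is immediate: any $L\in \feasLass{k}{s}$ truncated to polynomials of degree at most $2(k-1)$ lies in $\feasLass{k-1}{s}$ with the same degree-two moments, hence the same $h_e$'s. For (6), use $h_e = g_e/2 - 1$ to rewrite $c(G,k) = \tfrac{1}{2}\bigl(\max_L \sum_e g_e\bigr) - |E(G)|$; the inner maximum is the value of \ref{eqn_sdpRelax} at unit weights, which upper bounds $\lambda_{\max}(H_G)$ as an SDP relaxation, and equals it when $k\geq s$ by \Cref{lemma_exactnessResult}.

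For the monogamy-based items, the key input is $\sum_{j\in N(i)} h_{ij}^+ \le 1$ together with $h_e \le h_e^+$. For (2), pick a minimum vertex cover $S$ of $G$: since every edge has at least one endpoint in $S$,
\begin{align*}
\sum_{e\in E} h_e \le \sum_{e\in E} h_e^+ \le \sum_{i\in S}\sum_{j\in N(i)} h_{ij}^+ \le |S| = \tau(G).
\end{align*}
For (3), summing \eqref{eqn_monogamyStar} over all $s$ vertices gives $2\sum_{e\in E} h_e^+ \le s$, so $c(G,k) \le s/2$. For (5), let $v$ be a degree-one vertex with unique neighbor $u$ and partition $E = E^1\cup E^2$, where $E^1$ is the set of edges incident to $u$ (a star centered at $u$) and $E^2$ are the remaining edges; since $v$ has degree one, $V(G[E^2])\subseteq V\setminus\{u,v\}$, so $G[E^2]$ embeds into $\mathcal{G}_{s-2}$ after padding with isolated vertices, which leaves $c$ unchanged. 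Item (1) then gives $c(G,k)\le c(G[E^1],k) + c(G[E^2],k)$; monogamy at $u$ yields $c(G[E^1],k)\le 1$, and the second term is at most $\max_{G'\in\mathcal{G}_{s-2}} c(G',k)$, completing the bound. The main technical care throughout is in justifying that restriction and extension between $\feasLass{k}{s'}$ and $\feasLass{k}{s}$ preserve feasibility and the values of $h_e$; this is because the defining relations of $\mathcal{I}$ are local to each vertex, so restriction of $L$ gives a principal submatrix of $M(L)$, and extension can be realized by tensoring with any fixed single-qubit state on the added vertices.
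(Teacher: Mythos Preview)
Your proposal is correct and matches the paper's proof in all essential respects: items (1), (3), (4), (5), (6) are argued identically, and your slightly more direct bound in (2)---using $\sum_{e\in E} h_e^+ \le \sum_{i\in S}\sum_{j\in N(i)} h_{ij}^+$ rather than first partitioning the edges into stars and invoking item (1)---is a cosmetic variation on the same monogamy argument. Your explicit remarks on restriction/extension of $L$ between $\feasLass{k}{s'}$ and $\feasLass{k}{s}$ spell out a step the paper leaves implicit in the equality $\max_{L\in\feasLass{k}{s}}\sum_{e\in E^i}h_e = c(G[E^i],k)$.
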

 \begin{proof} \hfill
 \begin{enumerate}
    \item For $\left\{ E^1, \dots, E^p \right\}$ a partition of $E$, we have
        $c(G,k) \leq \sum_{i=1}^p \left( \max_{L \in \feasLass{k}{s}} \sum_{e \in E^i} h_e  \right) $ \\ $= \sum_{i=1}^p c(G[E^i],k).$
    The inequality follows from the fact that the maximum of a sum is at most the sum of the maxima, while the equality follows from the definition of $c(G,k)$. 
     \item Let $V^* \subseteq V$ be a vertex cover of $G = (V,E)$ satisfying $|V^*| = \tau(G)$. For each $i \in V^*$, let $E^i \subseteq E$ be a subset of the edges adjacent to $i$, chosen in such a way that the subsets $E^i$ form a partition of $E$. Observe that each edge-induced subgraph $G[E^i]$ is a star graph. Then, by Item \ref{item_subgraphDecomp} and \eqref{eqn_monogamyStar}, we have
         $c(G,k) \leq \sum_{i \in V^*} c(G[E^i],k) \leq |V^*| = \tau(G)$.
    \item The statement follows from \eqref{eqn_monogamyStar}, since $\max_{L \in \feasLass{k}{s}} h^+_{ij} = \max_{L \in \feasLass{k}{s}}  \frac{1}{2} \sum_{i \in V} \, \, \sum_{j \in N(i)} h^+_{i j} \leq \frac{s}{2}$. The factor $1/2$ accounts for the fact that we count each edge twice.
     \item   Let $L \in \feasLass{k}{s}$. Let $\widetilde{L} \in \R\langle \mathbf{p}\rangle^*_{2(k-1)}$ be the restriction of $L$ to inputs from $\mathbb{R} \langle \mathbf p \rangle_{2(k-1)}$. It follows that $\widetilde{L} \in \feasLass{k-1}{s}$. Moreover, $\widetilde{L}$ and $L$ induce the same values of $g$, $h$ and $h^+$, see \Cref{def_sdpValues}.
     \item Without loss of generality, assume that vertex $1$ has degree $1$, and is connected to vertex $2$, i.e., $ \{1,2\} \in E$. Partition $E := E(G)$ into $E^1 := \setFunct{ \{2,i\}}{i \in N(2)}$ and $E^2 := E \setminus E^1$. By Item \ref{item_subgraphDecomp}, we have $c(G,k) \leq c(G[E^1],k) + c(G[E^2],k)$. Since $G[E^1]$ is a star graph, we have $c(G[E^1],k) \leq 1$, see \eqref{eqn_monogamyStar}. Thus, $c(G,k) \leq 1 + c(G[E^2],k) \leq 1+\max_{G \in \mathcal{G}_{s-2}} c(G,k)$, as $G[E^2]$ is a graph on $s-2$ vertices.
     
     \item Using that $h_e = -1+g_e/2$, see \Cref{def_sdpValues}, we have
     \begin{align*}
         c(G,k) =  \max_{L \in \feasLass{k}{s}}  \sum_{e \in E} h_{e} = -|E(G)| + \frac{1}{2} \max_{L \in \feasLass{k}{s}}  \sum_{e \in E} g_{e} \geq \frac{\lambdaMax{H_G}}{2} - |E(G)|.
     \end{align*}
     The inequality follows from the fact that \ref{eqn_sdpRelax} provides an upper bound on $\lambdaMax{H_G}$, which is tight if $k \geq s$, see \Cref{lemma_exactnessResult}. \qedhere
     \end{enumerate}
 \end{proof}

 \begin{restatable}{lemma}{Gclassification}
\label{lemma_gClassification}
    Let $s,k \in \mathbb{N}$, with $s,k \geq 2$ and $s$ odd. Suppose $\max_{G \in \mathcal{G}_{s^\prime}} c(G,k) = \lfloor s^\prime / 2 \rfloor$ for all $2 \leq s^\prime < s$. 
    Let $\mathbf{G} \subseteq \mathcal{G}_s$ be the set of graphs that satisfy the following properties:
    \begin{enumerate}
        \item   \label{eqn_graphClassific} The graph $G$ is biconnected, triangle-free, and not bipartite.
        \item \label{eqn_degBounds}  For all $i \in V(G), \, 2 \leq \degr{i} \leq \frac{s-1}{2}$.
        \item \label{eqn_edgeMin}   The graph $G$ has at least $s$ edges, i.e.,  $|E(G)| \geq s$.
        \item   \label{eqn_stableSetNeighb}     For any stable set  $S \subseteq V(G), \,  \left| \cup_{i \in S} N(i) \right| \geq |S|+1$.
    \end{enumerate}
    We have that
    \begin{align}
        \label{eqn_impliciationReduction}
         \max_{G \in \mathcal{G}_s} c(G,k) = \left\lfloor \frac{s}{2} \right\rfloor \iff  \max_{G \in \mathbf{G}} c(G,k) \leq \left\lfloor \frac{s}{2} \right\rfloor.
    \end{align}
\end{restatable}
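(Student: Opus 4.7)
The $(\Rightarrow)$ direction is immediate from $\mathbf{G} \subseteq \mathcal{G}_s$. For $(\Leftarrow)$, assume $\max_{G \in \mathbf{G}} c(G,k) \leq \lfloor s/2 \rfloor$; since a disjoint matching on $\lfloor s/2 \rfloor$ edges (with one isolated vertex) already witnesses $c = \lfloor s/2 \rfloor$ via the matching state of Step~\ref{step_matching} of \Cref{alg:approxLee}, it suffices to prove $c(G,k) \leq \lfloor s/2 \rfloor$ for every $G \in \mathcal{G}_s \setminus \mathbf{G}$. The plan is a case analysis on which of the properties~\ref{eqn_graphClassific}--\ref{eqn_stableSetNeighb} fails, reducing each case to the induction hypothesis via the bounds in \Cref{lemma_vertexCover} and the monogamy inequality~\eqref{eqn_monogamyStar}.

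Most violations admit clean reductions. A vertex of degree $0$ is removed and the induction applied to a graph on $s-1$ vertices; a vertex of degree $1$ is handled by \Cref{lemma_vertexCover}(\ref{item_vertex1B}), yielding $c(G,k) \leq 1 + \lfloor (s-2)/2 \rfloor = \lfloor s/2 \rfloor$. A disconnected $G$ is decomposed into components via \Cref{lemma_vertexCover}(\ref{item_subgraphDecomp}); the odd parity of $s$ forces at least one component to have odd size, giving $\sum_i \lfloor s_i/2 \rfloor \leq \lfloor s/2 \rfloor$. A bipartite $G$ satisfies $\tau(G) \leq \lfloor s/2 \rfloor$ by K\"onig's theorem, after which \Cref{lemma_vertexCover}(\ref{item_tauBound}) closes the case. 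A violation of property~\ref{eqn_stableSetNeighb} (a stable set $S$ with $|N(S)| \leq |S|$) is handled by iteratively extending $S$ with stable sets from $G[V \setminus (S \cup N(S))]$ until the independence number satisfies $\alpha(G) \geq \lceil s/2 \rceil$; Gallai's identity $\alpha(G) + \tau(G) = s$ then forces $\tau(G) \leq \lfloor s/2 \rfloor$. Finally, if $|E(G)| < s$ then $G$ is either disconnected or a tree (which has a leaf), reducing to previous cases.

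The main obstacles are the cut-vertex case, the high-degree vertex case, and the triangle case. For a cut vertex $v$ decomposing $G$ into pieces $G_1,\dots,G_p$ sharing only $v$, the naive bound $c(G,k) \leq \sum_i c(G_i,k) \leq \sum_i \lfloor s_i/2 \rfloor$ can exceed $\lfloor s/2 \rfloor$ when all $s_i$ share the same parity (e.g.~$(s_1,s_2) = (2,4)$ for $s=5$ inflates the bound by $1$). The fix is to augment \Cref{lemma_vertexCover}(\ref{item_subgraphDecomp}) with the monogamy constraint~\eqref{eqn_monogamyStar} at $v$: the positive edge-values $h_{vu}^+$ summed across all neighbors of $v$ in all pieces are jointly bounded by $1$, and a careful reallocation between pieces recovers the missing half-integer. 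For a vertex of degree $\geq (s+1)/2$, the star at $v$ contributes $\leq 1$ by monogamy and the induction bounds the remaining edges on $V \setminus \{v\}$ by $\lfloor (s-1)/2 \rfloor$, with a sharpened monogamy-plus-structure argument closing the $1/2$ gap. Analogously, if $G$ contains a triangle $T$, the induction at $s' = 3$ gives $c(K_3,k) \leq 1$, and the edges outside $T$ are bounded via \Cref{lemma_vertexCover}(\ref{item_tauBound}) combined with monogamy at the triangle's vertices when those vertices also carry external edges.
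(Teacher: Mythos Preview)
Your overall architecture (case analysis on which property fails, reducing via \Cref{lemma_vertexCover} and monogamy) matches the paper, and your treatments of disconnected, bipartite, degree-$1$, and $|E|<s$ cases are fine. However, there are two genuine gaps.

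First, your stable-set argument is incorrect. The claim that a violation $|N(S)|\leq|S|$ can be ``iteratively extended'' to force $\alpha(G)\geq\lceil s/2\rceil$ fails already for triangle-free graphs. Take $s=9$ with $V=\{1,\dots,9\}$, a $4$-cycle $1\text{--}2\text{--}3\text{--}4\text{--}1$, a $5$-cycle $5\text{--}6\text{--}7\text{--}8\text{--}9\text{--}5$, and bridges $2\text{--}5$, $4\text{--}7$. Then $S=\{1,3\}$ has $N(S)=\{2,4\}$ so property~\ref{eqn_stableSetNeighb} fails, yet one checks $\alpha(G)=4<5$, so Gallai gives only $\tau(G)=5>\lfloor s/2\rfloor$. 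The paper instead partitions $E$ into $E_{N(S)}$ (all edges touching $N(S)$) and its complement: since $N(S)$ covers $G[E_{N(S)}]$, \Cref{lemma_vertexCover}(\ref{item_tauBound}) gives $c\leq|N(S)|$ on that piece, and the rest lives on $\leq s-|S|-|N(S)|$ vertices, whence $c(G,k)\leq |N(S)|+\lfloor (s-|S|-|N(S)|)/2\rfloor\leq\lfloor s/2\rfloor$ using $|N(S)|\leq|S|$.

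Second, your triangle and high-degree cases are not closed, and the missing idea is precisely what makes them work together in the paper. The key fact (a consequence of the level-$2$ relaxation) is that for any triangle $\{e_1,e_2,e_3\}$ one has $h_{e_1}+h_{e_2}+h_{e_3}\leq 0$; hence some triangle edge has $h_e\leq 0$ and may be dropped without decreasing $c(G,k)$. Iterating yields a \emph{triangle-free} $G'\in\mathcal{G}_s$ with $c(G,k)\leq c(G',k)$. Once $G'$ is triangle-free, a vertex $i$ with $\deg(i)\geq(s+1)/2$ has $N(i)$ stable, so $V\setminus N(i)$ is a vertex cover of size $\leq(s-1)/2$, and \Cref{lemma_vertexCover}(\ref{item_tauBound}) finishes the high-degree case directly---no $1/2$ gap to close. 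Your separate ``star plus remainder'' bound $1+\lfloor(s-1)/2\rfloor=(s+1)/2$ genuinely overshoots and cannot be repaired by monogamy alone. (Your cut-vertex sketch is in the right direction; the paper makes it precise by splitting into exactly two blocks of sizes $s_1+s_2=s+1$, handling the both-odd case trivially and the both-even case by isolating the star at the cut vertex as a third piece.)
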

\begin{proof}
\noeqref{eqn_edgeMin}\noeqref{eqn_degBounds}\noeqref{eqn_graphClassific} \noeqref{eqn_stableSetNeighb}
    See \Cref{section_extraResults}.
\end{proof}
Among the properties stated in \Cref{lemma_gClassification}, it can be shown that property \ref{eqn_stableSetNeighb}  implies property \ref{eqn_degBounds}. Finding a list of all graphs in $\mathbf{G}$ is intractable for large $s$, due to the difficulty of checking property~\ref{eqn_stableSetNeighb}. However, it is clear that \eqref{eqn_impliciationReduction} remains valid if we replace $\mathbf{G}$ by a superset of $\mathbf{G}$ that is simpler to construct. The set of all graphs satisfying properties \ref{eqn_graphClassific} to \ref{eqn_edgeMin} is such a superset. The cardinality of this superset is still much smaller compared to $|\mathcal{G}_s|$, as shown in \Cref{table_lemmaReductionG}. Note that the 5-cycle is the only triangle-free non-bipartite graph on 5 vertices, since non-bipartite graphs must contain an odd cycle. In \Cref{table_lemmaReductionG}, the number of graphs satisfying properties \ref{eqn_graphClassific} to \ref{eqn_edgeMin} for $s = 15$ is not computed due to limited computational resources. We will use a relaxed version of property \ref{eqn_stableSetNeighb} (obtained by constraining $|S| \leq 2$), to reduce the cardinality of the superset even further for the case $s = 13$ (see \Cref{section_compuDetails}).
\begin{table}[ht!]
\centering
\begin{tabular}{l|l|l}
\hline
$s$ & $|\mathcal{G}_s|$ & \# graphs satisfying properties \ref{eqn_graphClassific} to \ref{eqn_edgeMin}  \\ \hline
3 & 4 & 0 \\
5 & 34 & 1 (the 5-cycle) \\
7 & 1044 & 6 \\
9 & 274668 & 219 \\
11 & 1018997864 & 26360 \\
13 & 50502031367952 & 9035088 \\
15 & 31426485969804308768 & -  \\ \hline
\end{tabular}
\caption{Comparison of $|\mathcal{G}_s|$ and the number of graphs satisfying properties \ref{eqn_graphClassific} to \ref{eqn_edgeMin} of \Cref{lemma_gClassification}.}
\label{table_lemmaReductionG}
\end{table}
\begin{lemma}   
\label{lemma_5VertexBound}
For $2 \leq s \leq 10$ and $k \geq 2$, $\max_{G \in \mathcal{G}_s} c(G,k) = \lfloor s / 2 \rfloor$. For $11 \leq s \leq 14$, \\ \mbox{$\max_{G \in \mathcal{G}_s} c(G,s) = \lfloor s /2 \rfloor$}.
 \end{lemma}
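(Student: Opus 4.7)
The plan is induction on $s$, with each odd step handled by Lemma~\ref{lemma_gClassification}, which reduces verification of the upper bound to a finite computation on a manageable class of graphs. For the lower bound $\max_{G\in\mathcal{G}_s} c(G,k)\geq \lfloor s/2\rfloor$, I would take $G$ to be a near-perfect matching on $s$ vertices: since the local Hamiltonians on disjoint matched edges commute and each has maximum eigenvalue $4$, we have $\lambdaMax{H_G}=4\lfloor s/2\rfloor$, and Item~\ref{item_exactEVB} of Lemma~\ref{lemma_vertexCover} yields $c(G,k)\geq \lambdaMax{H_G}/2-|E(G)| = \lfloor s/2\rfloor$.

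For the upper bound I would split on the parity of $s$. If $s$ is even, Item~\ref{item_vertexBound} immediately gives $c(G,k)\leq s/2 = \lfloor s/2\rfloor$. If $s$ is odd, the induction hypothesis (combined with Item~\ref{item_lasserImp} to pass between different levels of $k$) supplies the premise of Lemma~\ref{lemma_gClassification}, so it suffices to verify $\max_{G\in\mathbf{G}} c(G,k)\leq (s-1)/2$. Since property~\ref{eqn_stableSetNeighb} of that lemma is awkward to check in general, I would instead enumerate the larger but more accessible superset of graphs satisfying only properties~\ref{eqn_graphClassific}--\ref{eqn_edgeMin}, whose sizes appear in Table~\ref{table_lemmaReductionG}. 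For $s=3$ the case is already covered by Lemma~\ref{lemma_s3Case} (and for $s=5$ by \cite{jorquera2024algorithms}), so only $s\in\{7,9,11,13\}$ require new computation.

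For each remaining candidate $G$, I would compute an upper bound on $c(G,k)$ via one of two routes, chosen by regime. For $s\in\{7,9\}$ with $k\geq 2$, Item~\ref{item_lasserImp} lets me replace $c(G,k)$ by $c(G,2)$ and solve the level-$2$ SDP directly; the candidate lists contain at most $219$ graphs, so this is immediate. For $s\in\{11,13\}$ with $k=s$, Item~\ref{item_exactEVB} provides the exact identity $c(G,s)=\lambdaMax{H_G}/2-|E(G)|$, reducing the task to a single maximum-eigenvalue computation on a $2^s\times 2^s$ Hermitian matrix per graph. In both regimes I would check that the result is at most $(s-1)/2$.

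The main obstacle is the computational scale for $s=13$: the superset has roughly $9\times 10^6$ graphs, each demanding the dominant eigenvalue of an $8192\times 8192$ Hamiltonian. Making this tractable requires exploiting the extreme sparsity of Pauli strings, applying Lanczos-type iterative solvers rather than full diagonalization, and prefiltering the candidate list via a cheap relaxation of property~\ref{eqn_stableSetNeighb} (for instance restricting it to $|S|\leq 2$), in the style hinted at in Section~\ref{section_compuDetails}. This prefiltering and the per-graph eigenvalue loop together account for the roughly $16$-hour runtime stated in the paper.
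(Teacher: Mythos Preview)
Your proposal is correct and follows essentially the same approach as the paper's proof sketch: induction on $s$ via Lemma~\ref{lemma_gClassification}, even cases handled by Item~\ref{item_vertexBound}, odd $s\in\{3,\ldots,9\}$ by level-$2$ SDP computation over the superset satisfying properties~\ref{eqn_graphClassific}--\ref{eqn_edgeMin}, and odd $s\in\{11,13\}$ via the exactness identity $c(G,s)=\lambdaMax{H_G}/2-|E(G)|$ from Item~\ref{item_exactEVB}. Your mention of prefiltering with the $|S|\leq 2$ relaxation of property~\ref{eqn_stableSetNeighb} and iterative eigensolvers matches the paper's computational details; the only further tricks the paper adds are a $\tau(G)\leq 6$ filter and a recursive Hamiltonian update (see \eqref{eqn_HG_construction}--\eqref{eqn_lambdaUB}) to avoid recomputing eigenvalues from scratch.
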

 \begin{proofsketch}
We use \Cref{lemma_gClassification} as follows: let $\mathbf{G}'$ be the set of graphs satisfying properties \ref{eqn_graphClassific} to \ref{eqn_edgeMin}. Observe that $\mathbf{G}'$ is a superset of $\mathbf{G}$, and that we may replace $\mathbf{G}$ by $\mathbf{G}'$ in \eqref{eqn_impliciationReduction}. Hence, it remains to show that $\max_{G \in \mathbf{G}'} c(G,k) \leq \lfloor s /2 \rfloor$. If $s$ is even, $\max_{G \in \mathbf{G}'} c(G,k) \leq \lfloor s /2 \rfloor$, $k 
\geq 2$, follows by Item \ref{item_vertexBound} of \Cref{lemma_vertexCover}. If $s \in \{3,5,7,9\}$, we verify that $\max_{G \in \mathbf{G}'} c(G,k) \leq \lfloor s /2 \rfloor$, $k = 2$, by solving the SDPs that define $c(G,k)$. By Item \ref{item_lasserImp} of \Cref{lemma_vertexCover}, this also proves the case $k > 2$. If $s \in \{11,13\}$, we proceed similarly, except instead of computing $c(G,s)$ via solving the SDP, we use $c(G,s) =  \lambdaMax{H_G}/2 - |E(G)|$, which is valid due to Item \ref{item_exactEVB} of \Cref{lemma_vertexCover} (Computing $c(G,s)$ in this way requires significantly less time than solving the SDP). We provide more computational details of the proof in \Cref{section_compuDetails}.
  \qedhere
  \end{proofsketch}
 
Extending \Cref{lemma_5VertexBound} to $s = 15$ is intractable with current methods: the Hamiltonians corresponding to graphs in $\mathcal{G}_{15}$ are matrices of order $2^{15} = 32768$, and the number of graphs satisfying properties \ref{eqn_graphClassific} to \ref{eqn_edgeMin} is likely to be of the order $10^8$, see \Cref{table_lemmaReductionG}. However, \Cref{lemma_5VertexBound} already provides the following improved lower bounds on the approximation ratio of \Cref{alg:approxLee}. 

\begin{theorem}
    \label{thm_boundN5}
    For $k \geq 2$, the approximation ratio of \Cref{alg:approxLee} is at least $\alpha(10/11) \geq 0.602$, see \Cref{table_alphaMuVals}. If $k \geq 13$, the approximation ratio is at least $\alpha(14/15) \geq \ratioGeneral{}$.
\end{theorem}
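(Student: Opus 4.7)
The plan is to assemble this bound directly from pieces already established: combine \Cref{lemma_5VertexBound}, \Cref{corr_simpleCons}, and \Cref{lemma_approxRatioLee} with the numerical values tabulated in \Cref{table_alphaMuVals}, selecting the best admissible odd parameter $s$ in each regime of $k$.

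For the first claim ($k \geq 2$), I would set $s = 9$, the largest odd integer within the range $2 \leq s \leq 10$ covered unconditionally by \Cref{lemma_5VertexBound}. That lemma yields $\max_{G \in \mathcal{G}_9} c(G,k) = 4 = \lfloor 9/2 \rfloor$ for every $k \geq 2$, so the hypothesis of \Cref{corr_simpleCons} is met with $s = 9$. The corollary then guarantees that $\bigl(\tfrac{10}{11} h^+_e\bigr)_{e \in E(G)}$ lies in the matching polytope of the input graph for every $L \in \feasLass{k}{n}$. Feeding $\mu = 10/11$ into \Cref{lemma_approxRatioLee} produces an approximation ratio of at least $\alpha(10/11) \geq 0.602$, the value read off from \Cref{table_alphaMuVals}.

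For the sharper bound when $k \geq 13$, I would set $s = 13$. The second half of \Cref{lemma_5VertexBound} gives $\max_{G \in \mathcal{G}_{13}} c(G,13) = 6$, but only at the exact level $k = s = 13$. To extend this uniformly over $k \geq 13$, I would invoke the monotonicity in $k$ provided by Item \ref{item_lasserImp} of \Cref{lemma_vertexCover}, which iterated yields $c(G,k) \leq c(G,13) \leq 6 = \lfloor 13/2 \rfloor$ for every $G \in \mathcal{G}_{13}$ and every $k \geq 13$. \Cref{corr_simpleCons} with $s = 13$ then places $\bigl(\tfrac{14}{15} h^+_e\bigr)_{e \in E(G)}$ in the matching polytope, and \Cref{lemma_approxRatioLee} with $\mu = 14/15$ delivers the claimed ratio $\alpha(14/15) \geq \ratioGeneral{}$.

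At this point there is no substantive obstacle: all the heavy lifting sits in the preceding \Cref{lemma_5VertexBound} (the computational verification over graphs on up to $13$ vertices) and in the chain of monogamy-of-entanglement bounds in \Cref{lemma_vertexCover,lemma_gClassification}. The only subtle point worth flagging in the write-up is that the $k \geq 13$ case cannot invoke \Cref{lemma_5VertexBound} directly, since that lemma is stated only at the diagonal level $k = s$ for $11 \leq s \leq 14$; the monotonicity of $c(G,\cdot)$ is precisely what bridges this gap.
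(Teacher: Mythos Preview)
Your proposal is correct and follows essentially the same route as the paper's own proof, which simply cites \Cref{corr_simpleCons}, \Cref{lemma_5VertexBound}, and \Cref{lemma_approxRatioLee} in one sentence. Your write-up is in fact more careful: you make explicit the use of Item~\ref{item_lasserImp} of \Cref{lemma_vertexCover} to pass from the diagonal level $k=s$ (where \Cref{lemma_5VertexBound} is stated for $11\le s\le 14$) to all $k\ge 13$, a step the paper leaves implicit; the same monotonicity remark applies to the intermediate odd value $s'=11$ needed inside the proof of \Cref{corr_simpleCons}.
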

\begin{proof}
By \Cref{corr_simpleCons} and \Cref{lemma_5VertexBound}, we have that $(\frac{10}{11}h^+_e)_{e \in E}$ is contained in the matching polytope when $k \geq 2$, and $(\frac{14}{15}h^+_e)_{e \in E}$ when $k \geq 13$. The claim then follows from \Cref{lemma_approxRatioLee}.
\end{proof}
Our results also imply the following non-linear inequalities for \ref{eqn_sdpRelax}.
\begin{lemma}
\label{lemma_connectionHandC}
    Let $L \in \feasLass{k}{n}$, see \eqref{eqn_feasLass}, and $k \geq 2$. Then, for the values $h^+$ derived from $L$ as in \Cref{def_sdpValues}, we have
        $\sum_{1 \leq i < j \leq s } h^+_{ij} \leq \lfloor s / 2 \rfloor \text{ for all } s \in \{2, \, 3, \dots, \, 10 \}$.
    These bounds are tight and extend to $s \in \{11,\dots,14\}$ if $k \geq s$. 
\end{lemma}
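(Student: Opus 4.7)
The plan is to derive the bounds directly from Lemma~\ref{lemma_hUBinC} combined with Lemma~\ref{lemma_5VertexBound}, with essentially no new argument required. Given any $L \in \feasLass{k}{n}$ (necessarily with $n \geq s$), I would first restrict $L$ to the sub-algebra generated by $\{x_i,y_i,z_i : i \in [s]\}$. The resulting functional $L'$ lies in $\feasLass{k}{s}$, and the values $h^+_{ij}$ for $i,j \in [s]$ coincide for $L$ and $L'$. Applying Lemma~\ref{lemma_hUBinC} and then Lemma~\ref{lemma_5VertexBound} yields
\begin{align*}
    \sum_{1 \leq i < j \leq s} h^+_{ij} \leq \max_{L'' \in \feasLass{k}{s}} \sum_{1 \leq i < j \leq s} h^+_{ij} = \max_{G \in \mathcal{G}_s} c(G,k) = \left\lfloor \tfrac{s}{2} \right\rfloor,
\end{align*}
which is valid for $s \leq 10$ at every $k \geq 2$ and for $s \in \{11,\dots,14\}$ when $k \geq s$, exactly the two regimes in the statement.

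For tightness, I would exhibit an explicit $L$ that saturates the bound. Let $G^\star$ be the matching on $[s]$ with edges $\edge{2i-1}{2i}$, $i \in [\lfloor s/2\rfloor]$, and let $\ket{\psi}$ be a top eigenvector of $H_{G^\star}$; since the local Hamiltonians act on disjoint qubit pairs, $\ket{\psi}$ may be taken as a tensor product of singlet states on the matched pairs (with an arbitrary pure state on the lone vertex when $s$ is odd). Defining $L$ by \eqref{eqn_finiteRealization} with $\rho = \ket{\psi}\bra{\psi}$ produces a linear functional feasible for \ref{eqn_sdpRelax} at every level $k$, as pointed out below \eqref{eqn_finiteRealization}. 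For a matching edge $\edge{2i-1}{2i}$ the standard singlet identities give $L(x_{2i-1}x_{2i}) = L(y_{2i-1}y_{2i}) = L(z_{2i-1}z_{2i}) = -1$, hence $g_{2i-1,2i}=4$ and $h^+_{2i-1,2i}=1$. For two vertices lying in different singlet pairs (or one of them unmatched), the single-qubit reduced density matrix equals $\eye_2/2$, so each of the three two-point correlators vanishes, giving $g_{ij}=1$ and $h^+_{ij}=0$. Summing over all pairs produces precisely $\lfloor s/2 \rfloor$, confirming tightness at every admissible $k$.

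Since the heavy lifting is already performed in Lemmas~\ref{lemma_hUBinC} and \ref{lemma_5VertexBound}, I do not expect any substantive obstacle: only the trivial restriction step and the singlet computation outlined above require attention. The parameter dichotomy ($k \geq 2$ versus $k \geq s$) is inherited verbatim from Lemma~\ref{lemma_5VertexBound}.
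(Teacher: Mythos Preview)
Your proposal is correct and matches the paper's intended approach: the paper does not give a separate proof of this lemma, presenting it instead as an immediate consequence of Lemmas~\ref{lemma_hUBinC} and~\ref{lemma_5VertexBound}, which is exactly the route you take. The restriction step and the singlet-based tightness witness you spell out are the natural details to fill in.
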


\section{New approximation algorithm on triangle-free graphs}
\label{section_triangleFreeGraphs}
In this section, we propose a QMC approximation algorithm for triangle-free graphs that achieves an approximation ratio of at least $\ratioNoTri{}$.
Our algorithm is inspired by \cite[Alg.~17]{king2023improved}, which is also designed for triangle-free graphs and achieves an approximation ratio of at least $0.582$.  

We present \Cref{alg:triangleFree} below. The input parameter $\Theta$ of \Cref{alg:triangleFree} is a real-valued function from the set $\mathcal{A}$ that we will define in \Cref{section_propTheta}. The restriction $\Theta \in \mathcal{A}$ is required for the computation of the approximation ratio of \Cref{alg:triangleFree} in \Cref{section_approxRatioTriFree}.

\begin{algorithm}[QMC approximation algorithm for triangle-free graphs]
\label{alg:triangleFree}
Input: triangle-free graph $G = (V,E, w)$, function $\Theta \in \mathcal{A}$, see \eqref{eqn_thetaProp}.
\begin{enumerate}
	\item \label{step_algTriangle1} Solve \ref{eqn_sdpRelax} for $k = 13$ to obtain the vectors $\mathbf{v}_i$ and values $(h_e^+)_{e \in E}$, see \eqref{eqn_gramVectors} and \Cref{def_sdpValues} respectively. Compute the vectors $u_i \in \mathbb{R}^3$, $i \in V$,
 as in \Cref{item_uComputation} of \Cref{alg:approxLee}. For each $i \in V$, let $\xi_i \in \C^2$ be a unit length vector satisfying $\ket{\xi_i} \bra{\xi_i} = \frac{1}{2}\left(\eye_2 + u_{i,1} X + u_{i,2} Y + u_{i,3} Z\right)$.
    \item \label{algStep_defTheta} Set $\theta_{e} := \arcsin{ \sqrt{\Theta\mleft(h^+_e\mright) } }$ for all $e \in E$.
    \item \label{step_triangleFree5}  For all $i \in V$, set $P_i := \begin{bmatrix}
    0 & 1 \\
        \exp{\left( \left( 2 \argFunc{\xi^*_{i,1} \xi_{i,2}} + \pi \right) \imagUnit    \right)}             & 0
        \end{bmatrix}$.
    \item \label{step_secondMatching} 
    Compute a maximum weight matching $\mathcal{M}$ on the modified graph $\widetilde{G} = (V,E,\widetilde{w})$, where $\widetilde{w}$ is defined as
    \begin{align}
        \label{eqn_wTildeDef}
        \widetilde{w}_e := w_e \, q(  h_e ) \sqrt{\Theta\mleft(h^+_e\mright) \left(1- \Theta\left(1-h^+_e\right)\right)},
    \end{align}
    for $q$ as in \eqref{eqn_hyperGeoQ}. Let $U \subseteq V$ be the set of vertices unmatched by $\mathcal{M}$.
	\item \label{step_choosingAngles}
    For all $\edge{i}{j} \in E$, let 
    \begin{align}
    \label{eqn_gammaDef}
    \gamma_{ij} := \pi - \argFunc{ \bra{\xi_i} e^{\varphi_j \imagUnit}{P}_j \ket{\xi_j} \bra{\xi_j} e^{\varphi_i \imagUnit} P_i \ket{\xi_i} }, 
    \end{align}
    where the values of $\varphi_i$, $i \in V$, are chosen as follows: for each $i \in U$, draw $\varphi_i \in [0,2\pi)$ uniformly at random. For $\edge{i}{j} \in E$ with $\mathcal{M}_{\{i,j\}} = 1$, draw $\varphi_i \in [0, 2\pi)$ uniformly at random, and choose $\varphi_j \in [0,2 \pi)$ such that $\gamma_{ij} = \pi/2$.
    \item For all $i \in V$, set $\widetilde{P}_i := e^{\varphi_i \imagUnit} \left( \eye_2^{\otimes(i-1)} \otimes  P_i \otimes \eye_2^{\otimes(n-i)} \right)$.
    
	\item \label{step_triRho2} Compute $\rho_2$ as in \Cref{step_matching} of \Cref{alg:approxLee}.
    \item \label{step_stateGeneration}
    Let
    \begin{align}
        \label{eqn_xiDefinition}
    	\ket{\xi} := \prod_{\edge{i}{j} \in E} \exp{\left(\frac{\imagUnit}{2} \signFunction{\gamma_{ij}} \theta_{ij} \widetilde{P}_i  \widetilde{P}_j \right)} \bigotimes_{i \in V} |\xi_i\rangle.       
    \end{align}
    \item Return the state $\ket{\xi} \bra{\xi}$ if $\Tr{ \ket{\xi}\bra{\xi} H_G} \geq \Tr{\rho_2 H_{G}}$, and state $\rho_2$ otherwise.
\end{enumerate}
\end{algorithm}
\Cref{alg:triangleFree} improves over \cite[Alg.~17]{king2023improved} in three ways. Firstly, we optimize the algorithm parameter $\Theta$ over a larger space. King chose $\Theta(x) = Rx^2$ and (numerically) optimized the value of $R \in \mathbb{R}$ to obtain the highest approximation ratio. In contrast, we consider functions $\Theta$ over a set $\mathcal{A}$, which we prove contains functions of the form $Rx^2$, but also $Rx$ and $1-e^{-Rx}$ (\Cref{lemma_linFunctionInOmega}). We determine a near-optimal $\Theta \in \mathcal{A}$ in \Cref{section_findingCandidateTheta}. Secondly, in \Cref{step_choosingAngles}, we choose the values of $\varphi_i$ based on a  maximum weight matching on a modified graph, which we later show improves over drawing all the $\varphi_i$ uniformly at random as in \cite[Alg.~17]{king2023improved}. Thirdly, we output the state $\rho_2$ from \Cref{step_matching} of \Cref{alg:approxLee} if the state $\ket{\xi} \bra{\xi}$ performs worse. 

Note that \Cref{alg:triangleFree} computes maximum weight matchings in \Cref{step_secondMatching,step_triRho2}. To compute the approximation ratio of \Cref{alg:triangleFree}, we relate the weight of these matchings to $h^+$ from \Cref{def_sdpValues}, as in \cite{lee2024improved}. Given a maximum weight matching $\mathcal{M}$ on $G=(V,E,w)$, this relation is the inequality $\sum_{e \in E(G)} w_e \mathcal{M}_e \geq \mu \sum_{e \in E(G)} w_e h^+_e$. Here, the value $\mu \in [0,1]$ is such that  $(\mu h^+_e)_{e \in E(G)}$ is contained in the matching polytope. Since \Cref{alg:triangleFree} uses an SDP relaxation level of $k = 13$, we may set $\mu = 14/15$, as explained in \Cref{section_sharpenedAnalysis}.

\subsection{Properties of \texorpdfstring{$\Theta$}{Theta}}
\label{section_propTheta}
We require that the function $\Theta$ used in \Cref{alg:triangleFree} is an element of the set
\begin{equation}
    \label{eqn_thetaProp}
    \mathcal{A} := \left\{ \Theta : \mathbb{R} \to \mathbb{R} : \, \,
    \begin{aligned}
        & \Theta(0) = 0, \, \Theta(1) \leq 1, \Theta \text{ an increasing function, and for all} \\[0.5ex]
        &  \text{fixed } c \in [0,1], \, \min_{x \in [0,c]}(1-\Theta(x))(1-\Theta(c-x)) = 1-\Theta(c)
    \end{aligned}
    \right\}.
\end{equation}

 The functions in $\mathcal A$ satisfy the following property, which is a generalization of \cite[Cor.~8]{king2023improved}. 
\begin{lemma}
\label{lemma_thetaProdLB}
For $\Theta \in \mathcal{A}$, and values $x_0, x_1,\dots, x_p \geq 0$ satisfying $x_0 + \sum_{s \in [p]} x_s \leq 1$, we have
\begin{align*}
    \prod_{s \in [p]} \left( 1-\Theta(x_s) \right) \geq 1-\Theta(1-x_0).
\end{align*}
\end{lemma}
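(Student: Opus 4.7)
The plan is to prove the statement by first establishing a cleaner intermediate claim and then invoking monotonicity of $\Theta$. The intermediate claim is that for any $\Theta \in \mathcal{A}$ and nonnegative reals $y_1, \dots, y_p$ with $\sum_{s=1}^p y_s \leq 1$, one has
\[
\prod_{s=1}^p \bigl(1-\Theta(y_s)\bigr) \geq 1 - \Theta\!\left( \sum_{s=1}^p y_s \right).
\]
Once this is in hand, setting $y_s := x_s$ for $s \in [p]$, using $\sum_{s=1}^p x_s \leq 1 - x_0$, and applying monotonicity of $\Theta$ yields $1 - \Theta(\sum_s x_s) \geq 1 - \Theta(1-x_0)$, which gives the lemma.

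I would prove the intermediate claim by induction on $p$. The base case $p=1$ is trivial. The case $p=2$ is exactly the defining property of $\mathcal{A}$ applied with $c = y_1 + y_2 \in [0,1]$ and $x = y_1$, which yields $(1-\Theta(y_1))(1-\Theta(y_2)) \geq 1 - \Theta(y_1+y_2)$. For the inductive step from $p-1$ to $p$, group the last two values: let $y' := y_{p-1} + y_p$, so $y' \in [0,1]$ and the $p-1$ numbers $y_1, \dots, y_{p-2}, y'$ still sum to at most $1$. The $p=2$ case gives $(1-\Theta(y_{p-1}))(1-\Theta(y_p)) \geq 1 - \Theta(y')$, and the inductive hypothesis applied to $y_1, \dots, y_{p-2}, y'$ gives $\prod_{s=1}^{p-2}(1-\Theta(y_s)) \cdot (1-\Theta(y')) \geq 1 - \Theta(\sum_s y_s)$. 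Chaining these, using that each factor $1-\Theta(y_s) \geq 0$ (since $\Theta$ is increasing with $\Theta(1) \leq 1$ and all $y_s \leq 1$) so multiplication preserves the inequality, gives the desired bound for $p$.

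The only subtlety worth flagging is the nonnegativity of the factors $1-\Theta(y_s)$, which is needed to multiply inequalities without flipping signs; this is where we use the full force of the conditions $\Theta(0)=0$, $\Theta(1)\leq 1$, and monotonicity built into $\mathcal{A}$. Apart from that, the proof is a straightforward pairwise aggregation, so there is no serious obstacle — the main conceptual move is simply recognizing that the two-variable supermultiplicativity built into the definition of $\mathcal{A}$ propagates to arbitrarily many variables via a telescoping induction.
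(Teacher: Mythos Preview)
Your proof is correct and follows essentially the same approach as the paper: both first use the two-variable inequality from the definition of $\mathcal{A}$ iteratively to obtain $\prod_{s\in[p]}(1-\Theta(x_s)) \geq 1 - \Theta\bigl(\sum_{s\in[p]} x_s\bigr)$, and then apply monotonicity of $\Theta$ together with $\sum_{s\in[p]} x_s \leq 1 - x_0$. Your version is slightly more careful in making the induction explicit and in noting the nonnegativity of the factors $1-\Theta(y_s)$ needed to chain the inequalities, but the underlying argument is the same.
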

\begin{proof}
    Since $\Theta \in \mathcal{A}$, $(1-\Theta(x_s))(1-\Theta(x_{s'})) \geq 1-\Theta(x_s+ x_{s'})$ for any distinct $s, s' \in [p]$. Iteratively 
    applying the inequality 
    shows that
        $\prod_{s \in [p]} \left( 1-\Theta(x_{s}) \right)  \geq 1-\Theta\left( \sum_{s \in [p]} x_s \right)$.
    Since $\Theta$ is an increasing function, and $\sum_{s \in [p]} x_{s} \leq 1 - x_0$, we finally find $1-\Theta\left( \sum_{s \in [p]} x_s \right) \geq 1-\Theta(1-x_0)$.
\end{proof}

\subsection{Computing the approximation ratio of \texorpdfstring{\Cref{alg:triangleFree}}{Algorithm 2}}
\label{section_approxRatioTriFree}
We compute a lower bound on the approximation ratio of \Cref{alg:triangleFree}. We use \cite[Lem.~12]{king2023improved}, which provides a lower bound on the expected energy of the state $\ket{\xi} \bra{\xi}$, see \eqref{eqn_xiDefinition}, in terms of $\theta_{ij}$, $\gamma'_{e}$ and $A_{ij}, B_{ij}$. Here, 
\begin{align}
\label{eqn_gammaPrimeDef}
    \gamma'_{e} := \gamma_{e} + \pi \frac{1-\signFunction{\gamma_{e}}}{2} \in [0, \pi],
\end{align}
where $\gamma_e$ is given by \eqref{eqn_gammaDef}, and
\begin{equation}
\label{eqn_edgeTermProds}
\begin{aligned}
    A_{ij} := \prod_{k \in N(i) \setminus \{j\}} \cos{\theta_{ik}}, \, \, \,  B_{ij} :=  \prod_{k \in N(j) \setminus \{i\}} \cos{\theta_{k j}}.
    \end{aligned}
\end{equation}

\begin{lemma}[\cite{king2023improved}]
   \label{lemma_expectedAdditionToObj}
   Let $G$ be a triangle-free graph used as input to \Cref{alg:triangleFree}. Let $\ket{\xi}$ be as in \eqref{eqn_xiDefinition}, and $A_{ij}$, $B_{ij}$ as in \eqref{eqn_edgeTermProds}. Then, for an edge $\edge{i}{j} \in E(G)$, we have
    \begin{align*}
    \mathbb{E}\bra{\xi} H_{ij} \ket{\xi} \geq \mathbb{E}\left(E_{ij} \right) \left( 1 +  A_{ij} B_{ij} + \mathbb{E}\left[ \sin \gamma'_{ij} \right] \sin{\mleft(\theta_{ij}\mright)} \left( A_{ij}+B_{ij} \right) \right),
    \end{align*}
    where $E_{ij} := (1-u_i^\top u_j)/2$, see \Cref{step_algTriangle1} of \Cref{alg:triangleFree}.
\end{lemma}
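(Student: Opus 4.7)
The plan is to follow the proof of \cite[Lem.~12]{king2023improved}, verifying that it carries over to our \Cref{alg:triangleFree}: the modifications we make to King's algorithm occur in \Cref{step_secondMatching,step_triRho2,step_choosingAngles}, and only alter the distribution of the phases $\varphi_i$, not the structure of $\ket\xi$ as a function of those phases. The central structural input is that $G$ is triangle-free, so the neighborhoods $N(i)\setminus\{j\}$ and $N(j)\setminus\{i\}$ are disjoint from each other and from $\{i,j\}$; consequently the two ``stars'' of exponentials supported around $i$ and around $j$ act on disjoint sets of qubits, and every exponential indexed by an edge with no endpoint in $\{i,j\}$ commutes with $H_{ij}$ and cancels inside $\bra\xi H_{ij}\ket\xi$.

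After this reduction, I would expand each remaining exponential as $\cos(\theta_e/2)I+\imagUnit\signFunction{\gamma_e}\sin(\theta_e/2)\widetilde{P}_{e_1}\widetilde{P}_{e_2}$. The key algebraic observation, built into \Cref{step_triangleFree5}, is that $P_k\ket{\xi_k}$ is orthogonal to $\ket{\xi_k}$, since $P_k$ is chosen precisely as the phase-twisted off-diagonal flip in the basis $\{\ket{\xi_k},\ket{\xi_k^\perp}\}$. Therefore, at each spectator vertex $k\in N(i)\setminus\{j\}$, only the even powers of $\widetilde{P}_k$ survive the sandwich $\bra{\xi_k}\cdot\ket{\xi_k}$, producing a factor of $\cos\theta_{ik}$ per star-edge. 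Multiplying these contributions over the two disjoint stars yields exactly $A_{ij}$ and $B_{ij}$.

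It remains to evaluate the reduced two-qubit expression on sites $i,j$: the identity term of $H_{ij}$ gives the $1+A_{ij}B_{ij}$ contribution (from the $\cos/\cos$ and $\sin/\sin$ channels of the central exponential at $\{i,j\}$), while $X_iX_j+Y_iY_j+Z_iZ_j$ contributes a cross-term proportional to $(A_{ij}+B_{ij})\sin\theta_{ij}$ with a phase prefactor whose real part equals $E_{ij}\sin\gamma'_{ij}$ by the very definitions \eqref{eqn_gammaDef} and \eqref{eqn_gammaPrimeDef}. Taking expectations over the random $\varphi$, which affects only $\gamma'_{ij}$ (and not $A_{ij}$, $B_{ij}$, $\theta_{ij}$ or $E_{ij}$), then produces the claimed inequality. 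The main obstacle is the bookkeeping for this cross-term: tracking the precise phases coming from $e^{\imagUnit\varphi_i}P_i$ acting on $\ket{\xi_i}$ and verifying that the definition of $\gamma_{ij}$ is exactly engineered to make the real part collapse into $\sin\gamma'_{ij}$; this is also the step where a possibly-negative contribution from the imaginary part must be dropped to obtain the stated ``$\geq$'' rather than equality.
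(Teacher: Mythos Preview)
The paper does not supply its own proof of this lemma; it is taken directly from \cite[Lem.~12]{king2023improved}, and the only accompanying text is a two-sentence remark that the scaling of $H_{ij}$ and $\theta_{ij}$ differs from King's, and that here $\gamma'_e$ need not be uniformly distributed. Your proposal goes further than the paper by actually sketching King's argument, and the sketch is faithful: the triangle-free reduction to the two disjoint stars around $i$ and $j$, the orthogonality $\langle\xi_k|P_k|\xi_k\rangle=0$ at spectator vertices producing the factors $A_{ij},B_{ij}$, and the definition of $\gamma_{ij}$ absorbing the phase of the cross term are exactly the ingredients of \cite[Lem.~12]{king2023improved}. You also correctly isolate the only thing that must be rechecked in the present setting, namely that the new (matching-dependent) distribution of the $\varphi_i$ enters solely through $\mathbb{E}[\sin\gamma'_{ij}]$, while $A_{ij},B_{ij},\theta_{ij}$ are deterministic functions of the SDP solution and $E_{ij}$ depends only on the random matrix $R$.

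One minor inaccuracy: since $H_{ij}$ is Hermitian, $\bra\xi H_{ij}\ket\xi$ is automatically real, so the ``$\geq$'' cannot come from dropping an imaginary part. In King's derivation the inequality arises from lower-bounding certain residual terms in the star expansion, not from a real-part projection.
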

There are two differences in presentation between \cite[Lem.~12]{king2023improved} and \Cref{lemma_expectedAdditionToObj}. Firstly, we use a different scaling of $H_{ij}$ and $\theta_{ij}$ compared to \cite{king2023improved}. Secondly, in \cite[Alg.~17]{king2023improved}, the parameter $\gamma'_e$ is uniform random on $[0,\pi]$. Therefore, in \cite[Alg.~17]{king2023improved}, the expectation of $\sin{\gamma'_e}$ is given by $2/ \pi$. This is in contrast to
\Cref{alg:triangleFree}, where the distribution of $\gamma_e$ depends on the matching $\mathcal{M}$ computed in \Cref{step_secondMatching} of \Cref{alg:triangleFree}. Using that $\gamma_e$ can be written as
\begin{align*}
    \gamma_{ij} = \pi - \modSubscript{\argFunc{ \bra{\xi_i}  P_j \ket{\xi_j} \bra{\xi_j} P_i \ket{\xi_i}} + \varphi_i + \varphi_j }{2 \pi },
\end{align*}
it follows that $\gamma'_e$ is uniform random on $[0,\pi]$ if $\mathcal{M}_e = 0$, or equal to $\pi/2$, if $\mathcal{M}_e = 1$, see \Cref{step_choosingAngles} of \Cref{alg:triangleFree}. Thus, we have that 
\begin{align}
    \label{eqn_angleExpectedVal}
    \mathbb{E}\left[ \sin \gamma'_{e} \right] = (1-\mathcal{M}_e) \int_{0}^\pi \frac{ \sin{ x} }{\pi} \mathrm{d}x + \mathcal{M}_e \sin{\left( \frac{\pi}{2} \right)} = \frac{2}{\pi} + \left( 1 - \frac{2}{\pi} \right) \mathcal{M}_e.
\end{align}

We define the functions
\begin{align}
    \label{eqn_betaDef}
    \beta_{\Theta}(x,\mu) :=  q(x) \left(1 - \frac{\Theta(1-x^+) }{2} + \left( \frac{2}{\pi}  + \mu \frac{\pi-2}{\pi} x^+ \right) \sqrt{\Theta(x^+) (1- \Theta(1-x^+))}\right),
\end{align}
for $q$ as in \eqref{eqn_hyperGeoQ},
\begin{align}
        \label{eqn_zetaFunctDef}
        \zeta_{\Theta}(x,\mu,p) :=  \frac{ p \beta_{\Theta}(x, \mu) +(1-p) \left(1+3 \mu x^+ \right)}{2+2x}, 
\end{align}
and
\begin{align}
    \label{eqn_zetaStarDef}
    \zeta^*_{\Theta}(\mu,p) :=  \min_{x \in (-1,1]} \zeta_{\Theta}(x,\mu,p).
\end{align}
These functions will be used in the following theorem that establishes a lower bound on the approximation ratio of \Cref{alg:triangleFree}.

\begin{theorem}    \label{thm_ratioTriangleFree}
    The approximation ratio of \Cref{alg:triangleFree} for triangle-free graphs is at least 
    \begin{align*}
        \max_{p \in [0,1]} \zeta^*_{\Theta}(14/15,p).
    \end{align*}
\end{theorem}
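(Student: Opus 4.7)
The plan is to bound the ratio of the expected output energy of \Cref{alg:triangleFree} to the value of $\textup{SDP}^{13}$; by \eqref{eqn_sdpUBtoHMax} this value equals $\sum_e w_e(2+2h_e)$ and upper-bounds $\lambdaMax{H_G}$, so establishing that the ratio is at least $\zeta^*_\Theta(14/15, p)$ for every $p \in [0,1]$ proves the theorem after maximizing over $p$. Since the algorithm returns the better of $\ket{\xi}\bra{\xi}$ and $\rho_2$, its expected energy dominates the convex combination $p\,\mathbb{E}\bra{\xi}H_G\ket{\xi} + (1-p)\Tr{\rho_2 H_G}$ for any $p$. I will bound the two terms edge-by-edge and then apply the elementary inequality $\sum_e a_e / \sum_e b_e \geq \min_e a_e/b_e$ (valid because $2+2h_e > 0$ by \Cref{remark_divByZero}) to reduce to a one-dimensional minimization.

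For the product-state term I invoke \Cref{lemma_expectedAdditionToObj} edge-by-edge, using the identity $\mathbb{E}[E_{ij}] = q(h_{ij})/2$ (from the Gaussian-rounding analysis that underlies $q$) and $\sin\theta_{ij} = \sqrt{\Theta(h_{ij}^+)}$. The key analytic step is to lower bound the products $A_{ij}, B_{ij}$: monogamy of entanglement \eqref{eqn_monogamyStar} gives $h_{ij}^+ + \sum_{k \in N(i)\setminus\{j\}} h_{ik}^+ \leq 1$, so \Cref{lemma_thetaProdLB} applied to $\Theta \in \mathcal{A}$ yields $A_{ij}^2 = \prod_{k \in N(i)\setminus\{j\}}(1-\Theta(h_{ik}^+)) \geq 1-\Theta(1-h_{ij}^+)$ and symmetrically for $B_{ij}^2$. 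Plugging these together with \eqref{eqn_angleExpectedVal} into \Cref{lemma_expectedAdditionToObj} and summing produces a sum of the form $\sum_e w_e(\cdot) + (1 - 2/\pi)\sum_e \widetilde w_e \mathcal{M}_e$, with $\widetilde w$ as in \eqref{eqn_wTildeDef} and $\mathcal M$ the matching of \Cref{step_secondMatching}. Because the algorithm runs at level $k=13$, \Cref{corr_simpleCons} combined with \Cref{lemma_5VertexBound} implies that $\tfrac{14}{15}h^+$ lies in the matching polytope; together with optimality of $\mathcal M$ for the weights $\widetilde w$, this gives $\sum_e \widetilde w_e \mathcal M_e \geq \tfrac{14}{15}\sum_e \widetilde w_e h_e^+$. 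Substituting back assembles the terms into exactly $\sum_e w_e \beta_\Theta(h_e, 14/15)$.

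For $\rho_2$, the structure of the matching state in \Cref{step_rhoTwoDef} of \Cref{alg:approxLee} gives $\Tr{\rho_2 H_G} = \sum_e w_e + 3\sum_e w_e \mathcal M'_e$ for the matching $\mathcal M'$ computed in \Cref{step_triRho2}; reapplying the same matching-polytope containment with the original weights $w$ yields $\Tr{\rho_2 H_G} \geq \sum_e w_e(1 + 3\cdot \tfrac{14}{15}h_e^+)$, matching the second summand in the numerator of $\zeta_\Theta$. Forming the convex combination with weights $p$ and $1-p$, dividing by $\sum_e w_e(2+2h_e)$, and invoking the $\min$-over-ratio inequality reduces the target to $\min_{x \in (-1,1]} \zeta_\Theta(x, 14/15, p) = \zeta^*_\Theta(14/15, p)$, and maximizing over $p$ finishes the argument. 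The main obstacle I foresee is carefully verifying that the joint expectation in \Cref{lemma_expectedAdditionToObj} over the Gaussian rounding (controlling $E_{ij}$) and the random angles $\varphi_i$ (controlling $\gamma'_{ij}$) combines cleanly with the deterministic quantities $A_{ij}, B_{ij}, \theta_{ij}$, so that the factor $14/15$ can be simultaneously extracted in both the product-state and matching-state bounds.
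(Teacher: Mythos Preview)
Your proposal is correct and follows essentially the same argument as the paper: bound $A_{ij},B_{ij}$ via \Cref{lemma_thetaProdLB} together with monogamy \eqref{eqn_monogamyStar}, plug into \Cref{lemma_expectedAdditionToObj} with $\mathbb{E}[E_{ij}]=q(h_{ij})/2$ and \eqref{eqn_angleExpectedVal}, use matching-polytope containment at level $k=13$ (\Cref{corr_simpleCons}, \Cref{lemma_5VertexBound}) to replace $\sum_e \widetilde w_e \mathcal M_e$ by $\tfrac{14}{15}\sum_e \widetilde w_e h_e^+$, recombine into $\beta_\Theta(h_e,14/15)$, take the convex combination with the $\rho_2$ bound, and pass to the edgewise minimum. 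Your stated obstacle is not one: $A_{ij},B_{ij},\theta_{ij},\widetilde w_e$ are deterministic functions of the SDP solution, so the expectations over the Gaussian rounding and over the $\varphi_i$ factor exactly as in \Cref{lemma_expectedAdditionToObj}, and the factor $14/15$ is a single deterministic property of $(h_e^+)_{e\in E}$ applied independently in the two matching bounds.
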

\begin{proof}
Let $e = \{i,j\}$.
Using \Cref{lemma_thetaProdLB}, we find, for $A_{ij}$ as in \eqref{eqn_edgeTermProds},
\begin{align}
    \label{eqn_boundProdTheta2}
    A_{ij}=\prod_{k \in N(i) \setminus \{j\}} \cos{\theta_{i k}} = \prod_{k \in N(i) \setminus \{j\}} \sqrt{1- \Theta(h^+_{i k})} \geq \sqrt{1 - \Theta \mleft( 1-h^+_e \mright)},
\end{align}
and similarly $B_{ij} \geq \sqrt{1 - \Theta \mleft( 1-h^+_e \mright)}$. Indeed, \Cref{lemma_thetaProdLB} applies here since the values $(h^+_{ik})_{k \in N(i)}$ satisfy $h^+_{ik} \geq 0$ for all $k \in N(i)$, and $h^+_{ij}+\sum_{k \in N(i) \setminus \{j \}} h^+_{ik} \leq 1$, due to \eqref{eqn_monogamyStar}. 

We substitute \eqref{eqn_boundProdTheta2} into the expression of \Cref{lemma_expectedAdditionToObj} (note that $\sin{\theta_{e}} = \sqrt{\Theta \mleft(h^+_e \mright)}$). Additionally, we substitute $\mathbb{E}\left(E_{ij} \right) = q(h_e)/2$ \cite[Lem.~2.1]{BrietGrothendieck}, for $q$ as in \eqref{eqn_hyperGeoQ} (see also \cite[Lem.~13]{king2023improved}). This yields
\begin{align}
    \mathbb{E}\bra{\xi} H_e \ket{\xi} &\geq  \frac{q(h_e)}{2} \left( 1 + \left( 1 - \Theta \mleft( 1-h^+_e \mright) \right) + 2 \mathbb{E}\left[ \sin \gamma'_{e} \right] \sqrt{\Theta \mleft(h^+_e \mright)} \sqrt{1 - \Theta\left(1-h^+_{e}\right)} \right) \nonumber \\
    &= q(h_e) \left( 1 - \frac{\Theta\left(1-h^+_{e}\right)}{2} + \left( \frac{2}{\pi} + \frac{\pi - 2}{\pi } \mathcal{M}_e \right)  \sqrt{\Theta \mleft(h^+_e \mright) \left( 1 - \Theta \mleft( 1-h^+_e \mright) \right)} \right) \nonumber \\
    & = q(h_e) \left( 1 - \frac{\Theta \mleft( 1-h^+_e \mright)}{2} + \frac{2}{\pi}  \sqrt{\Theta \mleft(h^+_e \mright) \left( 1 - \Theta \mleft( 1-h^+_e \mright) \right)} \right) \nonumber \\
    & \hphantom{\, \, =} + \frac{\pi - 2}{\pi} \mathcal{M}_e \, q(h_e) \sqrt{\Theta \mleft(h^+_e \mright) \left( 1 - \Theta \mleft( 1-h^+_e \mright) \right)} \nonumber \\
    &= \beta_{\Theta}(h_e,0) + \frac{\pi - 2}{ \pi }  \frac{\widetilde{w}_e}{w_e} \mathcal{M}_e. \label{eqn_expressionEH}
\end{align}
The first equality is due to \eqref{eqn_angleExpectedVal}, for $\mathcal{M}$ the matching computed in \Cref{step_secondMatching} of \Cref{alg:triangleFree}. The third equality is due to the definitions of $\widetilde{w}$ and $\beta_\Theta$, see \eqref{eqn_wTildeDef} and \eqref{eqn_betaDef} respectively. By combining \eqref{eqn_hamiltonDef} and \eqref{eqn_expressionEH}, we have
\begin{align}
    \label{eqn_lemmaElowerBound}
    \mathbb{E}\bra{\xi} H_G \ket{\xi} &=  \sum_{e \in E } w_e \mathbb{E}\bra{\xi} H_e \ket{\xi} \geq \sum_{e \in E } \left( w_e \beta_{\Theta}(h_e,0) + \frac{\pi - 2}{ \pi }  \widetilde{w}_e \mathcal{M}_e \right).
\end{align}
Here, $\mathcal{M}$ corresponds to a maximum weight matching on the graph with positive edge weights $\widetilde{w}$. By \Cref{corr_simpleCons} and \Cref{lemma_5VertexBound}, $(\mu h^+_e)_{e \in E}$ is contained in the matching polytope for $\mu = 14/15$. Therefore, $\sum_{e \in E} \widetilde{w}_e \mathcal{M}_e \geq \mu \sum_{e \in E} \widetilde{w}_e h^+_e$, which we substitute in \eqref{eqn_lemmaElowerBound} and exploit \eqref{eqn_wTildeDef} to obtain
\begin{align}
    \mathbb{E}\bra{\xi} H_G \ket{\xi} & \geq \sum_{e \in E } \left( w_e \beta_{\Theta}(h_e,0) + \frac{\pi - 2}{ \pi }  \widetilde{w}_e \mu h_e^+\right) \nonumber \\
    &= \sum_{e \in E } \left( w_e \beta_{\Theta}(h_e,0) + \frac{\pi - 2}{ \pi }  w_e \, q(h_e) \sqrt{\Theta \mleft(h^+_e\mright) \left(1- \Theta\left(1-h^+_e\right)\right)} \mu h_e^+\right) \nonumber \\
    &= \sum_{e \in E } w_e \Bigg[ q(h_e) \left(1 - \frac{\Theta(1-h^+_e) }{2} +  \frac{2}{\pi}  \sqrt{\Theta(h^+_e) (1- \Theta(1-h^+_e))} \right) \nonumber \\
    & \phantom{= \sum_{e \in E} w_e \Bigg[} + \mu \frac{\pi - 2}{ \pi }  h^+_e \, q(h_e) \sqrt{\Theta \mleft(h^+_e\mright) \left(1- \Theta\left(1-h^+_e\right)\right)} \Bigg] \nonumber \\
    &= \sum_{e \in E} w_e q(h_e) \left(  1 - \frac{\Theta(1-h^+_e) }{2} + \left( \frac{2}{\pi}  + \mu \frac{\pi-2}{\pi} h^+_e \right) \sqrt{\Theta(h^+_e) (1- \Theta(1-h^+_e))}  \right) \nonumber \\
    &= \sum_{e \in E} w_e  \beta_{\Theta}\left(h_e, \mu \right).
    \label{eqn_boundExi}
\end{align}
Here, the second equality follows from substituting the definition of $\beta_{\Theta}(h_e,0)$, see \eqref{eqn_betaDef} (and factoring $w_e$). The fourth equality is by the definition of $\beta_{\Theta}(h_e,\mu)$.

As for $\rho_2$, one can show that $\Tr{ \rho_2 H_G}  \geq \sum_{e \in E} w_e \left(1 + 3 \mu h^+_e \right)$, see \cite[Eq.~9]{lee2024improved}. The expected energy attained by \Cref{alg:triangleFree} satisfies
\begin{align*}
    \mathbb{E} \max \left\{ \bra{\xi}H_G \ket{\xi}, \Tr{\rho_2 H_G} \right\} &\geq \max_{p \in [0,1]} \left[ p \mathbb{E} \left[ \bra{\xi}H_G \ket{\xi} \right] +(1-p) \Tr{\rho_2 H_G} \right]\\
    &\geq \max_{p \in [0,1]} \sum_{e \in E} w_e \left( p \beta_{\Theta}(h_e, \mu) +(1-p)\left(1+3 \mu h^+_e \right) \right) \\
    &\geq \max_{p \in [0,1]} \zeta^*_{\Theta}(\mu,p) \sum_{e \in E} w_e (2+2h_e) \geq \max_{p \in [0,1]} \zeta^*_{\Theta}(\mu,p) \lambdaMax{H_G}.
\end{align*}
We have used \eqref{eqn_boundExi} for the second inequality. The fourth inequality follows from the fact that \ref{eqn_sdpRelax} provides an upper bound on $\lambdaMax{H_G}$, as explained in \Cref{section_SDP_defs}. Note that $h_e \in [-1,1]$, see \Cref{def_sdpValues}, while the minimization in $\zeta^*_{\Theta}(\mu,p)$ is done over $x \in (-1,1]$. This distinction is made to avoid division by $0$ and can be done without loss of generality, see \Cref{remark_divByZero}. \qedhere

\end{proof}

It remains to find a function $\Theta \in \mathcal{A}$ for which $\max_{p \in [0,1]} \zeta^*_{\Theta}(14/15,p)$ is high.

\subsection{Finding candidate functions \texorpdfstring{$\Theta$}{Theta}}
\label{section_findingCandidateTheta}
The set $\mathcal{A}$ is difficult to characterize in general, but is easily shown to contain the following classes of functions.
\begin{lemma}
    \label{lemma_linFunctionInOmega}
    For $R \geq 0$, $1-e^{-Rx} \in \mathcal{A}$. For $R \in [0,1]$, $Rx \in \mathcal{A}$ and $Rx^2 \in \mathcal{A}$.
\end{lemma}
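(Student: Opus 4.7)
The plan is to verify, for each of the three function classes in turn, the four defining properties of $\mathcal{A}$: namely $\Theta(0)=0$, $\Theta(1)\leq 1$, monotonicity on the relevant domain, and the minimization identity
$\min_{x \in [0,c]}(1-\Theta(x))(1-\Theta(c-x)) = 1-\Theta(c)$ for every $c \in [0,1]$.
The first three properties are immediate in each case (for $1 - e^{-Rx}$: $\Theta(0)=0$, $\Theta(1)=1-e^{-R}\leq 1$, $\Theta' = Re^{-Rx} \geq 0$; for $Rx$ and $Rx^2$ with $R \in [0,1]$: $\Theta(0)=0$, $\Theta(1) = R \leq 1$, and monotonicity on the effective domain $[0,1]$ is clear), so I focus the proof on the minimization identity.

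For the exponential case $\Theta(x) = 1 - e^{-Rx}$ with $R \geq 0$, the calculation is essentially trivial: the product factorizes as
\begin{equation*}
    (1-\Theta(x))(1-\Theta(c-x)) = e^{-Rx}\, e^{-R(c-x)} = e^{-Rc} = 1-\Theta(c),
\end{equation*}
which is independent of $x$, so the minimum over $x \in [0,c]$ equals $1-\Theta(c)$.

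For the linear case $\Theta(x) = Rx$ with $R \in [0,1]$, I would expand
\begin{equation*}
    (1-Rx)(1-R(c-x)) = 1 - Rc + R^2 x(c-x),
\end{equation*}
and note that the correction term $R^2 x(c-x)$ is nonnegative on $[0,c]$ (vanishing at the endpoints), so the minimum equals $1-Rc = 1-\Theta(c)$. For the quadratic case $\Theta(x) = Rx^2$ with $R \in [0,1]$, I would expand
\begin{equation*}
    (1-Rx^2)(1-R(c-x)^2) - (1-Rc^2) = R\bigl(c^2 - x^2 - (c-x)^2\bigr) + R^2 x^2(c-x)^2,
\end{equation*}
and use the elementary identity $c^2 - x^2 - (c-x)^2 = 2x(c-x)$, which is nonnegative on $[0,c]$. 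Both summands on the right are therefore nonnegative (vanishing at $x \in \{0,c\}$), so again the minimum equals $1 - Rc^2 = 1-\Theta(c)$.

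There is no real obstacle here; the whole lemma is a direct verification once one observes the right algebraic rearrangement. The only point that might trip up a reader is interpreting "increasing function" for $\Theta(x) = Rx^2$, which is only monotone on $[0,\infty)$; since the set $\mathcal{A}$ is only ever applied to nonnegative arguments $h^+_e \in [0,1]$ and $1-h^+_e \in [0,1]$ in \Cref{alg:triangleFree} and \Cref{thm_ratioTriangleFree}, this is not a genuine issue.
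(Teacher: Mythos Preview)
Your proof is correct. For the exponential case it coincides with the paper's argument exactly. For the linear and quadratic cases you take a more elementary route than the paper: you expand the product directly and exhibit the difference $(1-\Theta(x))(1-\Theta(c-x)) - (1-\Theta(c))$ as a manifestly nonnegative expression (namely $R^2 x(c-x)$ in the linear case and $2Rx(c-x) + R^2 x^2(c-x)^2$ in the quadratic case), vanishing at the endpoints. The paper instead computes the second derivative $f_c''(x)$ in each case, argues it is nonpositive (which for $\Theta(x)=Rx^2$ requires a short sign analysis of $2R(c^2R - 2 + 6Rx(x-c))$ using $R,c\in[0,1]$ and $x\in[0,c]$), and then invokes concavity plus symmetry to conclude the minimum is at an endpoint. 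Your algebraic-identity approach is shorter and avoids calculus; the paper's concavity argument is slightly more uniform in style across the two cases but needs more bookkeeping for the quadratic. Your closing remark about the domain of monotonicity for $Rx^2$ is a fair caveat that the paper leaves implicit.
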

\begin{proof}
Observe first that if $\Theta$ is $x \mapsto 1-e^{-Rx}$ (for $R \geq 0)$, $x \mapsto Rx$, or $x \mapsto Rx^2$ (for $R \in [0,1]$), then $\Theta$ is increasing, satisfies $\Theta(0)=0$ and $\Theta(1) \leq 1$. It remains to show that any such function $\Theta$ satisfies $\min_{x \in [0,c]}(1-\Theta(x))(1-\Theta(c-x)) = 1-\Theta(c)$.

For the case $\Theta(x) = 1-e^{-Rx}$, we have
$\min_{x \in [0,c] } (1-\Theta(x))(1-\Theta(c-x)) = \min_{x \in [0,c] }  1-\Theta(c) = 1-\Theta(c)$. For the case $\Theta(x) = Rx$, we define $f_c(x) := (1-\Theta(x))(1-\Theta(c-x)) = (1-Rx)(1-R(c-x))$ and verify that $f_c^{''}(x)= -2R^2 \leq 0$. We conclude that $f_c(x)$ is concave on $[0,c]$. A concave function has its minimum attained at a boundary point, and since $f_c$ is symmetric around $c/2$, its minimum equals $f_c(0) = f_c(c) = 1 - \Theta(c)$. For the case $\Theta(x) = Rx^2$, we again define $f_c(x) := (1-\Theta(x))(1-\Theta(c-x)) = (1-Rx^2)(1-R(c-x)^2)$. It follows that
\begin{align*}
    f''_c(x) = 2R(c^2 R -6cRx+ 6Rx^2-2) = 2R\left( c^2R - 2 +6R x \left( x-c \right) \right).
\end{align*}
Since $0 \leq x \leq c \leq 1$ and $R \in [0,1]$, it follows that $6Rx (x-c) \leq 0$. Moreover, $c,R \in [0,1]  \implies c^2R -2 \leq 0$. Therefore, $f''_c(x) \leq 0$. Hence, $f_c(x)$ is concave, which implies that $\min_{x \in [0,c]} f_c(x) = f_c(0)$.
\qedhere
\end{proof}

We choose $\Theta(x) = \optTheta{}$. With this choice, \Cref{alg:triangleFree} achieves an approximation ratio of at least 
\begin{align}
    \label{eqn_bestBoundTriangleFree}
     \max_{p \in [0,1]}\zeta^*_{\optTheta{}}(14/15,p) \geq \zeta^*_{\optTheta{}}(14/15,p^*) =
     \unroundedApproxOpt{},
\end{align}
where $p^* = \left( 2 \cdot \unroundedApproxOpt{} - 1 \right) / \left( \beta_{\optTheta{}}(0,14/15)-1 \right)$. This value of $p^*$ is chosen such that $\zeta_{\Theta}(0,14/15,p^*) = \unroundedApproxOpt{}$, see \eqref{eqn_zetaFunctDef}. Let $\zeta(x) := \zeta_{\optTheta{}}(x,14/15,p^*)$, which is the function plotted in \Cref{fig:zetaFunction}. We briefly elaborate on how one can prove the statement $\zeta^*_{\optTheta{}}(14/15,p^*) = \min_{x \in (-1,1]} \zeta(x) =      \unroundedApproxOpt{}$ in \eqref{eqn_bestBoundTriangleFree}. It can be shown (details omitted) that $\zeta$ is decreasing for $x \in (-1,0]$ and increasing for $x \in [0,0.035]$, so that $\min_{x \in (-1,0.035]} \zeta(x) = \zeta(0) = \unroundedApproxOpt{}$. For $x \in [0.035,1]$, it is possible to derive a lower bound on $\zeta'(x)$, and evaluate $\zeta$ on a fine grid in the interval $[0.035,1]$. By combining the lower bound on $\zeta'$ with the Mean Value theorem, one can prove that $\zeta(x) \geq \unroundedApproxOpt{}$ for any $x \in [0.035,1]$.

\begin{figure}
    \centering
\begin{tikzpicture}[trim axis left, trim axis right]
\begin{axis}[
    axis lines=left,
    ylabel style={rotate=-90},
    xlabel={$x$},
    ylabel={$\zeta_\Theta(x,14/15,p^*) \quad$},
    xmin=-0.1, xmax=0.8,
    ymin=0.61, ymax=0.64,
    ytick={0.61,0.62,0.63,0.64}, 
    legend pos=north west,
]

\addplot[red, mark=*, mark options={scale=1, fill=red}] coordinates {(0, 0.61383)};

\addplot[
    color=black,
    thick
    ]
    coordinates {
   (-0.20000,0.70423)(-0.19960,0.70401)(-0.19920,0.70378)(-0.19880,0.70355)(-0.19840,0.70333)(-0.19800,0.70310)(-0.19760,0.70287)(-0.19719,0.70265)(-0.19679,0.70242)(-0.19639,0.70219)(-0.19599,0.70197)(-0.19559,0.70174)(-0.19519,0.70152)(-0.19479,0.70129)(-0.19439,0.70107)(-0.19399,0.70084)(-0.19359,0.70062)(-0.19319,0.70040)(-0.19279,0.70017)(-0.19238,0.69995)(-0.19198,0.69973)(-0.19158,0.69950)(-0.19118,0.69928)(-0.19078,0.69906)(-0.19038,0.69883)(-0.18998,0.69861)(-0.18958,0.69839)(-0.18918,0.69817)(-0.18878,0.69795)(-0.18838,0.69773)(-0.18798,0.69751)(-0.18758,0.69728)(-0.18717,0.69706)(-0.18677,0.69684)(-0.18637,0.69662)(-0.18597,0.69640)(-0.18557,0.69618)(-0.18517,0.69596)(-0.18477,0.69574)(-0.18437,0.69553)(-0.18397,0.69531)(-0.18357,0.69509)(-0.18317,0.69487)(-0.18277,0.69465)(-0.18236,0.69443)(-0.18196,0.69422)(-0.18156,0.69400)(-0.18116,0.69378)(-0.18076,0.69356)(-0.18036,0.69335)(-0.17996,0.69313)(-0.17956,0.69291)(-0.17916,0.69270)(-0.17876,0.69248)(-0.17836,0.69227)(-0.17796,0.69205)(-0.17756,0.69183)(-0.17715,0.69162)(-0.17675,0.69140)(-0.17635,0.69119)(-0.17595,0.69098)(-0.17555,0.69076)(-0.17515,0.69055)(-0.17475,0.69033)(-0.17435,0.69012)(-0.17395,0.68991)(-0.17355,0.68969)(-0.17315,0.68948)(-0.17275,0.68927)(-0.17234,0.68905)(-0.17194,0.68884)(-0.17154,0.68863)(-0.17114,0.68842)(-0.17074,0.68821)(-0.17034,0.68799)(-0.16994,0.68778)(-0.16954,0.68757)(-0.16914,0.68736)(-0.16874,0.68715)(-0.16834,0.68694)(-0.16794,0.68673)(-0.16754,0.68652)(-0.16713,0.68631)(-0.16673,0.68610)(-0.16633,0.68589)(-0.16593,0.68568)(-0.16553,0.68547)(-0.16513,0.68526)(-0.16473,0.68505)(-0.16433,0.68485)(-0.16393,0.68464)(-0.16353,0.68443)(-0.16313,0.68422)(-0.16273,0.68401)(-0.16232,0.68381)(-0.16192,0.68360)(-0.16152,0.68339)(-0.16112,0.68319)(-0.16072,0.68298)(-0.16032,0.68277)(-0.15992,0.68257)(-0.15952,0.68236)(-0.15912,0.68215)(-0.15872,0.68195)(-0.15832,0.68174)(-0.15792,0.68154)(-0.15752,0.68133)(-0.15711,0.68113)(-0.15671,0.68092)(-0.15631,0.68072)(-0.15591,0.68051)(-0.15551,0.68031)(-0.15511,0.68011)(-0.15471,0.67990)(-0.15431,0.67970)(-0.15391,0.67950)(-0.15351,0.67929)(-0.15311,0.67909)(-0.15271,0.67889)(-0.15230,0.67869)(-0.15190,0.67848)(-0.15150,0.67828)(-0.15110,0.67808)(-0.15070,0.67788)(-0.15030,0.67768)(-0.14990,0.67748)(-0.14950,0.67727)(-0.14910,0.67707)(-0.14870,0.67687)(-0.14830,0.67667)(-0.14790,0.67647)(-0.14749,0.67627)(-0.14709,0.67607)(-0.14669,0.67587)(-0.14629,0.67567)(-0.14589,0.67547)(-0.14549,0.67527)(-0.14509,0.67507)(-0.14469,0.67488)(-0.14429,0.67468)(-0.14389,0.67448)(-0.14349,0.67428)(-0.14309,0.67408)(-0.14269,0.67388)(-0.14228,0.67369)(-0.14188,0.67349)(-0.14148,0.67329)(-0.14108,0.67310)(-0.14068,0.67290)(-0.14028,0.67270)(-0.13988,0.67251)(-0.13948,0.67231)(-0.13908,0.67211)(-0.13868,0.67192)(-0.13828,0.67172)(-0.13788,0.67153)(-0.13747,0.67133)(-0.13707,0.67114)(-0.13667,0.67094)(-0.13627,0.67075)(-0.13587,0.67055)(-0.13547,0.67036)(-0.13507,0.67016)(-0.13467,0.66997)(-0.13427,0.66977)(-0.13387,0.66958)(-0.13347,0.66939)(-0.13307,0.66919)(-0.13267,0.66900)(-0.13226,0.66881)(-0.13186,0.66861)(-0.13146,0.66842)(-0.13106,0.66823)(-0.13066,0.66804)(-0.13026,0.66784)(-0.12986,0.66765)(-0.12946,0.66746)(-0.12906,0.66727)(-0.12866,0.66708)(-0.12826,0.66689)(-0.12786,0.66670)(-0.12745,0.66651)(-0.12705,0.66631)(-0.12665,0.66612)(-0.12625,0.66593)(-0.12585,0.66574)(-0.12545,0.66555)(-0.12505,0.66536)(-0.12465,0.66517)(-0.12425,0.66499)(-0.12385,0.66480)(-0.12345,0.66461)(-0.12305,0.66442)(-0.12265,0.66423)(-0.12224,0.66404)(-0.12184,0.66385)(-0.12144,0.66366)(-0.12104,0.66348)(-0.12064,0.66329)(-0.12024,0.66310)(-0.11984,0.66291)(-0.11944,0.66273)(-0.11904,0.66254)(-0.11864,0.66235)(-0.11824,0.66217)(-0.11784,0.66198)(-0.11743,0.66179)(-0.11703,0.66161)(-0.11663,0.66142)(-0.11623,0.66124)(-0.11583,0.66105)(-0.11543,0.66086)(-0.11503,0.66068)(-0.11463,0.66049)(-0.11423,0.66031)(-0.11383,0.66012)(-0.11343,0.65994)(-0.11303,0.65975)(-0.11263,0.65957)(-0.11222,0.65939)(-0.11182,0.65920)(-0.11142,0.65902)(-0.11102,0.65883)(-0.11062,0.65865)(-0.11022,0.65847)(-0.10982,0.65829)(-0.10942,0.65810)(-0.10902,0.65792)(-0.10862,0.65774)(-0.10822,0.65755)(-0.10782,0.65737)(-0.10741,0.65719)(-0.10701,0.65701)(-0.10661,0.65683)(-0.10621,0.65664)(-0.10581,0.65646)(-0.10541,0.65628)(-0.10501,0.65610)(-0.10461,0.65592)(-0.10421,0.65574)(-0.10381,0.65556)(-0.10341,0.65538)(-0.10301,0.65520)(-0.10261,0.65502)(-0.10220,0.65484)(-0.10180,0.65466)(-0.10140,0.65448)(-0.10100,0.65430)(-0.10060,0.65412)(-0.10020,0.65394)(-0.09980,0.65376)(-0.09940,0.65358)(-0.09900,0.65340)(-0.09860,0.65323)(-0.09820,0.65305)(-0.09780,0.65287)(-0.09739,0.65269)(-0.09699,0.65251)(-0.09659,0.65234)(-0.09619,0.65216)(-0.09579,0.65198)(-0.09539,0.65180)(-0.09499,0.65163)(-0.09459,0.65145)(-0.09419,0.65127)(-0.09379,0.65110)(-0.09339,0.65092)(-0.09299,0.65075)(-0.09259,0.65057)(-0.09218,0.65039)(-0.09178,0.65022)(-0.09138,0.65004)(-0.09098,0.64987)(-0.09058,0.64969)(-0.09018,0.64952)(-0.08978,0.64934)(-0.08938,0.64917)(-0.08898,0.64899)(-0.08858,0.64882)(-0.08818,0.64864)(-0.08778,0.64847)(-0.08737,0.64830)(-0.08697,0.64812)(-0.08657,0.64795)(-0.08617,0.64777)(-0.08577,0.64760)(-0.08537,0.64743)(-0.08497,0.64726)(-0.08457,0.64708)(-0.08417,0.64691)(-0.08377,0.64674)(-0.08337,0.64657)(-0.08297,0.64639)(-0.08257,0.64622)(-0.08216,0.64605)(-0.08176,0.64588)(-0.08136,0.64571)(-0.08096,0.64553)(-0.08056,0.64536)(-0.08016,0.64519)(-0.07976,0.64502)(-0.07936,0.64485)(-0.07896,0.64468)(-0.07856,0.64451)(-0.07816,0.64434)(-0.07776,0.64417)(-0.07735,0.64400)(-0.07695,0.64383)(-0.07655,0.64366)(-0.07615,0.64349)(-0.07575,0.64332)(-0.07535,0.64315)(-0.07495,0.64298)(-0.07455,0.64281)(-0.07415,0.64264)(-0.07375,0.64247)(-0.07335,0.64231)(-0.07295,0.64214)(-0.07255,0.64197)(-0.07214,0.64180)(-0.07174,0.64163)(-0.07134,0.64147)(-0.07094,0.64130)(-0.07054,0.64113)(-0.07014,0.64096)(-0.06974,0.64080)(-0.06934,0.64063)(-0.06894,0.64046)(-0.06854,0.64030)(-0.06814,0.64013)(-0.06774,0.63996)(-0.06733,0.63980)(-0.06693,0.63963)(-0.06653,0.63946)(-0.06613,0.63930)(-0.06573,0.63913)(-0.06533,0.63897)(-0.06493,0.63880)(-0.06453,0.63864)(-0.06413,0.63847)(-0.06373,0.63831)(-0.06333,0.63814)(-0.06293,0.63798)(-0.06253,0.63781)(-0.06212,0.63765)(-0.06172,0.63748)(-0.06132,0.63732)(-0.06092,0.63716)(-0.06052,0.63699)(-0.06012,0.63683)(-0.05972,0.63667)(-0.05932,0.63650)(-0.05892,0.63634)(-0.05852,0.63618)(-0.05812,0.63601)(-0.05772,0.63585)(-0.05731,0.63569)(-0.05691,0.63552)(-0.05651,0.63536)(-0.05611,0.63520)(-0.05571,0.63504)(-0.05531,0.63488)(-0.05491,0.63471)(-0.05451,0.63455)(-0.05411,0.63439)(-0.05371,0.63423)(-0.05331,0.63407)(-0.05291,0.63391)(-0.05251,0.63375)(-0.05210,0.63359)(-0.05170,0.63343)(-0.05130,0.63327)(-0.05090,0.63310)(-0.05050,0.63294)(-0.05010,0.63278)(-0.04970,0.63262)(-0.04930,0.63246)(-0.04890,0.63231)(-0.04850,0.63215)(-0.04810,0.63199)(-0.04770,0.63183)(-0.04729,0.63167)(-0.04689,0.63151)(-0.04649,0.63135)(-0.04609,0.63119)(-0.04569,0.63103)(-0.04529,0.63087)(-0.04489,0.63072)(-0.04449,0.63056)(-0.04409,0.63040)(-0.04369,0.63024)(-0.04329,0.63008)(-0.04289,0.62993)(-0.04248,0.62977)(-0.04208,0.62961)(-0.04168,0.62946)(-0.04128,0.62930)(-0.04088,0.62914)(-0.04048,0.62898)(-0.04008,0.62883)(-0.03968,0.62867)(-0.03928,0.62852)(-0.03888,0.62836)(-0.03848,0.62820)(-0.03808,0.62805)(-0.03768,0.62789)(-0.03727,0.62774)(-0.03687,0.62758)(-0.03647,0.62742)(-0.03607,0.62727)(-0.03567,0.62711)(-0.03527,0.62696)(-0.03487,0.62680)(-0.03447,0.62665)(-0.03407,0.62650)(-0.03367,0.62634)(-0.03327,0.62619)(-0.03287,0.62603)(-0.03246,0.62588)(-0.03206,0.62572)(-0.03166,0.62557)(-0.03126,0.62542)(-0.03086,0.62526)(-0.03046,0.62511)(-0.03006,0.62496)(-0.02966,0.62480)(-0.02926,0.62465)(-0.02886,0.62450)(-0.02846,0.62435)(-0.02806,0.62419)(-0.02766,0.62404)(-0.02725,0.62389)(-0.02685,0.62374)(-0.02645,0.62358)(-0.02605,0.62343)(-0.02565,0.62328)(-0.02525,0.62313)(-0.02485,0.62298)(-0.02445,0.62283)(-0.02405,0.62267)(-0.02365,0.62252)(-0.02325,0.62237)(-0.02285,0.62222)(-0.02244,0.62207)(-0.02204,0.62192)(-0.02164,0.62177)(-0.02124,0.62162)(-0.02084,0.62147)(-0.02044,0.62132)(-0.02004,0.62117)(-0.01964,0.62102)(-0.01924,0.62087)(-0.01884,0.62072)(-0.01844,0.62057)(-0.01804,0.62042)(-0.01764,0.62027)(-0.01723,0.62012)(-0.01683,0.61997)(-0.01643,0.61982)(-0.01603,0.61968)(-0.01563,0.61953)(-0.01523,0.61938)(-0.01483,0.61923)(-0.01443,0.61908)(-0.01403,0.61893)(-0.01363,0.61879)(-0.01323,0.61864)(-0.01283,0.61849)(-0.01242,0.61834)(-0.01202,0.61820)(-0.01162,0.61805)(-0.01122,0.61790)(-0.01082,0.61775)(-0.01042,0.61761)(-0.01002,0.61746)(-0.00962,0.61731)(-0.00922,0.61717)(-0.00882,0.61702)(-0.00842,0.61687)(-0.00802,0.61673)(-0.00762,0.61658)(-0.00721,0.61644)(-0.00681,0.61629)(-0.00641,0.61614)(-0.00601,0.61600)(-0.00561,0.61585)(-0.00521,0.61571)(-0.00481,0.61556)(-0.00441,0.61542)(-0.00401,0.61527)(-0.00361,0.61513)(-0.00321,0.61498)(-0.00281,0.61484)(-0.00240,0.61469)(-0.00200,0.61455)(-0.00160,0.61441)(-0.00120,0.61426)(-0.00080,0.61412)(-0.00040,0.61397)(0.00000,0.61383)(0.00000,0.61383)(0.00100,0.61616)(0.00200,0.61701)(0.00300,0.61762)(0.00401,0.61810)(0.00501,0.61850)(0.00601,0.61885)(0.00701,0.61915)(0.00801,0.61942)(0.00901,0.61966)(0.01001,0.61988)(0.01101,0.62007)(0.01202,0.62026)(0.01302,0.62042)(0.01402,0.62058)(0.01502,0.62072)(0.01602,0.62085)(0.01702,0.62097)(0.01802,0.62109)(0.01902,0.62119)(0.02003,0.62129)(0.02103,0.62138)(0.02203,0.62147)(0.02303,0.62155)(0.02403,0.62162)(0.02503,0.62169)(0.02603,0.62176)(0.02703,0.62182)(0.02804,0.62188)(0.02904,0.62193)(0.03004,0.62198)(0.03104,0.62203)(0.03204,0.62207)(0.03304,0.62211)(0.03404,0.62214)(0.03504,0.62218)(0.03605,0.62221)(0.03705,0.62224)(0.03805,0.62227)(0.03905,0.62229)(0.04005,0.62231)(0.04105,0.62233)(0.04205,0.62235)(0.04305,0.62237)(0.04406,0.62238)(0.04506,0.62239)(0.04606,0.62240)(0.04706,0.62241)(0.04806,0.62242)(0.04906,0.62243)(0.05006,0.62243)(0.05106,0.62243)(0.05207,0.62244)(0.05307,0.62244)(0.05407,0.62244)(0.05507,0.62244)(0.05607,0.62243)(0.05707,0.62243)(0.05807,0.62242)(0.05907,0.62242)(0.06008,0.62241)(0.06108,0.62240)(0.06208,0.62239)(0.06308,0.62238)(0.06408,0.62237)(0.06508,0.62236)(0.06608,0.62235)(0.06708,0.62234)(0.06809,0.62232)(0.06909,0.62231)(0.07009,0.62229)(0.07109,0.62228)(0.07209,0.62226)(0.07309,0.62224)(0.07409,0.62222)(0.07509,0.62221)(0.07610,0.62219)(0.07710,0.62217)(0.07810,0.62215)(0.07910,0.62213)(0.08010,0.62210)(0.08110,0.62208)(0.08210,0.62206)(0.08310,0.62204)(0.08411,0.62201)(0.08511,0.62199)(0.08611,0.62197)(0.08711,0.62194)(0.08811,0.62192)(0.08911,0.62189)(0.09011,0.62187)(0.09111,0.62184)(0.09212,0.62181)(0.09312,0.62179)(0.09412,0.62176)(0.09512,0.62173)(0.09612,0.62170)(0.09712,0.62168)(0.09812,0.62165)(0.09912,0.62162)(0.10013,0.62159)(0.10113,0.62156)(0.10213,0.62153)(0.10313,0.62150)(0.10413,0.62147)(0.10513,0.62144)(0.10613,0.62141)(0.10713,0.62138)(0.10814,0.62135)(0.10914,0.62132)(0.11014,0.62129)(0.11114,0.62126)(0.11214,0.62123)(0.11314,0.62119)(0.11414,0.62116)(0.11514,0.62113)(0.11615,0.62110)(0.11715,0.62107)(0.11815,0.62103)(0.11915,0.62100)(0.12015,0.62097)(0.12115,0.62093)(0.12215,0.62090)(0.12315,0.62087)(0.12416,0.62083)(0.12516,0.62080)(0.12616,0.62077)(0.12716,0.62073)(0.12816,0.62070)(0.12916,0.62067)(0.13016,0.62063)(0.13116,0.62060)(0.13217,0.62056)(0.13317,0.62053)(0.13417,0.62050)(0.13517,0.62046)(0.13617,0.62043)(0.13717,0.62039)(0.13817,0.62036)(0.13917,0.62032)(0.14018,0.62029)(0.14118,0.62025)(0.14218,0.62022)(0.14318,0.62019)(0.14418,0.62015)(0.14518,0.62012)(0.14618,0.62008)(0.14718,0.62005)(0.14819,0.62001)(0.14919,0.61998)(0.15019,0.61994)(0.15119,0.61991)(0.15219,0.61987)(0.15319,0.61984)(0.15419,0.61980)(0.15519,0.61977)(0.15620,0.61973)(0.15720,0.61970)(0.15820,0.61966)(0.15920,0.61963)(0.16020,0.61959)(0.16120,0.61956)(0.16220,0.61952)(0.16320,0.61949)(0.16421,0.61945)(0.16521,0.61942)(0.16621,0.61938)(0.16721,0.61935)(0.16821,0.61931)(0.16921,0.61928)(0.17021,0.61925)(0.17121,0.61921)(0.17222,0.61918)(0.17322,0.61914)(0.17422,0.61911)(0.17522,0.61907)(0.17622,0.61904)(0.17722,0.61900)(0.17822,0.61897)(0.17922,0.61894)(0.18023,0.61890)(0.18123,0.61887)(0.18223,0.61883)(0.18323,0.61880)(0.18423,0.61876)(0.18523,0.61873)(0.18623,0.61870)(0.18723,0.61866)(0.18824,0.61863)(0.18924,0.61860)(0.19024,0.61856)(0.19124,0.61853)(0.19224,0.61849)(0.19324,0.61846)(0.19424,0.61843)(0.19524,0.61839)(0.19625,0.61836)(0.19725,0.61833)(0.19825,0.61829)(0.19925,0.61826)(0.20025,0.61823)(0.20125,0.61820)(0.20225,0.61816)(0.20325,0.61813)(0.20426,0.61810)(0.20526,0.61807)(0.20626,0.61803)(0.20726,0.61800)(0.20826,0.61797)(0.20926,0.61794)(0.21026,0.61790)(0.21126,0.61787)(0.21227,0.61784)(0.21327,0.61781)(0.21427,0.61778)(0.21527,0.61774)(0.21627,0.61771)(0.21727,0.61768)(0.21827,0.61765)(0.21927,0.61762)(0.22028,0.61759)(0.22128,0.61756)(0.22228,0.61753)(0.22328,0.61749)(0.22428,0.61746)(0.22528,0.61743)(0.22628,0.61740)(0.22728,0.61737)(0.22829,0.61734)(0.22929,0.61731)(0.23029,0.61728)(0.23129,0.61725)(0.23229,0.61722)(0.23329,0.61719)(0.23429,0.61716)(0.23529,0.61713)(0.23630,0.61710)(0.23730,0.61707)(0.23830,0.61704)(0.23930,0.61701)(0.24030,0.61698)(0.24130,0.61696)(0.24230,0.61693)(0.24330,0.61690)(0.24431,0.61687)(0.24531,0.61684)(0.24631,0.61681)(0.24731,0.61678)(0.24831,0.61676)(0.24931,0.61673)(0.25031,0.61670)(0.25131,0.61667)(0.25232,0.61664)(0.25332,0.61662)(0.25432,0.61659)(0.25532,0.61656)(0.25632,0.61653)(0.25732,0.61651)(0.25832,0.61648)(0.25932,0.61645)(0.26033,0.61643)(0.26133,0.61640)(0.26233,0.61637)(0.26333,0.61635)(0.26433,0.61632)(0.26533,0.61629)(0.26633,0.61627)(0.26733,0.61624)(0.26834,0.61622)(0.26934,0.61619)(0.27034,0.61617)(0.27134,0.61614)(0.27234,0.61611)(0.27334,0.61609)(0.27434,0.61606)(0.27534,0.61604)(0.27635,0.61601)(0.27735,0.61599)(0.27835,0.61597)(0.27935,0.61594)(0.28035,0.61592)(0.28135,0.61589)(0.28235,0.61587)(0.28335,0.61585)(0.28436,0.61582)(0.28536,0.61580)(0.28636,0.61577)(0.28736,0.61575)(0.28836,0.61573)(0.28936,0.61570)(0.29036,0.61568)(0.29136,0.61566)(0.29237,0.61564)(0.29337,0.61561)(0.29437,0.61559)(0.29537,0.61557)(0.29637,0.61555)(0.29737,0.61552)(0.29837,0.61550)(0.29937,0.61548)(0.30038,0.61546)(0.30138,0.61544)(0.30238,0.61542)(0.30338,0.61540)(0.30438,0.61537)(0.30538,0.61535)(0.30638,0.61533)(0.30738,0.61531)(0.30839,0.61529)(0.30939,0.61527)(0.31039,0.61525)(0.31139,0.61523)(0.31239,0.61521)(0.31339,0.61519)(0.31439,0.61517)(0.31539,0.61515)(0.31640,0.61513)(0.31740,0.61511)(0.31840,0.61509)(0.31940,0.61507)(0.32040,0.61506)(0.32140,0.61504)(0.32240,0.61502)(0.32340,0.61500)(0.32441,0.61498)(0.32541,0.61496)(0.32641,0.61495)(0.32741,0.61493)(0.32841,0.61491)(0.32941,0.61489)(0.33041,0.61488)(0.33141,0.61486)(0.33242,0.61484)(0.33342,0.61482)(0.33442,0.61481)(0.33542,0.61479)(0.33642,0.61477)(0.33742,0.61476)(0.33842,0.61474)(0.33942,0.61472)(0.34043,0.61471)(0.34143,0.61469)(0.34243,0.61468)(0.34343,0.61466)(0.34443,0.61465)(0.34543,0.61463)(0.34643,0.61462)(0.34743,0.61460)(0.34844,0.61459)(0.34944,0.61457)(0.35044,0.61456)(0.35144,0.61454)(0.35244,0.61453)(0.35344,0.61451)(0.35444,0.61450)(0.35544,0.61449)(0.35645,0.61447)(0.35745,0.61446)(0.35845,0.61444)(0.35945,0.61443)(0.36045,0.61442)(0.36145,0.61441)(0.36245,0.61439)(0.36345,0.61438)(0.36446,0.61437)(0.36546,0.61435)(0.36646,0.61434)(0.36746,0.61433)(0.36846,0.61432)(0.36946,0.61431)(0.37046,0.61429)(0.37146,0.61428)(0.37247,0.61427)(0.37347,0.61426)(0.37447,0.61425)(0.37547,0.61424)(0.37647,0.61423)(0.37747,0.61422)(0.37847,0.61421)(0.37947,0.61420)(0.38048,0.61419)(0.38148,0.61418)(0.38248,0.61417)(0.38348,0.61416)(0.38448,0.61415)(0.38548,0.61414)(0.38648,0.61413)(0.38748,0.61412)(0.38849,0.61411)(0.38949,0.61410)(0.39049,0.61409)(0.39149,0.61408)(0.39249,0.61408)(0.39349,0.61407)(0.39449,0.61406)(0.39549,0.61405)(0.39650,0.61404)(0.39750,0.61404)(0.39850,0.61403)(0.39950,0.61402)(0.40050,0.61401)(0.40150,0.61401)(0.40250,0.61400)(0.40350,0.61399)(0.40451,0.61399)(0.40551,0.61398)(0.40651,0.61397)(0.40751,0.61397)(0.40851,0.61396)(0.40951,0.61395)(0.41051,0.61395)(0.41151,0.61394)(0.41252,0.61394)(0.41352,0.61393)(0.41452,0.61393)(0.41552,0.61392)(0.41652,0.61392)(0.41752,0.61391)(0.41852,0.61391)(0.41952,0.61390)(0.42053,0.61390)(0.42153,0.61390)(0.42253,0.61389)(0.42353,0.61389)(0.42453,0.61388)(0.42553,0.61388)(0.42653,0.61388)(0.42753,0.61387)(0.42854,0.61387)(0.42954,0.61387)(0.43054,0.61387)(0.43154,0.61386)(0.43254,0.61386)(0.43354,0.61386)(0.43454,0.61386)(0.43554,0.61385)(0.43655,0.61385)(0.43755,0.61385)(0.43855,0.61385)(0.43955,0.61385)(0.44055,0.61385)(0.44155,0.61385)(0.44255,0.61384)(0.44355,0.61384)(0.44456,0.61384)(0.44556,0.61384)(0.44656,0.61384)(0.44756,0.61384)(0.44856,0.61384)(0.44956,0.61384)(0.45056,0.61384)(0.45156,0.61384)(0.45257,0.61384)(0.45357,0.61384)(0.45457,0.61385)(0.45557,0.61385)(0.45657,0.61385)(0.45757,0.61385)(0.45857,0.61385)(0.45957,0.61385)(0.46058,0.61385)(0.46158,0.61386)(0.46258,0.61386)(0.46358,0.61386)(0.46458,0.61386)(0.46558,0.61387)(0.46658,0.61387)(0.46758,0.61387)(0.46859,0.61387)(0.46959,0.61388)(0.47059,0.61388)(0.47159,0.61388)(0.47259,0.61389)(0.47359,0.61389)(0.47459,0.61390)(0.47559,0.61390)(0.47660,0.61390)(0.47760,0.61391)(0.47860,0.61391)(0.47960,0.61392)(0.48060,0.61392)(0.48160,0.61393)(0.48260,0.61393)(0.48360,0.61394)(0.48461,0.61394)(0.48561,0.61395)(0.48661,0.61396)(0.48761,0.61396)(0.48861,0.61397)(0.48961,0.61397)(0.49061,0.61398)(0.49161,0.61399)(0.49262,0.61399)(0.49362,0.61400)(0.49462,0.61401)(0.49562,0.61402)(0.49662,0.61402)(0.49762,0.61403)(0.49862,0.61404)(0.49962,0.61405)(0.50063,0.61405)(0.50163,0.61406)(0.50263,0.61407)(0.50363,0.61408)(0.50463,0.61409)(0.50563,0.61410)(0.50663,0.61410)(0.50763,0.61411)(0.50864,0.61412)(0.50964,0.61413)(0.51064,0.61414)(0.51164,0.61415)(0.51264,0.61416)(0.51364,0.61417)(0.51464,0.61418)(0.51564,0.61419)(0.51665,0.61420)(0.51765,0.61421)(0.51865,0.61422)(0.51965,0.61424)(0.52065,0.61425)(0.52165,0.61426)(0.52265,0.61427)(0.52365,0.61428)(0.52466,0.61429)(0.52566,0.61430)(0.52666,0.61432)(0.52766,0.61433)(0.52866,0.61434)(0.52966,0.61435)(0.53066,0.61437)(0.53166,0.61438)(0.53267,0.61439)(0.53367,0.61441)(0.53467,0.61442)(0.53567,0.61443)(0.53667,0.61445)(0.53767,0.61446)(0.53867,0.61447)(0.53967,0.61449)(0.54068,0.61450)(0.54168,0.61452)(0.54268,0.61453)(0.54368,0.61455)(0.54468,0.61456)(0.54568,0.61458)(0.54668,0.61459)(0.54768,0.61461)(0.54869,0.61462)(0.54969,0.61464)(0.55069,0.61465)(0.55169,0.61467)(0.55269,0.61469)(0.55369,0.61470)(0.55469,0.61472)(0.55569,0.61474)(0.55670,0.61475)(0.55770,0.61477)(0.55870,0.61479)(0.55970,0.61480)(0.56070,0.61482)(0.56170,0.61484)(0.56270,0.61486)(0.56370,0.61488)(0.56471,0.61489)(0.56571,0.61491)(0.56671,0.61493)(0.56771,0.61495)(0.56871,0.61497)(0.56971,0.61499)(0.57071,0.61501)(0.57171,0.61502)(0.57272,0.61504)(0.57372,0.61506)(0.57472,0.61508)(0.57572,0.61510)(0.57672,0.61512)(0.57772,0.61514)(0.57872,0.61516)(0.57972,0.61518)(0.58073,0.61521)(0.58173,0.61523)(0.58273,0.61525)(0.58373,0.61527)(0.58473,0.61529)(0.58573,0.61531)(0.58673,0.61533)(0.58773,0.61536)(0.58874,0.61538)(0.58974,0.61540)(0.59074,0.61542)(0.59174,0.61544)(0.59274,0.61547)(0.59374,0.61549)(0.59474,0.61551)(0.59574,0.61554)(0.59675,0.61556)(0.59775,0.61558)(0.59875,0.61561)(0.59975,0.61563)(0.60075,0.61565)(0.60175,0.61568)(0.60275,0.61570)(0.60375,0.61573)(0.60476,0.61575)(0.60576,0.61578)(0.60676,0.61580)(0.60776,0.61583)(0.60876,0.61585)(0.60976,0.61588)(0.61076,0.61590)(0.61176,0.61593)(0.61277,0.61596)(0.61377,0.61598)(0.61477,0.61601)(0.61577,0.61604)(0.61677,0.61606)(0.61777,0.61609)(0.61877,0.61612)(0.61977,0.61614)(0.62078,0.61617)(0.62178,0.61620)(0.62278,0.61623)(0.62378,0.61625)(0.62478,0.61628)(0.62578,0.61631)(0.62678,0.61634)(0.62778,0.61637)(0.62879,0.61640)(0.62979,0.61642)(0.63079,0.61645)(0.63179,0.61648)(0.63279,0.61651)(0.63379,0.61654)(0.63479,0.61657)(0.63579,0.61660)(0.63680,0.61663)(0.63780,0.61666)(0.63880,0.61669)(0.63980,0.61672)(0.64080,0.61675)(0.64180,0.61678)(0.64280,0.61682)(0.64380,0.61685)(0.64481,0.61688)(0.64581,0.61691)(0.64681,0.61694)(0.64781,0.61697)(0.64881,0.61701)(0.64981,0.61704)(0.65081,0.61707)(0.65181,0.61710)(0.65282,0.61714)(0.65382,0.61717)(0.65482,0.61720)(0.65582,0.61724)(0.65682,0.61727)(0.65782,0.61730)(0.65882,0.61734)(0.65982,0.61737)(0.66083,0.61741)(0.66183,0.61744)(0.66283,0.61748)(0.66383,0.61751)(0.66483,0.61755)(0.66583,0.61758)(0.66683,0.61762)(0.66783,0.61765)(0.66884,0.61769)(0.66984,0.61772)(0.67084,0.61776)(0.67184,0.61780)(0.67284,0.61783)(0.67384,0.61787)(0.67484,0.61791)(0.67584,0.61794)(0.67685,0.61798)(0.67785,0.61802)(0.67885,0.61806)(0.67985,0.61809)(0.68085,0.61813)(0.68185,0.61817)(0.68285,0.61821)(0.68385,0.61825)(0.68486,0.61829)(0.68586,0.61832)(0.68686,0.61836)(0.68786,0.61840)(0.68886,0.61844)(0.68986,0.61848)(0.69086,0.61852)(0.69186,0.61856)(0.69287,0.61860)(0.69387,0.61864)(0.69487,0.61868)(0.69587,0.61872)(0.69687,0.61876)(0.69787,0.61880)(0.69887,0.61885)(0.69987,0.61889)(0.70088,0.61893)(0.70188,0.61897)(0.70288,0.61901)(0.70388,0.61906)(0.70488,0.61910)(0.70588,0.61914)(0.70688,0.61918)(0.70788,0.61923)(0.70889,0.61927)(0.70989,0.61931)(0.71089,0.61936)(0.71189,0.61940)(0.71289,0.61944)(0.71389,0.61949)(0.71489,0.61953)(0.71589,0.61958)(0.71690,0.61962)(0.71790,0.61967)(0.71890,0.61971)(0.71990,0.61976)(0.72090,0.61980)(0.72190,0.61985)(0.72290,0.61989)(0.72390,0.61994)(0.72491,0.61999)(0.72591,0.62003)(0.72691,0.62008)(0.72791,0.62013)(0.72891,0.62017)(0.72991,0.62022)(0.73091,0.62027)(0.73191,0.62032)(0.73292,0.62037)(0.73392,0.62041)(0.73492,0.62046)(0.73592,0.62051)(0.73692,0.62056)(0.73792,0.62061)(0.73892,0.62066)(0.73992,0.62071)(0.74093,0.62076)(0.74193,0.62081)(0.74293,0.62086)(0.74393,0.62091)(0.74493,0.62096)(0.74593,0.62101)(0.74693,0.62106)(0.74793,0.62111)(0.74894,0.62116)(0.74994,0.62121)(0.75094,0.62126)(0.75194,0.62132)(0.75294,0.62137)(0.75394,0.62142)(0.75494,0.62147)(0.75594,0.62152)(0.75695,0.62158)(0.75795,0.62163)(0.75895,0.62168)(0.75995,0.62174)(0.76095,0.62179)(0.76195,0.62185)(0.76295,0.62190)(0.76395,0.62195)(0.76496,0.62201)(0.76596,0.62206)(0.76696,0.62212)(0.76796,0.62217)(0.76896,0.62223)(0.76996,0.62229)(0.77096,0.62234)(0.77196,0.62240)(0.77297,0.62245)(0.77397,0.62251)(0.77497,0.62257)(0.77597,0.62263)(0.77697,0.62268)(0.77797,0.62274)(0.77897,0.62280)(0.77997,0.62286)(0.78098,0.62291)(0.78198,0.62297)(0.78298,0.62303)(0.78398,0.62309)(0.78498,0.62315)(0.78598,0.62321)(0.78698,0.62327)(0.78798,0.62333)(0.78899,0.62339)(0.78999,0.62345)(0.79099,0.62351)(0.79199,0.62357)(0.79299,0.62363)(0.79399,0.62369)(0.79499,0.62375)(0.79599,0.62382)(0.79700,0.62388)(0.79800,0.62394)(0.79900,0.62400)(0.80000,0.62407)
    };
\end{axis}
\end{tikzpicture}
\caption{Plot of $\zeta_{\Theta}(x,14/15,p^*)$ for $x \in [-0.1,0.8]$, $\Theta(x) = \optTheta{}$ and $p^*$ as in \eqref{eqn_bestBoundTriangleFree}. We have $\min_{x \in (-1,1]} \zeta_{\Theta}(x,14/15,p^*) = \zeta_{\Theta}(0,14/15,p^*) = \unroundedApproxOpt{}$, marked by the dot.  }
\label{fig:zetaFunction}
\end{figure}

The approximation ratio in \eqref{eqn_bestBoundTriangleFree} is greater than $\alpha(1)$ (given in \Cref{table_alphaMuVals}), which is a lower bound on the approximation ratio of \Cref{alg:approxLee} if $(h^+_e)_{e \in E}$ is contained in the matching polytope, see \Cref{lemma_approxRatioLee}. The optimal approximation ratio of \Cref{alg:triangleFree}, with respect to $\Theta \in \mathcal{A}$, is given by $\max_{p \in [0,1], \, \Theta \in \mathcal{A}} \zeta^*_{\Theta}(\mu,p)$. Equation \eqref{eqn_bestBoundTriangleFree} proves that $\max_{p \in [0,1], \, \Theta \in \mathcal{A}} \zeta^*_{\Theta}(\mu,p) \geq \unroundedApproxOpt{}$. Let us now provide an upper bound on this maximum.
\begin{theorem}
    \label{lemma_UB_triangleFree}
    The best provable lower bound on the approximation ratio of \Cref{alg:triangleFree}, given by $\max_{p \in [0,1], \, \Theta \in \mathcal{A}} \zeta^*_{\Theta}(14/15,p)$, satisfies
    \begin{align}
        \label{eqn_ub_triangleFreeNoConj}
        \unroundedApproxOpt{} \leq \max_{p \in [0,1], \, \Theta \in \mathcal{A}} \zeta^*_{\Theta}(14/15,p) < \ratioUB{},
    \end{align}
    for $\mathcal{A}$ as in \eqref{eqn_thetaProp} and $\zeta_{\Theta}^*$ as in \eqref{eqn_zetaStarDef}. 
\end{theorem}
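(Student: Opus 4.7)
For the lower bound, equation \eqref{eqn_bestBoundTriangleFree} already establishes $\max_{p \in [0,1], \Theta \in \mathcal{A}} \zeta^*_\Theta(14/15, p) \geq \unroundedApproxOpt{}$: the function $\Theta(x) = 1 - e^{-x/20}$ belongs to $\mathcal{A}$ by \Cref{lemma_linFunctionInOmega} (with $R = 1/20$), and the explicit $p^*$ displayed there attains $\zeta^*_\Theta(14/15, p^*) = \unroundedApproxOpt{}$. So the first inequality of \eqref{eqn_ub_triangleFreeNoConj} is essentially a restatement of what was shown when choosing $\Theta$ for \Cref{alg:triangleFree}.

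The upper bound is the substantive part, and the plan is a witness-point relaxation. From the definitions \eqref{eqn_betaDef} and \eqref{eqn_zetaFunctDef}, for any fixed $x$ the quantity $\zeta_\Theta(x, \mu, p)$ depends on $\Theta$ only through the two values $\Theta(x^+)$ and $\Theta(1 - x^+)$. Hence, for any finite collection $\{x_1, \ldots, x_m\} \subset (-1, 1]$,
$$\max_{\Theta \in \mathcal{A}, p \in [0,1]} \zeta^*_\Theta(14/15, p) \; \leq \; \max_{\Theta \in \mathcal{A}, p \in [0,1]} \; \min_{1 \leq k \leq m} \zeta_\Theta(x_k, 14/15, p),$$
and the right-hand side depends on $\Theta$ only through the $2m$ scalars $u_k := \Theta(x_k^+)$ and $v_k := \Theta(1 - x_k^+)$. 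Replacing the set of admissible $(u_k, v_k)$ by the outer box $[0,1]^{2m}$, retaining only the monotonicity and boundedness constraints from \eqref{eqn_thetaProp} plus the equalities forced by coincidences among the $x_k^+$ and $1 - x_k^+$, produces a finite-dimensional max-min program whose optimal value is still an upper bound on $\max_{\Theta, p} \zeta^*_\Theta(14/15, p)$. Dropping the multiplicative constraint of \eqref{eqn_thetaProp} is legitimate here because relaxing $\mathcal{A}$ can only increase the maximum.

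The intended execution is to take a fine grid of witness points clustered around the near-minimizer $x^* \approx 0.131$ visible in \Cref{fig:zetaFunction}, together with $x = 0$ and a coarser grid of the remaining interval, and bound the resulting finite program below $\ratioUB{}$. The main obstacle is that the certified gap $\ratioUB{} - \unroundedApproxOpt{} = 0.00009$ is very small, so the witness grid must be fine enough that the outer relaxation is nearly tight, and the finite program must be solved rigorously rather than heuristically. The envisioned approach is to compute a candidate optimizer numerically, then certify via interval arithmetic applied to $q$, $\beta_\Theta$, and their dependence on $(u_k, v_k, p)$ that $\min_k \zeta_\Theta(x_k, 14/15, p) < \ratioUB{}$ throughout the entire feasible region. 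Concavity of $p \mapsto \min_k \zeta_\Theta(x_k, 14/15, p)$ (as a minimum of affine functions of $p$) reduces the outer optimization over $p \in [0,1]$ to a one-dimensional problem, which should make the interval-arithmetic certification tractable.
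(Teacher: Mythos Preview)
Your treatment of the lower bound is fine; it is exactly what the paper does.

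For the upper bound, however, your proposal has a genuine gap. You correctly observe that relaxing $\mathcal{A}$ to the larger class of monotone functions with $\Theta(0)=0$, $\Theta(1)\leq 1$ can only \emph{increase} the maximum, so any upper bound on the relaxed problem is still an upper bound on $\max_{\Theta\in\mathcal{A},p}\zeta^*_\Theta(14/15,p)$. What is missing is any argument that the relaxed maximum is still below $\ratioUB{}$. This is not a formality: the multiplicative condition in \eqref{eqn_thetaProp} forces, for instance, $\Theta(1/2)\leq 1-\sqrt{1-\Theta(1)}\approx \Theta(1)/2$, whereas monotonicity alone only gives $\Theta(1/2)\leq \Theta(1)$. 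Since the constraint at $x=0$ pins down $\Theta(1)$ (via \eqref{eqn_thetaOneRewrite}), discarding the multiplicative condition roughly doubles the admissible value of $\Theta(1/2)$, and through the $\sqrt{\Theta(1/2)(1-\Theta(1/2))}$ term in $\beta_\Theta(1/2,\mu)$ this shifts the two-point bound upward by an amount of order $10^{-2}$ --- far larger than the $9\times 10^{-5}$ gap you are trying to certify. Adding more witness points does not obviously repair this: the constraints at $x_k$ and $1-x_k$ still see $\Theta(x_k)$ and $\Theta(1-x_k)$ only through monotonicity, never through a product relation to $\Theta(1)$. So as written, the relaxation is likely too loose, and the interval-arithmetic certification you envision would fail rather than succeed.

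The paper's argument is structurally different and does \emph{not} relax away the multiplicative constraint. It uses just two witness points, $x=0$ and $x=1/2$, and then invokes the $\mathcal{A}$-condition $(1-\Theta(1/2))^2\geq 1-\Theta(1)$ to eliminate $\Theta(1/2)$ in favour of $\Theta(1)$, which is itself determined by $\zeta_\Theta(0,\mu,p)$ and $p$. After a monotonicity argument for $f$ in \eqref{eqn_fpIneq}, the whole problem collapses to a one-variable maximization over $p$ of an explicit function $g(p)=f(h(r,p),p)$; this is shown to be concave (Lemma~\ref{lemma_concaveMax}), localized to a short interval, and bounded below $\ratioUB{}$ via the Mean Value Theorem. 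The multiplicative constraint is precisely what makes the reduction to one dimension possible and what keeps the bound tight enough.
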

\begin{proof}
Let $\mu = 14/15$. The lower bound in \eqref{eqn_ub_triangleFreeNoConj} is due to \eqref{eqn_bestBoundTriangleFree}. For the upper bound, observe that by the definition of $\zeta^*_{\Theta}(\mu,p)$, see \eqref{eqn_zetaStarDef}, we have that
\begin{align}
    \label{eqn_firstIneqZeta}
    \zeta^*_{\Theta}(\mu,p) \leq \min\left\{  \zeta_{\Theta}\left(0,\mu,p\right), \zeta_{\Theta}\left(1/2,\mu,p \right)\right\}.
\end{align}
Since $\max_{p \in [0,1], \, \Theta \in \mathcal{A}} \zeta^*_{\Theta}(\mu,p) \geq \unroundedApproxOpt{}$, we may consider only $\Theta \in \mathcal{A}$ and $p \in [0,1]$ satisfying $\zeta_{\Theta}\left(0,\mu,p\right) \geq \unroundedApproxOpt{}$. Solving  $\zeta_{\Theta}(0,\mu,p) = \left(p q(0)\left(1 - \Theta(1)/2\right)+1-p \right)/2$ for $\Theta(1)$ yields
\begin{align}
    \label{eqn_thetaOneRewrite}
    \Theta(1) =  2 - \frac{2}{q(0)}- \frac{4 \, \zeta_{\Theta}(0,\mu,p)-2}{p q(0)}.
\end{align}
Since $\Theta(1) \geq 0$, \eqref{eqn_thetaOneRewrite} implies that
\begin{align}
    \label{eqn_pIneqZ}
    p \geq \frac{2 \, \zeta_{\Theta}(0,\mu,p)  -1}{q(0)-1}.
\end{align}
We have $\zeta_{\Theta}\left(1/2,\mu,p \right) = f(\Theta(1/2),p)$, where 
\begin{align}
    \label{eqn_fpIneq}
    f(z,p) := \frac{1}{3} \left( p q(1/2) \left(1-\frac{z}{2}+C_1\sqrt{z\left(1-z\right)}\right) + (1-p)C_2 \right),
\end{align}
and $C_1 := (4+\mu (\pi-2))/ (2\pi)$, $C_2 := \left(1+3\mu/2 \right)$.

By the properties of functions in $\mathcal{A}$, see \eqref{eqn_thetaProp}, we have
\begin{align}
    \label{eqn_argumentIneqChain}
     1-\Theta(1) =  \min_{x \in [0,1]} \left( 1-\Theta(x) \right) \left( 1-\Theta(1-x) \right) \leq (1-\Theta(1/2))^2,
\end{align}
from where it follows  that $\Theta(1/2) \leq 1-\sqrt{1-\Theta(1)}$ (the expression $\sqrt{1-\Theta(1)}$ is well-defined since $\Theta(1) \leq 1$). By substituting \eqref{eqn_thetaOneRewrite} for $\Theta(1)$ in this inequality, we find that $\Theta(1/2) \leq 1-\sqrt{1-\Theta(1)}$ is equivalent to $\Theta(1/2) \leq h(\zeta_{\Theta}(0,\mu,p),p)$, where
\begin{align}
    \label{eqn_hFunction}
    h(z,p) := 1 - \sqrt{\frac{4 z-2}{pq(0)} + \frac{2}{q(0)}- 1}.
\end{align}
For future reference, we observe the following two properties of the function $h$:
\begin{align}
    \label{eqn_functionH_properties}
    \frac{\partial h(z,p)}{\partial z} < 0 \, \text{ and } \, \frac{\partial h(z',p)}{\partial p} > 0 \, \text{ if } \, z' > \frac{1}{2}.
\end{align}
We claim that 
\begin{align}
    \label{eqn_fIncreaseClaim2}
    f(\Theta(1/2),p) \leq f\left(h(\zeta_{\Theta}(0,\mu,p),p), p\right).
\end{align}
As $\Theta(1/2) \leq h(\zeta_{\Theta}(0,\mu,p),p)$, claim \eqref{eqn_fIncreaseClaim2} follows by showing that $f(z,p)$ is increasing in $z$ on the interval $0 \leq z \leq h(\zeta_{\Theta}(0,\mu,p),p)$ and $p \in [0,1]$. Indeed, one can verify that 
\begin{align}
    \label{eqn_fDerivative}
    \frac{\partial f(z,p)}{\partial z} =  \frac{p q(1/2)}{6} \left( \frac{C_1(1-2z)}{\sqrt{z(1-z)}} - 1\right) \geq 0 \text{ for all } z \in \left(0, \frac{1}{2}\left(1- \frac{1}{\sqrt{4C_1^2+1}}\right)\right]
\end{align}
and $p \in [0,1]$ (the upper bound on $z$ can be found by solving $\partial f(z,p) / \partial z = 0$). To prove \eqref{eqn_fIncreaseClaim2}, it remains to show that $h(\zeta_{\Theta}(0,\mu,p),p) \leq \frac{1}{2}\left(1- \frac{1}{\sqrt{4C_1^2+1}}\right)$. Recall that $\zeta_{\Theta}(0,\mu,p) \geq \unroundedApproxOpt{}$, which we use to obtain
\begin{align}
  h(\zeta_{\Theta}(0,\mu,p),p) \leq h(\unroundedApproxOpt,p) \leq h(\unroundedApproxOpt,1) \leq \frac{1}{2}\left(1- \frac{1}{\sqrt{4C_1^2+1}}\right),
\end{align}
where the first and second inequality follow from \eqref{eqn_functionH_properties}. The last inequality follows by computing the two numbers. Hence, claim \eqref{eqn_fIncreaseClaim2} follows.

Combining \eqref{eqn_firstIneqZeta}, $\zeta_{\Theta}\left(1/2,\mu,p \right) = f(\Theta(1/2),p)$ and \eqref{eqn_fIncreaseClaim2} yields
\begin{align}
    \zeta^*_{\Theta}(\mu,p) &\leq \min\left\{ \zeta_{\Theta}(0,\mu,p),  f\left(h(\zeta_{\Theta}(0,\mu,p),p), p\right) \right\} \nonumber \\
    &\leq \max_{z \in[\unroundedApproxOpt{},1]} \min\left\{ z, \max_{p \in \left[ \frac{2 \, z  -1}{q(0)-1},1 \right]} f(h(z,p),p) \right\}. \label{eqn_upperBoundRatioZ}
\end{align}
In \eqref{eqn_upperBoundRatioZ}, the constraint on $z$ is due to our  assumption $\unroundedApproxOpt{} \leq \zeta_{\Theta}(0,\mu,p) = z$, whereas the constraint on $p$ is due to \eqref{eqn_pIneqZ}. We define $r := \ratioUB{}$, and show that the above expression is upper bounded by $r$. To solve the maximization problem in $z$ in \eqref{eqn_upperBoundRatioZ}, we first note that when $z < r$ we have $\zeta^*_{\Theta}(\mu,p) \leq z < r$. Thus, we may restrict to $z \in [r,1]$. We now show that
$\max_{z \in [r,1], \, p \in \left[ \frac{2 z  -1}{q(0)-1},1 \right]} f(h(z,p),p) < r$. To simplify this problem, we claim that
\begin{align}
    \label{eqn_claim2}
    f(h(z,p),p) \leq f(h(r,p),p)
\end{align}
for all $z \in [r,1]$ and $p \in \left[ 0,1 \right]$. We showed in \eqref{eqn_fDerivative} that $\partial f(z,p) / \partial z \geq 0$ for all $0 < z \leq \frac{1}{2}\left(1- \frac{1}{\sqrt{4C_1^2+1}}\right)$, so that claim \eqref{eqn_claim2} follows by showing that $0 \leq h(z,p) \leq h(r,p) \leq  \frac{1}{2}\left(1- \frac{1}{\sqrt{4C_1^2+1}}\right)$. To prove these inequalities, we use $z \geq r > 1/2$ and \eqref{eqn_functionH_properties} to obtain $h(z,p) \leq h(r,p) \leq  h(r,1) \leq  \frac{1}{2}\left(1- \frac{1}{\sqrt{4C_1^2+1}}\right)$. Here, the last inequality can be verified by computing the two numbers. Hence, claim \eqref{eqn_claim2} is proven.

Note that the interval $\left[ \frac{2 z  -1}{q(0)-1},1 \right]$  with $z \in [r,1]$, is largest for $z = r$. Combining this observation with \eqref{eqn_claim2} yields that
\begin{align}
    \label{eqn_fzandRinOneIneq}
    \max_{z \in [r,1], \, p \in \left[ \frac{2 z  -1}{q(0)-1},1 \right]} f(h(z,p),p) \leq \max_{ p \in \left[ \frac{2 r  -1}{q(0)-1},1 \right]} f(h(r,p),p).
\end{align}
\Cref{lemma_concaveMax} in \Cref{section_extraResults} proves that $\max_{ p \in \left[ \frac{2 r  -1}{q(0)-1},1 \right]} f(h(r,p),p) < r$. Combined with  \eqref{eqn_upperBoundRatioZ} and \eqref{eqn_fzandRinOneIneq}, we find that $\zeta^*_{\Theta}(\mu,p) < r = \ratioUB{}$, which finishes the proof.
\end{proof}
\section{New approximation algorithm on bipartite graphs}
\label{section_approxBipart}
In this section we consider a QMC approximation algorithm for bipartite graphs that achieves an approximation ratio of $\ratioBipart{}$. The QMC problem, restricted to bipartite graphs, belongs to the complexity class \textsf{StoqMA} 
\cite[Obs.~30]{CM16} and remains \textit{'notoriously difficult to solve'} \cite[Sect.~4]{GP19}. It is known that on bipartite graphs, the QMC problem is equivalent to the \textit{Einstein, Podolsky, Rosen} (EPR) problem, which is another 2-local Hamiltonian problem introduced in \cite[Problem 2]{king2023improved} (specifically, the Hamiltonian corresponding to the EPR problem is unitarily similar to $H_G$). In the same paper, King provides a classical approximation algorithm for the EPR problem that achieves an approximation ratio of $1/\sqrt{2}$ $(\approx 0.707)$.
The algorithm we consider is the following:
\begin{algorithm}[QMC approximation algorithm for bipartite graphs]
\label{alg:bipartiteG2}
Input: bipartite graph $G = (V = V^0 \cup V^1,E, w)$, function $\Theta \in \mathcal{A}$, see \eqref{eqn_thetaProp}, real numbers $h_\mathrm{max} \in \left[ \sqrt{3}/2,1 \right]$ and $\theta^* \in [0,1]$.
\begin{enumerate}
    \item \label{step_algBipartState} For all $i \in V$, set $\ket{z_i} = \ket{0}$ if $i \in V^0$, and $\ket{z_i} = \ket{1}$ if $i \in V^1$.
    \item \label{eqn_thetaHmaxSet} Solve \ref{eqn_sdpRelax} for $k = 2$ to obtain the values $\left(h_e \right)_{e \in E}$, see \Cref{def_sdpValues}, and for all $e \in E$, set \begin{align}
        \label{eqn_thetaBipartSet}
        \theta_{e} := \begin{cases}
            \arcsin{ \sqrt{\Theta\mleft(h^+_{e}\mright) } } &\text{ if } h_e \leq h_{\text{max}} \\
            \arcsin{\sqrt{\theta^*}}&\text{ else.}
        \end{cases}
    \end{align}
    \item Output the state $\ket{\phi}\bra{\phi}$, where
    \begin{align}
        \label{eqn_outputStateBipart}
        \ket{\phi} := \prod_{\edge{i}{j} \in E \, : \, i \in V^0, j \in V^1} \exp{\left( \frac{\imagUnit}{2} \theta_{ij} Y_i X_j \right)} \bigotimes_{i \in V }\ket{z_i}.
    \end{align}
\end{enumerate}
\end{algorithm}
\Cref{alg:bipartiteG2} is inspired by \cite[Alg.~1]{Lee22}. Lee's algorithm is suited for general graphs, achieves an approximation ratio of 0.562, and differs from \Cref{alg:bipartiteG2} in \Cref{step_algBipartState,eqn_thetaHmaxSet}. In \Cref{step_algBipartState}, both algorithms determine the value of $\ket{z_i} \in \left\{ \ket{0}, \ket{1} \right\}$ based on a partition of the vertex set~$V$. For \Cref{alg:bipartiteG2}, this partition is already given as $\left\{V^0,V^1 \right\}$, i.e., the bipartition of $G$. In contrast, \cite[Alg.~1]{Lee22} partitions $V$ as $\left\{ \widetilde{V}, V \setminus \widetilde{V} \right\}$, where $\widetilde{V}$ is constructed as follows: pick a variable $u \in \{x,y,z\}$ uniformly at random. Consider the vectors $\left(\mathbf{v}(u_i) \right)_{i \in V}$, where $\mathbf{v}(u_i)$ is as in \eqref{eqn_gramVectors} (for example, if $u = x$, then $\mathbf{v}(u_i) = \mathbf{v}(x_i)$). Draw a vector $r$ uniformly from the unit sphere of appropriate dimension. Let $\widetilde{V} := \setFunct{i \in V}{ \signFunction{ \mathbf{v}(u_i)^\top r} = 1}$. The random partition $\left\{ \widetilde{V}, V \setminus \widetilde{V} \right\}$ is not guaranteed to recover the bipartition of a bipartite graph. That is, when given a bipartite graph as input, it is not guaranteed that the vertices in $\widetilde{V}$ are pairwise non-adjacent. In \Cref{eqn_thetaHmaxSet}, both algorithms set $\theta_e$ as a function of $h^+_e$. Lee chose $\Theta(x) = 1-e^{-Rx} \in \mathcal{A}$ (see \Cref{lemma_linFunctionInOmega}) and (numerically) optimized the value of $R \geq 0$  to obtain the highest approximation ratio. In contrast, we choose $\Theta(x) = Rx$ in \Cref{thm_LBbipartAlg}, which is also contained in $\mathcal{A}$ by \Cref{lemma_linFunctionInOmega}. Additionally, in \Cref{thm_upperBoundRStar} we provide an upper bound on the approximation ratio attained by the best possible $\Theta \in \mathcal{A}$. Lastly, Lee's algorithm does not use $h_\mathrm{max}$, see \eqref{eqn_thetaBipartSet}, which allows for adjusting the value of $\theta_e$ when $h_e$ is sufficiently large. These differences in \Cref{step_algBipartState,eqn_thetaHmaxSet} ensure that \Cref{alg:bipartiteG2} achieves a higher approximation ratio than \cite[Alg.~1]{Lee22} on bipartite graphs. To compute this approximation ratio in terms of $\Theta$, $h_\mathrm{max}$ and $\theta^*$, we require some preparatory results and definitions. We present first \cite[Lem.~11]{Lee22}. 
\begin{lemma}[\cite{Lee22}]
    \label{lemma_nrgCutEdges}
    Let $\ket{\phi}$ be as in \eqref{eqn_outputStateBipart} and $\ket{z_i}$, $\ket{z_j}$ as in \Cref{step_algBipartState} of \Cref{alg:bipartiteG2}. For any $i,j \in V$, $\bra{\phi} H_{ij} \ket{\phi} \geq 0$. If $\ket{z_i} \neq \ket{z_j}$, then, for $A_{ij}$ and $B_{ij}$ as in \eqref{eqn_edgeTermProds},
    \begin{align}
    \label{eqn_strongBoundEdgeBip}
        \bra{\phi} H_{ij} \ket{\phi} \geq  1 +  A_{ij} B_{ij} + \sin{\mleft(\theta_{ij}\mright)} \left( A_{ij} + B_{ij} \right).
    \end{align}
\end{lemma}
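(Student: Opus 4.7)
The plan splits into two parts, matching the two claims. For the positivity statement $\bra{\phi} H_{ij} \ket{\phi} \geq 0$ (valid for any $i,j\in V$), I would simply observe that $H_{ij} = 4\Pi^-$, where $\Pi^-$ is the orthogonal projector onto the singlet $(\ket{01}-\ket{10})/\sqrt{2}$; equivalently, the eigenvalues of $H_{ij}$ are $\{0,0,0,4\}$. Hence $H_{ij} \succeq 0$ and the bound is immediate, regardless of whether $\edge{i}{j}$ is an edge of $G$.

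The substantive work is the inequality \eqref{eqn_strongBoundEdgeBip}. The key structural observation is that in the bipartite setting every gate $U_e = \exp(\frac{\imagUnit}{2}\theta_e Y_{i(e)}X_{j(e)})$ carries Pauli $Y$ on its $V^0$-endpoint and $X$ on its $V^1$-endpoint, so any two such gates commute: they either share no qubit (trivial) or share a qubit $k$ on which both generators contain the same single-qubit Pauli ($Y_k$ if $k\in V^0$, $X_k$ if $k\in V^1$). This mutual commutativity lets me factor $\ket{\phi}=U_{ij}\ket{\phi'}$ where $\ket{\phi'} = V_iV_jW\bigotimes_{k\in V}\ket{z_k}$, with $V_i := \prod_{k\in N(i)\setminus\{j\}}U_{ik}$, $V_j := \prod_{k\in N(j)\setminus\{i\}}U_{kj}$, and $W$ the product over edges touching neither $i$ nor $j$. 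Bipartiteness also forces the qubit-supports of $V_i$ and $V_j$ to be disjoint.

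Next, a direct (anti-)commutation check of $H_{ij}$ against $P=Y_iX_j$ yields
\begin{equation*}
U_{ij}^\dagger H_{ij} U_{ij} \;=\; I - \cos\theta_{ij}(X_iX_j + Y_iY_j) - Z_iZ_j + \sin\theta_{ij}(Z_i - Z_j),
\end{equation*}
reducing the problem to the five $\ket{\phi'}$-expectations of $X_iX_j$, $Y_iY_j$, $Z_iZ_j$, $Z_i$, $Z_j$. For each Pauli $O_i\in\{X_i, Z_i\}$, expanding $V_i^\dagger O_i V_i$ as a sum over subsets $S\subseteq N(i)\setminus\{j\}$ (using that the gates in $V_i$ commute and $(Y_iX_k)^2 = I$) shows that every resulting Pauli string carries only $X_k$ or $I$ on qubits $k\in N(i)\setminus\{j\}\subseteq V^1$. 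Since the gates in $W$ act on each such qubit only via the $X_k$-type, $W$ commutes with $V_i^\dagger O_i V_i$ and drops out of the expectation (the symmetric statement holds on the $j$-side, with $Y_l$-type on $V^0$-qubits). Moreover every nonempty $S$ inserts a Pauli that flips a computational-basis state on some qubit, so only the ``all-cosines'' term $S=\emptyset$ survives. A short calculation then gives $\bra{\phi'}X_iX_j\ket{\phi'} = \bra{\phi'}Y_iY_j\ket{\phi'}=0$, $\bra{\phi'}Z_i\ket{\phi'}=A_{ij}$, $\bra{\phi'}Z_j\ket{\phi'}=-B_{ij}$ and $\bra{\phi'}Z_iZ_j\ket{\phi'}=-A_{ij}B_{ij}$, and substitution produces $\bra{\phi}H_{ij}\ket{\phi} = 1 + A_{ij}B_{ij} + \sin\theta_{ij}(A_{ij}+B_{ij})$, which is in fact an equality---strictly stronger than the claimed inequality.

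The main obstacle I expect is the bookkeeping in the previous paragraph: tracking simultaneously which Pauli-types survive on which qubits in $V_i^\dagger O_i V_i$ to justify both discarding $W$ and killing every nonempty $S$. This is precisely where bipartiteness is essential---without it, gates around a shared vertex need not share a single Pauli type, and vertices in $N(i)\cap N(j)$ would destroy the disjointness of the qubit-supports that makes the factorization clean.
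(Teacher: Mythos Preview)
The paper does not prove this lemma; it is imported from \cite[Lem.~11]{Lee22} with only a remark about the rescaling of $\theta$. Your proposal therefore goes beyond what the paper does by supplying an argument, and the argument is correct --- in fact your computation yields \emph{equality} in \eqref{eqn_strongBoundEdgeBip}, which is sharper than what is stated.

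One small simplification is available. You justify ``$W$ drops out'' by expanding $V_i^\dagger O_iV_i$ and tracking which Pauli type appears on each neighboring qubit. This works, but it is not needed: since all gates in \eqref{eqn_outputStateBipart} commute and $W$ does not act on qubits $i$ or $j$, the unitary $W$ commutes with each of $O_i$, $O_j$, $V_i$, $V_j$, hence with $V_i^\dagger O_iV_i$ and $V_j^\dagger O_jV_j$, and cancels against $W^\dagger$ directly. The subset-expansion bookkeeping is only needed for the second step --- showing that every nonempty-$S$ term has vanishing expectation in the computational-basis product state $\bigotimes_k\ket{z_k}$ --- and there your reasoning is fine.
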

The small discrepancy between our \Cref{lemma_nrgCutEdges} and \cite[Lem.~11]{Lee22} is due to the different scaling of $\theta$. If $\ket{z_i} \neq \ket{z_j}$ for some edge $\{i,j\}$, we say that edge $\{i,j\}$ is cut. If $\{i,j\}$ is a cut edge, $\bra{\phi} H_{ij} \ket{\phi}$ satisfies \eqref{eqn_strongBoundEdgeBip}, which is a stronger lower bound than $\bra{\phi} H_{ij} \ket{\phi} \geq 0$. Observe that \Cref{step_algBipartState} of \Cref{alg:bipartiteG2} ensures that all edges of the bipartite input graph are cut. 

To compute the approximation ratio of \Cref{alg:bipartiteG2}, we also require the  function
\begin{align}
    \label{eqn_deltaFunc}
    \delta_\Theta(x) := \frac{2 - \Theta(1-x^+) + 2 \sqrt{\Theta(x^+)(1-\Theta(1-x^+))}}{2+2x}.
\end{align}
For future reference, observe  that
\begin{align}\label{eqn_simpleThetaObs}
    \delta_\Theta(x) = \frac{2-\Theta(1)}{2+2x} \geq \frac{2-\Theta(1)}{2} = \delta_\Theta(0) \quad \forall x \in (-1,0],
\end{align}
which follows from the fact that $x^+ = \max\{x,0\} = 0$ for all $x \in (-1,0]$, and $\Theta(1) \leq 1$, see \eqref{eqn_thetaProp}.
\begin{lemma}
    \label{lemma_boundHsimplify}
    Let $G$ be a bipartite graph used as input to \Cref{alg:bipartiteG2}, $\ket{\phi}$ be as in \eqref{eqn_outputStateBipart}, $\Theta \in \mathcal{A}$ and $h_\mathrm{max} \in \left[ \sqrt{3}/2, 1 \right]$. Consider the values $\theta_e$ as in \eqref{eqn_thetaBipartSet}. Suppose $\theta_e = \arcsin{\sqrt{ \Theta \mleft( h^+_e\mright)}}$ for some edge $e \in E(G)$ (i.e., $h_e \leq h_\mathrm{max}$), and $h_e > -1$. If for all edges $e'$ adjacent to $e$, $\theta_{e'} = \arcsin{\sqrt{ \Theta \mleft( h^+_{e'}\mright)}}$, then we have
    \begin{align}
        \label{eqn_valueDivideResult}
        \frac{\bra{\phi} H_e \ket{\phi}}{2+2h_e} \geq \delta_{\Theta}(h_e).
    \end{align}
\end{lemma}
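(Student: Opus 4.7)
The plan is to combine three ingredients: (i) the cut-edge energy bound of \Cref{lemma_nrgCutEdges}, (ii) the hypothesis that every edge adjacent to $e$ (including $e$ itself) has its angle set via the $\arcsin\sqrt{\Theta(\cdot)}$ branch of \eqref{eqn_thetaBipartSet}, and (iii) the monogamy-plus-$\Theta$ product inequality of \Cref{lemma_thetaProdLB}. Since $G$ is bipartite and \Cref{step_algBipartState} of \Cref{alg:bipartiteG2} assigns $\ket{z_i} = \ket{0}$ to $V^0$ and $\ket{z_i} = \ket{1}$ to $V^1$, every edge is cut, so \Cref{lemma_nrgCutEdges} applies to $e = \{i,j\}$ and gives
\begin{equation*}
    \bra{\phi} H_e \ket{\phi} \;\geq\; 1 + A_{ij} B_{ij} + \sin(\theta_{ij})\,(A_{ij} + B_{ij}),
\end{equation*}
with $A_{ij}, B_{ij}$ as in \eqref{eqn_edgeTermProds}.

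Next, by the hypothesis on the angles, $\cos\theta_{e'} = \sqrt{1-\Theta(h^+_{e'})}$ for every edge $e'$ incident to $i$ or $j$, and $\sin\theta_{ij} = \sqrt{\Theta(h^+_e)}$. Then I would invoke the monogamy relation \eqref{eqn_monogamyStar} at vertex $i$, which yields $h^+_{ij} + \sum_{k \in N(i) \setminus \{j\}} h^+_{ik} \leq 1$, and apply \Cref{lemma_thetaProdLB} with $x_0 := h^+_e$ and $(x_s) := (h^+_{ik})_{k \in N(i)\setminus\{j\}}$ to obtain
\begin{equation*}
    A_{ij} \;=\; \prod_{k \in N(i) \setminus \{j\}} \sqrt{1 - \Theta(h^+_{ik})} \;\geq\; \sqrt{1 - \Theta(1 - h^+_e)},
\end{equation*}
and symmetrically $B_{ij} \geq \sqrt{1 - \Theta(1 - h^+_e)}$ using monogamy at vertex $j$.

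Substituting these two lower bounds back into the cut-edge inequality gives
\begin{equation*}
    \bra{\phi} H_e \ket{\phi} \;\geq\; 1 + \bigl(1 - \Theta(1-h^+_e)\bigr) + 2\sqrt{\Theta(h^+_e)\bigl(1-\Theta(1-h^+_e)\bigr)}\;=\;2 - \Theta(1-h^+_e) + 2\sqrt{\Theta(h^+_e)(1-\Theta(1-h^+_e))}.
\end{equation*}
Since $h_e > -1$ we have $2+2h_e > 0$, so dividing by $2+2h_e$ yields exactly $\delta_\Theta(h_e)$ as defined in \eqref{eqn_deltaFunc}, which is \eqref{eqn_valueDivideResult}.

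I do not foresee a serious obstacle: the only mild subtlety is the case $h_e \in (-1,0)$, where $h^+_e = 0$ and hence $\sin\theta_e = \sqrt{\Theta(0)} = 0$; the cross term drops out, but the remaining bound $1 + A_{ij}B_{ij} \geq 2 - \Theta(1)$ together with $\Theta(1-h^+_e) = \Theta(1)$ still matches the numerator of $\delta_\Theta(h_e)$, so the same chain of inequalities goes through uniformly. The only genuine ingredient beyond bookkeeping is making sure that \Cref{lemma_thetaProdLB} is applicable, which is where the hypothesis that all neighbors of $e$ also use the $\arcsin\sqrt{\Theta(\cdot)}$ branch is essential; without it one could not write $\cos\theta_{e'}$ in terms of $\Theta(h^+_{e'})$ and the product bound would fail.
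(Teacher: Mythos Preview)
Your proposal is correct and follows essentially the same route as the paper: invoke the cut-edge bound \eqref{eqn_strongBoundEdgeBip}, use the hypothesis to write $\sin\theta_e=\sqrt{\Theta(h_e^+)}$ and $\cos\theta_{e'}=\sqrt{1-\Theta(h_{e'}^+)}$, apply monogamy \eqref{eqn_monogamyStar} together with \Cref{lemma_thetaProdLB} to bound $A_{ij},B_{ij}\geq\sqrt{1-\Theta(1-h_e^+)}$, and then divide by $2+2h_e>0$. The paper's proof is slightly more compressed (it cites the already-established product bound \eqref{eqn_boundProdTheta2} rather than re-deriving it), but the logic is identical.
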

\begin{proof}
We substitute $\theta_e = \arcsin{ \sqrt{\Theta \mleft(h^+_e \mright)}}$ in \eqref{eqn_strongBoundEdgeBip} (with $e = \{i,j\}$), use that $\sin \theta_e =  \sqrt{\Theta \mleft(h^+_e \mright)}$, and apply \eqref{eqn_boundProdTheta2}, to obtain
    \begin{equation}
    \label{eqn_rewriteHEq}
    \begin{aligned}
         \bra{\phi} H_{e} \ket{\phi} &\geq 1 + \left(1-\Theta \mleft( 1-h^+_e \mright) \right) + 2 \sqrt{\Theta \mleft(h^+_e \mright)} \sqrt{1-\Theta \mleft( 1-h^+_e \mright)} \\
         &=2 - \Theta \mleft( 1-h^+_e \mright) + 2 \sqrt{\Theta \mleft(h^+_e \mright)\left(1-\Theta \mleft( 1-h^+_e \mright)\right)} = \delta_{\Theta}\mleft( h_e \mright) \left(2 +2 h_e \right).
    \end{aligned}
    \end{equation}
    Since $h_e > -1$, \eqref{eqn_rewriteHEq} is equivalent to \eqref{eqn_valueDivideResult}.
\end{proof}
Lastly, we require the following refinement of monogamy of entanglement, see \eqref{eqn_monogamyStar}.
\begin{lemma}
    \label{lemma_quadHBound}
    Let $e$ and $e'$ be adjacent edges, and let $L \in \feasLass{k}{n}$, see \eqref{eqn_feasLass}, with $k \geq 2$ and $n \geq 3$. Consider the $h$ values as in \Cref{def_sdpValues}, corresponding to $L$. We have
    \begin{align}
        \label{eqn_hSolutionUB}
        h_{e'} \leq \frac{1}{2}\left(\sqrt{3\left(1-h^2_e \right)}-h_e\right).
    \end{align}
  In particular,
    \begin{align}
        \label{eqn_extraCondition}
        h_e \geq \frac{\sqrt{3}}{2} \implies h_{e'} \leq 0.
    \end{align}
\end{lemma}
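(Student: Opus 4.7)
The plan is to combine two positive semidefiniteness constraints from principal submatrices of $M(L)$ that involve the \emph{SWAP polynomials} $S_{ab} := \tfrac{1}{2}(I + x_a x_b + y_a y_b + z_a z_b)$. Each $S_{ab}$ is self-adjoint of degree $2$ and satisfies $S_{ab}^2 \equiv I \pmod{\mathcal{I}}$, with $L(S_{ab}) = (1 - g_{ab})/2 = -h_{ab}$ for any $L \in \feasLass{k}{n}$. Since $e$ and $e'$ are adjacent and $n \geq 3$, write $e = \{i,j\}$ and $e' = \{j,k\}$ with $i,j,k$ distinct, and set $s_1 := h_{ij} + h_{jk} + h_{ik}$.

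The key Pauli-algebraic identity is
\begin{equation*}
    S_{ab}S_{bc} + S_{bc}S_{ab} \equiv -I + S_{ab} + S_{bc} + S_{ac} \pmod{\mathcal{I}_4},
\end{equation*}
which I would verify by expanding both sides into Pauli words. Since $S_{ab}S_{bc} - S_{bc}S_{ab}$ is anti-self-adjoint and hence annihilated by every $L \in \feasLass{k}{n}$ (see \eqref{eqn_feasLass}), this gives $L(S_{ab}S_{bc}) = L(S_{bc}S_{ab}) = -\tfrac{1}{2} - \tfrac{1}{2}s_1$. Applying the identity with each of the three choices of common vertex in $\{i,j,k\}$ yields $L(S_{ij}S_{jk}) = L(S_{ij}S_{ik}) = L(S_{jk}S_{ik}) =: c$.

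Next, consider the $4\times 4$ principal submatrix of $M(L)$ indexed by $\{I, S_{ij}, S_{jk}, S_{ik}\}$; testing with $v = (s_1, 1, 1, 1)^\top$ gives $v^\top M(L)\, v = -s_1(s_1+3) \geq 0$, so $s_1 \leq 0$, i.e., $h_{ik} \leq -h_{ij} - h_{jk}$. Then consider the $3\times 3$ principal submatrix indexed by $\{I, S_{ij}, S_{jk} - S_{ik}\}$. Using $L(S_{ij}(S_{jk} - S_{ik})) = c - c = 0$ and, via the same anticommutator identity, $L((S_{jk} - S_{ik})^2) = 3 + s_1$, its determinant equals $(3+s_1)(1 - h_{ij}^2) - (h_{jk} - h_{ik})^2$, and PSD yields $(h_{jk} - h_{ik})^2 \leq (3 + s_1)(1 - h_{ij}^2) \leq 3(1 - h_{ij}^2)$, where the last inequality uses $s_1 \leq 0$.

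Finally, $h_{ik} \leq -h_{ij} - h_{jk}$ gives $h_{jk} - h_{ik} \geq h_{ij} + 2h_{jk}$, and combining with $|h_{jk} - h_{ik}| \leq \sqrt{3(1 - h_{ij}^2)}$ yields $h_{ij} + 2h_{jk} \leq \sqrt{3(1 - h_{ij}^2)}$, which rearranges to \eqref{eqn_hSolutionUB}. The \emph{in particular} claim \eqref{eqn_extraCondition} is then immediate: if $h_e \geq \sqrt{3}/2$ then $\sqrt{3(1-h_e^2)} \leq h_e$, so the right-hand side of \eqref{eqn_hSolutionUB} is $\leq 0$. The main obstacle will be spotting the right $3\times 3$ block; using the \emph{anti-symmetric} combination $S_{jk} - S_{ik}$ is what creates both the off-diagonal zero and the factor $3 + s_1$ that, combined with the ``monogamy-type'' bound $s_1 \leq 0$ from the $4\times 4$ block, delivers the $\sqrt{3}$.
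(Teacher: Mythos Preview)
Your proof is correct. The paper's own proof is a one-liner that cites \cite[Lem.~3]{lee2024improved} for the inequality $3(h_e+h_{e'})^2+(h_{e'}-h_e)^2\le 3$ (valid when $h_e+h_{e'}\ge -\tfrac12$; the complementary case is harmless since then $h_{e'}<-\tfrac12-h_e\le \tfrac12(\sqrt{3(1-h_e^2)}-h_e)$), and then solves for $h_{e'}$.

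Your route is genuinely different and fully self-contained: you work directly with the SWAP polynomials and exploit the qubit-specific relation $S_{ab}S_{bc}+S_{bc}S_{ab}\equiv -I+S_{ab}+S_{bc}+S_{ac}\pmod{\mathcal I_4}$ (equivalently, the vanishing of the $3$-qubit antisymmetrizer), first recovering the triangle bound $s_1=h_{ij}+h_{jk}+h_{ik}\le 0$ from the $4\times 4$ Gram block and then extracting the sharper pairwise bound from the $3\times 3$ block built with the antisymmetric combination $S_{jk}-S_{ik}$. Your approach makes transparent where the factor $\sqrt{3}$ originates (from $L((S_{jk}-S_{ik})^2)=3+s_1\le 3$) and avoids any side condition; the paper's approach is shorter only because the work is outsourced to the cited lemma. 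One small point of phrasing: your $3\times 3$ block is not literally a principal submatrix of $M(L)$, since $S_{jk}-S_{ik}$ is not a basis monomial, but rather the Gram matrix $\big[L(p_i^*p_j)\big]$ for $p_1=1$, $p_2=S_{ij}$, $p_3=S_{jk}-S_{ik}$; this is PSD because $L(p^*p)\ge 0$ for all $p\in\R\langle\mathbf p\rangle_2$, which is exactly what you use, just worth stating precisely.
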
 
\begin{proof}
    \cite[Lem.~3]{lee2024improved} shows that $3 \left( h_{e'}+h_{e} \right)^2+\left( h_{e'}-h_{e} \right)^2 \leq 3$,
    whenever $h_{e'}+h_{e} \geq -1/2$ (recall that the variables in \cite{lee2024improved} are scaled differently compared to this work). Solving this inequality for $h_{e'}$ yields \eqref{eqn_hSolutionUB}. The implication \eqref{eqn_extraCondition} follows directly from \eqref{eqn_hSolutionUB}.
\end{proof}
We are now ready to compute (a lower bound on) the approximation ratio of \Cref{alg:bipartiteG2}.
\begin{theorem}  \label{thm_improvedBipartRatio}
    The approximation ratio of \Cref{alg:bipartiteG2} for bipartite graphs, with parameters $\Theta \in \mathcal{A}$, $h_\mathrm{max} \in \left[ \sqrt{3}/2,1 \right]$ and $\theta^* \in [0,1]$, is at least
    \begin{align}
        \label{eqn_ratioExprBig}
        \min\left\{ \frac{2-\theta^*}{2+\sqrt{3\left( 1-h_\mathrm{max}^2 \right)} - h_\mathrm{max}},  \min_{x \in \left[ 0, h_\mathrm{max} \right]} \delta_{\Theta}(x), \frac{1+\sqrt{\theta^*}}{2} \right\}.
    \end{align}
\end{theorem}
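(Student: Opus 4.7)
I would prove the bound edge by edge. Since $H_G = \sum_{e \in E} w_e H_e$ and \ref{eqn_sdpRelax} gives $\lambdaMax{H_G} \leq \sum_{e \in E} w_e (2+2h_e)$ via \eqref{eqn_sdpUBtoHMax}, a uniform lower bound $\bra\phi H_e \ket\phi / (2+2h_e) \geq r$ on every edge is enough to conclude an approximation ratio of $r$. Because $G$ is bipartite and Step~1 of \Cref{alg:bipartiteG2} places the endpoints of every edge in different parts, every edge is cut, so \Cref{lemma_nrgCutEdges} supplies the strong inequality $\bra\phi H_e \ket\phi \geq 1 + A_e B_e + \sin\theta_e(A_e + B_e)$ uniformly. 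I would then split $E$ into three classes, one per term in the minimum: (A) \emph{special} edges with $h_e > h_\mathrm{max}$; (B) \emph{standard} edges whose every adjacent edge is also standard; and (C) standard edges with at least one special neighbor.

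Class (A) is handled using the quadratic monogamy \eqref{eqn_extraCondition}: since $h_e > h_\mathrm{max} \geq \sqrt{3}/2$, every edge $e'$ adjacent to $e$ has $h_{e'} \leq 0$, hence $\theta_{e'} = \arcsin\sqrt{\Theta(0)} = 0$ and $A_e = B_e = 1$. Substituting $\sin\theta_e = \sqrt{\theta^*}$ into \Cref{lemma_nrgCutEdges} gives $\bra\phi H_e \ket\phi \geq 2+2\sqrt{\theta^*}$, and with $2+2h_e \leq 4$ the per-edge ratio is at least $(1+\sqrt{\theta^*})/2$. Class (B) reduces to \Cref{lemma_boundHsimplify} verbatim: its hypothesis is exactly that every edge adjacent to $e$ satisfies $\theta_{e'} = \arcsin\sqrt{\Theta(h_{e'}^+)}$, giving per-edge ratio $\geq \delta_\Theta(h_e) \geq \min_{x \in [0, h_\mathrm{max}]} \delta_\Theta(x)$, where the last inequality uses \eqref{eqn_simpleThetaObs} for the case $h_e \leq 0$.

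For class (C), let $e = \{i, j\}$ be adjacent to a special edge $e^* = \{i, k^*\}$ (WLOG at $i$). Equation \eqref{eqn_extraCondition} gives $h_e \leq 0$, so $\theta_e = 0$ and \Cref{lemma_nrgCutEdges} collapses to $\bra\phi H_e \ket\phi \geq 1 + A_e B_e$. The quantitative bound \eqref{eqn_hSolutionUB}, together with monotonicity of $x \mapsto \tfrac{1}{2}(\sqrt{3(1-x^2)} - x)$ on $[\sqrt{3}/2,1]$ and $h_{e^*} > h_\mathrm{max}$, yields $h_e \leq \tfrac{1}{2}(\sqrt{3(1-h_\mathrm{max}^2)} - h_\mathrm{max})$, so $2+2h_e$ is bounded by the denominator of the first term. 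Applying \eqref{eqn_extraCondition} once more at $i$, every edge at $i$ other than $e^*$ has $h \leq 0$ and angle $0$, so all cosine factors in $A_e$ equal $1$ except $\cos\theta_{e^*} = \sqrt{1-\theta^*}$; hence $A_e = \sqrt{1-\theta^*}$. If $j$ also has a special neighbor, the symmetric argument gives $B_e = \sqrt{1-\theta^*}$, so $A_e B_e = 1-\theta^*$ and $\bra\phi H_e \ket\phi \geq 2-\theta^*$, producing the first term.

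The main obstacle is the sub-case of (C) in which $j$ has only standard neighbors. There \Cref{lemma_thetaProdLB} at $j$, using $x_0 = h_e^+ = 0$ and monogamy $\sum_{k \in N(j) \setminus \{i\}} h_{jk}^+ \leq 1$, only yields $B_e \geq \sqrt{1-\Theta(1)}$; the required bound $A_e B_e \geq 1-\theta^*$ then hinges on $\Theta(1) \leq \theta^*$. I would treat this as an implicit compatibility condition on the parameters, consistent with the choice of $\Theta$ and $\theta^*$ made later to obtain $\ratioBipart{}$ in \Cref{thm_LBbipartAlg}, or alternatively sharpen the analysis of this sub-case by using the full monogamy relation at $j$ to show that when $B_e$ is far from $\sqrt{1-\theta^*}$, the upper bound on $h_e$ tightens enough for the ratio still to exceed the first term. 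Aggregating the three per-edge bounds with weights $w_e$ and dividing by the SDP upper bound then yields the theorem.
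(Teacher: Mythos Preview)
Your overall strategy matches the paper's: reduce to a per-edge ratio via \eqref{eqn_sdpUBtoHMax}, use \Cref{lemma_nrgCutEdges} on every edge (all edges are cut), and split into cases. Your decomposition (by whether $e$ or its neighbours are special) is slightly different from the paper's (by the value of $h_e$), but the two are equivalent after you use \eqref{eqn_hSolutionUB} and \eqref{eqn_extraCondition}. Classes (A) and (B) are handled exactly as in the paper's Cases~3 and~2.

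The gap is in your Class (C), sub-case ``$j$ has only standard neighbours''. You correctly derive $A_e=\sqrt{1-\theta^*}$ and $B_e\ge\sqrt{1-\Theta(1)}$, and then try to force the first term of \eqref{eqn_ratioExprBig} by demanding $A_eB_e\ge 1-\theta^*$, which indeed would need $\Theta(1)\le\theta^*$. But this extra condition is \emph{not} assumed in the theorem, and the refined monogamy argument you sketch is not needed either. The resolution is much simpler and is precisely what the paper does: in this sub-case one has $\bra{\phi}H_e\ket{\phi}\ge 1+A_eB_e\ge 2-\Theta(1)$, and since $h_e\le 0$ (you already proved this for all of Class (C)), the denominator satisfies $2+2h_e\le 2$. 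Hence
\[
\frac{\bra{\phi}H_e\ket{\phi}}{2+2h_e}\;\ge\;\frac{2-\Theta(1)}{2}\;=\;\delta_\Theta(0)\;\ge\;\min_{x\in[0,h_{\max}]}\delta_\Theta(x).
\]
So this sub-case is absorbed by the \emph{second} term of \eqref{eqn_ratioExprBig}, not the first. With this one-line fix your proof is complete and essentially identical to the paper's.
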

\begin{proof}
Let $E$ be the edge set of the bipartite input graph. Let $(h_e)_{e \in E}$ be the values obtained in \Cref{eqn_thetaHmaxSet} of \Cref{alg:bipartiteG2}, and $\ket{\phi}$ be as in \eqref{eqn_outputStateBipart}. By the definition of $H_G$, see \eqref{eqn_hamiltonDef}, we have that
\begin{equation}
    \label{eqn_lbHGBipart}
\begin{aligned}
    \bra{\phi} H_G \ket{\phi} &= \sum_{e \in E} w_e \bra{\phi} H_e \ket{\phi} \geq \sum_{e \in E : h_e > -1}  w_e \, \frac{\bra{\phi} H_e \ket{\phi}}{ 2+2h_e  } (2+2h_e)  \\
    &\geq \left( \inf_{e \in E: h_e > -1} \frac{\bra{\phi} H_e \ket{\phi}}{ 2+2h_e  } \right) \sum_{e \in E} w_e \, (2+2h_e) \geq \left( \inf_{e \in E: h_e > -1} \frac{\bra{\phi} H_e \ket{\phi}}{ 2+2h_e  } \right) \lambdaMax{H_G}.
\end{aligned}
\end{equation}
For the first inequality, we have used that $w_e \bra{\phi} H_e \ket{\phi} \geq 0$, since $H_e = (1/4) H_e^2 \succeq 0$ and $w_e > 0$. The second inequality follows from the fact that $\sum_{e \in E : h_e > -1} w_e \, (2+2h_e) = \sum_{e \in E} w_e \, (2+2h_e)$. The last inequality follows from the fact that \ref{eqn_sdpRelax} provides an upper bound on $\lambdaMax{H_G}$, as explained in \Cref{section_SDP_defs} (see also \eqref{eqn_sdpUBtoHMax}). 

Considering \eqref{eqn_lbHGBipart}, a lower bound on the approximation ratio of \Cref{alg:bipartiteG2} is given by
    \begin{align}
        \label{eqn_hMinimization}
        \inf_{e \in E: h_e > -1} \frac{\bra{\phi} H_e \ket{\phi}}{ 2+2h_e  }.
    \end{align}
    Note here that $\bra{\phi} H_e \ket{\phi}$ is a function of $h_e$, which follows from the definition of $\ket{\phi}$, see \eqref{eqn_outputStateBipart}. We show later that $\lim_{h_e \downarrow -1} \frac{\bra{\phi} H_e \ket{\phi}}{ 2+2h_e  } = + \infty$, which implies that \eqref{eqn_hMinimization} is well-defined. Let us determine a lower bound on \eqref{eqn_hMinimization} by distinguishing three cases, based on the value of $h_e \in (-1,1]$.
    \begin{mycases}
        \item $h_e < \frac{1}{2}\left( \sqrt{3\left( 1-h_\mathrm{max}^2 \right)} - h_\mathrm{max} \right)$  \\
       Note that $h_\mathrm{max} \geq \sqrt{3}/2$, which implies that $h_e < \frac{1}{2}\left( \sqrt{3\left( 1-h_\mathrm{max}^2 \right)} - h_\mathrm{max} \right) \leq 0$. Now $h_e \leq  0 \implies h^+_e = 0 \implies \theta_e = 0 \implies \sin \theta_e = 0$. Consequently, the bound \eqref{eqn_strongBoundEdgeBip}, with $e = \{i,j\}$, simplifies to
            \begin{align}
                \label{eqn_simpleLocalB}
                \bra{\phi} H_e \ket{\phi} \geq 1 + \prod_{k \in N(i) \setminus \{j\}} \cos \theta_{ik} \prod_{k \in N(j) \setminus \{i\}} \cos \theta_{kj}. 
            \end{align}
            Let $E' := \setFunct{ \{i,k\} }{k \in N(i) \setminus \{j\}}$. By \eqref{eqn_monogamyStar}, at most one $e' \in E'$ can satisfy $h_{e'} > h_\mathrm{max} \geq \sqrt{3}/2$. If there is one such $e'$, then by \eqref{eqn_extraCondition}, all $\widetilde{e} \in E' \setminus \{e'\}$ satisfy $h_{\widetilde{e}} \leq 0 \implies \theta_{\widetilde{e}} = 0 \implies \cos{\theta_{\widetilde{e}}} = 1$. Note also that $h_{e'} > h_\mathrm{max}$ implies that $\theta_{e'} = \arcsin{\sqrt{\theta^*}}$, see \eqref{eqn_thetaBipartSet}.
            Hence, $\prod_{k \in N(i) \setminus \{j\}} \cos \theta_{ik} = \cos{\theta_{e'}} = \cos{\arcsin{\sqrt{\theta^*}}} = \sqrt{1 - \theta^*}$. \\
            Alternatively, if all $e' \in E'$ satisfy $h_{e'} \leq h_\mathrm{max}$, we have that all $\theta_{e'} = \arcsin{\sqrt{\Theta \mleft(h^+_{e'} \mright)}}$. Thus, we can proceed as in \eqref{eqn_boundProdTheta2}, and derive $\prod_{k \in N(i) \setminus \{j\}} \cos \theta_{ik} \geq  \sqrt{1-\Theta\mleft(1-h^+_e \mright)} = \sqrt{1 - \Theta(1)}$.

            To summarize both cases:
            \begin{align}
                \label{eqn_prodLB3}
                \prod_{k \in N(i) \setminus \{j\}} \cos \theta_{ik} \geq \min\left\{ \sqrt{1 - \theta^*}, \sqrt{1 - \Theta(1)} \right\} = \sqrt{1 - \max\{\theta^*, \Theta(1)\}},
            \end{align}
            and similarly, $\prod_{k \in N(j) \setminus \{i\}} \cos \theta_{kj} \geq \sqrt{1 - \max\{\theta^*, \Theta(1)\}}$.
            Combining  \eqref{eqn_simpleLocalB} and \eqref{eqn_prodLB3} yields that $\bra{\phi} H_e \ket{\phi} \geq 2- \max\{\theta^*, \Theta(1) \}$. Consequently, \eqref{eqn_hMinimization} can be lower bounded by
            \begin{align}
                \label{eqn_eHeB1}
                \frac{\bra{\phi} H_e \ket{\phi}}{2+2h_e} \geq \frac{2- \max\{\theta^*, \Theta(1) \}}{2+2h_e} > \frac{2- \max\{\theta^*, \Theta(1) \}}{2+\sqrt{3\left( 1-h_\mathrm{max}^2 \right)} - h_\mathrm{max}}.
            \end{align}
            Here, the last inequality is due to the assumption $h_e < \frac{1}{2}\left( \sqrt{3\left( 1-h_\mathrm{max}^2 \right)} - h_\mathrm{max} \right)$, given by case 1. It follows by \eqref{eqn_eHeB1} that \eqref{eqn_hMinimization} is well-defined. Observe that 
            \begin{align}
                \label{eqn_sqrtHRound}
                \sqrt{3\left( 1-h_\mathrm{max}^2 \right)} - h_\mathrm{max} \leq 0 \implies \frac{2-\Theta(1)}{2+\sqrt{3\left( 1-h_\mathrm{max}^2 \right)} - h_\mathrm{max} } \geq \frac{2-\Theta(1)}{2} = \delta_\Theta(0).
            \end{align}
            For the inequality, we have used that $\Theta(1) \leq 1$, see \eqref{eqn_thetaProp}. The equality follows from the definition of $\delta_\Theta$, see \eqref{eqn_deltaFunc}. By combining \eqref{eqn_eHeB1} and \eqref{eqn_sqrtHRound}, it follows that
            \begin{align*}
                \frac{\bra{\phi} H_e \ket{\phi}}{2+2h_e} \geq \min\left\{ \delta_\Theta(0), \frac{2- \theta^*}{2+\sqrt{3\left( 1-h_\mathrm{max}^2 \right)} - h_\mathrm{max}} \right\}.
            \end{align*}
    \item $h_e \in \mleft[ \frac{1}{2}\left( \sqrt{3\left( 1-h_\mathrm{max}^2 \right)} - h_\mathrm{max} \right), h_\mathrm{max}\mright]$ \\
        Let $e'$ be an edge adjacent to $e$, and consider $h_{e'}$. If $h_{e'} > h_\mathrm{max}$, then by     \eqref{eqn_hSolutionUB}, $h_e < \frac{1}{2}\left( \sqrt{3\left( 1-h_\mathrm{max}^2 \right)} - h_\mathrm{max} \right)$, which contradicts case 2. Thus, it holds that for all edges $e'$ adjacent to $e$, $h_{e'} \leq h_\mathrm{max}$. Now, by \eqref{eqn_thetaBipartSet},
        $h_{e'} \leq h_\mathrm{max} \implies \theta_{e'} = \arcsin{\sqrt{\Theta\mleft(h^+_{e'}\mright)}}$.  Additionally, $\theta_{e} = \arcsin{\sqrt{\Theta\mleft(h^+_e\mright)}}$. Hence, the conditions of \Cref{lemma_boundHsimplify} are satisfied, which implies that 
        \begin{align*}
                 \frac{\bra{\phi} H_e \ket{\phi}}{2+2h_e} \geq \delta_{\Theta}(h_e) \geq \min_{x \in \left[ \frac{1}{2}\left( \sqrt{3\left( 1-h_\mathrm{max}^2 \right)} - h_\mathrm{max} \right), h_\mathrm{max} \right]} \delta_{\Theta}(x) = \min_{x \in \left[ 0, h_\mathrm{max} \right]} \delta_{\Theta}(x).
        \end{align*}
        Here, the equality is due to \eqref{eqn_simpleThetaObs}, and the fact that $\frac{1}{2}\left( \sqrt{3\left( 1-h_\mathrm{max}^2 \right)} - h_\mathrm{max} \right) \leq 0$, since $h_\mathrm{max} \geq \sqrt{3}/2$.
        \item $h_e \in (h_\mathrm{max},1]$ \\
        Note that $h_e > h_\mathrm{max} \geq \sqrt{3}/2$. Hence, it follows from \eqref{eqn_extraCondition} that for all edges $e'$ adjacent to $e$, $h_{e'} \leq 0 \implies h^+_{e'} = 0 \implies \theta_{e'} = 0 \implies \cos{\theta_{e'}} = 1$. Therefore, \eqref{eqn_strongBoundEdgeBip} simplifies to $\bra{\phi} H_e \ket{\phi} \geq 2+2 \sin{\theta_{e}}$. Since $h_e > h_\mathrm{max}$, it follows by \eqref{eqn_thetaBipartSet} that $\theta_e = \arcsin{\sqrt{\theta^*}}$, which implies that $\sin{\theta_{e}} = \sqrt{\theta^*}$. Consequently, \eqref{eqn_hMinimization} can be lower bounded by
        \begin{align*}
            \frac{\bra{\phi} H_e \ket{\phi}}{2+2h_e} \geq \frac{2+2 \sin \theta_e}{2+2h_e} = \frac{2+2\sqrt{\theta^*}}{2+2h_e} \geq \frac{2+2\sqrt{\theta^*}}{4} = \frac{1+ \sqrt{\theta^*}}{2},
        \end{align*}
        where the second inequality is due to $h_e \leq 1$.
    \end{mycases}
   By combining the three cases and noting that $\delta_\Theta(0) \geq  \min_{x \in \left[ 0, h_\mathrm{max} \right]} \delta_{\Theta}(x)$, it follows that the value $\frac{\bra{\phi} H_e \ket{\phi}}{2+2h_e}$ is at least \eqref{eqn_ratioExprBig}, which proves the theorem.
\end{proof}
The following result shows that the optimization problem  $\min_{x \in \left[ 0, h_\mathrm{max} \right]} \delta_{\Theta}(x)$ in \eqref{eqn_ratioExprBig}, for $\delta_\Theta$ as in \eqref{eqn_deltaFunc}, simplifies if $\Theta(x) = Rx$ for some $R \in [0,1/2]$. Recall from \Cref{lemma_linFunctionInOmega} that $Rx \in \mathcal{A}$.
\begin{lemma}
    \label{lemma_simplifyThetaMin}
    Let $\Theta(x) = Rx$, with $R \in [0,1/2]$. Let $h_\mathrm{max} \in \left[ \sqrt{3}/2,1 \right]$. We have that
    \begin{align}
        \label{eqn_desirecConc}
        \min_{x \in \left[0, h_\mathrm{max} \right]} \delta_{\Theta}(x) = \min \left\{ \delta_{\Theta}(0),\delta_{\Theta}\mleft(h_\mathrm{max} \mright) \right\}.
    \end{align}
\end{lemma}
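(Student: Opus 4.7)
My plan is to show that $\delta_\Theta$, restricted to $[0, h_{\max}]$, is either monotone or first increasing and then decreasing, since this immediately forces its minimum over the interval to be attained at an endpoint. For $x \in [0, h_{\max}]$ we have $x^+ = x$, so $\delta_\Theta(x) = F(x)/(2+2x)$ with
\[
F(x) := 2 - R + Rx + 2\sqrt{Rx(1-R+Rx)}.
\]
A quotient-rule calculation gives $\delta_\Theta'(x) = 2\,G(x)/(2+2x)^2$ where $G(x) := F'(x)(1+x) - F(x)$, so on $(0, h_{\max}]$ the sign of $\delta_\Theta'$ coincides with that of $G$.

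The key step is the concavity of $F$. Setting $q(x) := Rx(1-R+Rx) = R(1-R)x + R^2 x^2$, the standard formula $(\sqrt{q})'' = \bigl(q\,q'' - \tfrac{1}{2}(q')^2\bigr)/(2\, q^{3/2})$ together with $q'(x) = R(1-R) + 2R^2 x$ and $q''(x) = 2R^2$ collapses the numerator to $-R^2(1-R)^2/2 \le 0$, so $\sqrt{q}$ is concave on $(0, h_{\max}]$. Adding the affine term $2 - R + Rx$ preserves concavity, so $F''\le 0$.

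It follows that $G'(x) = F''(x)(1+x) \le 0$, hence $G$ is non-increasing on $(0, h_{\max}]$. Therefore $\delta_\Theta'$ changes sign at most once on $(0, h_{\max})$, and any such sign change must be from positive to negative. Thus $\delta_\Theta$ is monotone or unimodal with an interior maximum on $[0, h_{\max}]$, and its minimum is attained at $0$ or $h_{\max}$, as claimed. The only non-routine ingredient is the concavity calculation for $\sqrt{q}$, which relies on $R(1-R) \geq 0$ and in fact goes through for any $R \in [0,1]$; the stronger hypothesis $R \in [0, 1/2]$ is not needed at this step but may play a role elsewhere in the authors' analysis.
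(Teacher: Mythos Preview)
Your proof is correct and takes a genuinely different route from the paper's. The paper computes $\delta_\Theta'(x)$ explicitly, writes its numerator as $f_R(x)=(3R^2-R)x+R-R^2+(2R-2)\sqrt{Rx(Rx-R+1)}$, squares the equation $f_R(x)=0$ to obtain a quadratic, exhibits the unique root $x^*\in(0,1]$ in closed form, and then argues from the sign of $\delta_\Theta'$ near $0$ that $\delta_\Theta(0)<\delta_\Theta(x^*)$, so $x^*$ cannot be the minimizer. Your argument bypasses all of this: by showing directly that $F$ is concave (via the clean identity $2qq''-(q')^2=-R^2(1-R)^2$), you get $G'=F''(1+x)\le 0$, hence $G$ is monotone and $\delta_\Theta$ is quasi-concave on $[0,h_{\max}]$. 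This is shorter, avoids solving for $x^*$, and as you note works uniformly for $R\in[0,1]$, whereas the paper has to check separately that its explicit formula for $x^*$ is well-defined on $(0,1/2]$. The paper's approach does yield the exact location of the critical point, which could be useful if one wanted quantitative information about where the interior maximum sits, but for the stated lemma your concavity argument is the more economical one.
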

\begin{proof}
In case $R = 0$, $\delta_{\Theta}(x) = 2/(2+2x) \implies  \min_{x \in \left[0, h_\mathrm{max} \right]} \delta_{\Theta}(x) = \delta_{\Theta}\left( h_\mathrm{max} \right)$. 
Let $R \in (0,1/2]$.  The derivative of $\delta_\Theta(x)$ on the interval $(0,1]$, is given by
\begin{align}
    \label{eqn_derivDelta}
        \delta'_\Theta(x) = \frac{f_R(x)}{2\,{\left(x+1\right)}^2\,\sqrt{R\,x\,\left(R\,x-R+1\right)}},
\end{align}
where $f_R(x) := \left(3\,R^2-R\right)\,x+R - R^2+ \left(2\,R-2\right)\,\sqrt{R\,x\,\left(R\,x-R+1\right)}$. A stationary point $x^*\in (0,1]$ of $\delta_\Theta$ satisfies
 $f_R(x^*) = 0$. This is equivalent to
\begin{align}
      \label{eqn_quadraticEquationX}
        \left( \left(3\,R^2-R\right)\,x^*+R - R^2 \right)^2 - \left(2\,R-2\right)^2 \, Rx^* \left(R\,x^*-R+1\right)  = 0.
    \end{align}
The solution of the quadratic equation \eqref{eqn_quadraticEquationX} on the interval $(0,1]$ is:
\begin{align*}
        x^* = \frac{R^{3}+2R^{2}-5R+2 -2\left(1-R\right)^{2}\sqrt{1-R-R^{2}}}{R\left(R+1\right)\left(5R-3\right)}.
\end{align*}
It can be verified that for any $R \in (0,1/2]$, $x^*$ is well-defined. Since the stationary point $x^*$ is unique and $\delta_{\Theta}$ is continuous, we have that
\begin{align}
    \label{eqn_deltaMinAtTP}
     \min_{x \in \left[0, h_\mathrm{max} \right]} \delta_{\Theta}(x) = \min \left\{ \delta_{\Theta}(0), \, \delta_{\Theta}(x^*), \, \delta_{\Theta}\mleft(h_\mathrm{max} \mright) \right\}.
\end{align}
By inspecting \eqref{eqn_derivDelta}, it can be seen that there exists a positive $\varepsilon$, dependent on $R$, such that $\delta'_{\Theta}(x) > 0$ for all $x \in (0,\varepsilon]$, which implies that $\delta_{\Theta}(0) < \delta_{\Theta}(x^*)$. By combining \eqref{eqn_deltaMinAtTP} and $\delta_{\Theta}(0) < \delta_{\Theta}(x^*)$, \eqref{eqn_desirecConc} follows.
\end{proof}
We now provide input parameters for \Cref{alg:bipartiteG2}, such that its approximation ratio is at least $\ratioBipart{}$.
\begin{theorem}
    \label{thm_LBbipartAlg}
    \Cref{alg:bipartiteG2}, with inputs $\Theta(x) = 0.367x$, $h_\mathrm{max} = 0.876$ and $\theta^* = 2/5$, achieves an approximation ratio of at least $\ratioBipartUnrounded{}$.
\end{theorem}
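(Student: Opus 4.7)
The plan is to apply Theorem \ref{thm_improvedBipartRatio} with the stated parameters and reduce the bound to a finite numerical verification. First, I would verify that $\Theta(x) = 0.367x$ lies in the admissible set $\mathcal{A}$: by Lemma \ref{lemma_linFunctionInOmega} this is immediate since $0.367 \in [0,1]$. Combined with $h_\mathrm{max} = 0.876 \in [\sqrt{3}/2,1]$ and $\theta^* = 2/5 \in [0,1]$, the hypotheses of Theorem \ref{thm_improvedBipartRatio} are met, so the approximation ratio is at least the minimum of the three quantities appearing in \eqref{eqn_ratioExprBig}.

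Next, I would apply Lemma \ref{lemma_simplifyThetaMin} to simplify the middle quantity. Since the coefficient $R = 0.367 \leq 1/2$, the minimization $\min_{x \in [0,h_\mathrm{max}]}\delta_\Theta(x)$ reduces to $\min\{\delta_\Theta(0), \delta_\Theta(h_\mathrm{max})\}$. Therefore it suffices to show that each of the four quantities
\begin{align*}
\frac{2-\theta^*}{2+\sqrt{3(1-h_\mathrm{max}^2)} - h_\mathrm{max}}, \quad \delta_\Theta(0), \quad \delta_\Theta(h_\mathrm{max}), \quad \frac{1+\sqrt{\theta^*}}{2}
\end{align*}
is at least $\ratioBipartUnrounded{}$. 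I would substitute the parameter values into each expression (using the definition \eqref{eqn_deltaFunc} of $\delta_\Theta$) and verify the inequalities by direct computation.

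The verification itself is a routine closed-form arithmetic check with no conceptual obstacle. The real work lies in the choice of parameters rather than the proof: $\theta^* = 2/5$ is essentially pinned by the fourth quantity, since $(1+\sqrt{2/5})/2 \approx 0.81623$; $h_\mathrm{max} = 0.876$ is then chosen so that the first quantity also clears the target (evaluating to approximately $0.8166$); and $R = 0.367$ is tuned so that both $\delta_\Theta(0) = (2-R)/2 \approx 0.8165$ and $\delta_\Theta(h_\mathrm{max}) \approx 0.8162$ meet the threshold. Taking the minimum of these four numbers yields the claimed lower bound of at least $\ratioBipartUnrounded{}$.
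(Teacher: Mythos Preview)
Your proposal is correct and follows essentially the same approach as the paper's proof: verify the parameters are admissible, invoke \Cref{thm_improvedBipartRatio}, use \Cref{lemma_simplifyThetaMin} (valid since $R=0.367\le 1/2$) to reduce the middle term to its endpoints, and check the resulting four numbers. Your write-up is in fact more explicit than the paper's, which simply states that the minimum in \eqref{eqn_difficultMinP} can be verified by computer.
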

\begin{proof}
The given parameters satisfy the requirements of \Cref{alg:bipartiteG2}. By combining \Cref{thm_improvedBipartRatio} and \Cref{lemma_simplifyThetaMin}, we find that the approximation ratio of  \Cref{alg:bipartiteG2} (with the given inputs) is at least
\begin{align}
\label{eqn_difficultMinP}
        \min\left\{ \frac{2-\theta^*}{2+\sqrt{3\left( 1-h_\mathrm{max}^2 \right)} - h_\mathrm{max}},  \,\, \delta_{\Theta}(0), \, \delta_{\Theta}\mleft( h_\mathrm{max}\mright), \,\, \frac{1+\sqrt{\theta^*}}{2} \right\}.
\end{align}
Using a computer, it can be verified that \eqref{eqn_difficultMinP} is at least $\ratioBipartUnrounded{}$.
\end{proof}

Let
\begin{align}
    \label{eqn_optRatioR}
    r^* := \max_{\Theta \in \mathcal{A}, h_\mathrm{max} \in \left[ \sqrt{3}/2,1 \right], \theta^* \in [0,1]} \min\left\{ \frac{2-\theta^*}{2+\sqrt{3\left( 1-h_\mathrm{max}^2 \right)} - h_\mathrm{max}},  \min_{x \in \left[ 0, h_\mathrm{max} \right]} \delta_{\Theta}(x), \frac{1+\sqrt{\theta^*}}{2} \right\}
\end{align}
denote the optimal approximation ratio of \Cref{alg:bipartiteG2}, with respect to the parameters $\Theta$, $h_\mathrm{max}$ and $\theta^*$. \Cref{thm_LBbipartAlg} proves that $r^* \geq \ratioBipartUnrounded{}$. The following result provides an upper bound on $r^*$ (similar to \Cref{lemma_UB_triangleFree} for \Cref{alg:triangleFree}).

\begin{theorem}
    \label{thm_upperBoundRStar}
    The optimal approximation ratio of \Cref{alg:bipartiteG2}, given by $r^*$ as in \eqref{eqn_optRatioR}, satisfies
    \begin{align}
        \label{eqn_bipartUBproveR}
        \ratioBipartUnrounded{} \leq r^* < \ratioBipartUB{}
    \end{align}
\end{theorem}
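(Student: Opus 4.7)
The plan is to argue by contradiction. Set $r := \ratioBipartUB{}$ and suppose that there exist $\Theta \in \mathcal{A}$, $h_\mathrm{max} \in [\sqrt{3}/2,1]$ and $\theta^* \in [0,1]$ for which each of the three expressions inside the min of \eqref{eqn_optRatioR} (call them $T_1, T_2, T_3$, in the order they appear) is at least $r$. The strategy is to derive a lower bound on $h_\mathrm{max}$ from $T_1, T_3 \geq r$, a matching upper bound on $h_\mathrm{max}$ from $T_2 \geq r$, and then show the two bounds are numerically incompatible. Only the elementary properties of $\Theta \in \mathcal{A}$, namely $\Theta(0) = 0$, $\Theta(1) \leq 1$ and monotonicity, enter the argument; the nontrivial product condition in~\eqref{eqn_thetaProp} is not needed here (which is fine, since relaxing the feasible set of $\Theta$ only weakens the upper bound we are trying to prove).

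For the lower bound on $h_\mathrm{max}$, $T_3 \geq r$ gives $\theta^* \geq (2r-1)^2$, while $T_1 \geq r$ gives $\theta^* \leq 2 - r\left(2 + \sqrt{3(1-h_\mathrm{max}^2)} - h_\mathrm{max}\right)$. Eliminating $\theta^*$ and rearranging yields
\begin{equation*}
    r \sqrt{3(1-h_\mathrm{max}^2)} \;\leq\; (1 + 2r - 4r^2) + r h_\mathrm{max}.
\end{equation*}
At $r = \ratioBipartUB{}$ the constant $1+2r-4r^2$ is slightly negative, so one first checks that the right-hand side is nonnegative (this forces $h_\mathrm{max}$ above a small explicit threshold) and then squares. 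The resulting quadratic inequality in $h_\mathrm{max}$ has a largest root $h_\mathrm{low}(r)$, expressible in closed form in $r$, which evaluates numerically to $h_\mathrm{low} \approx 0.898$.

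For the upper bound on $h_\mathrm{max}$, I evaluate $T_2 \geq r$ at two points of the interval $[0,h_\mathrm{max}]$. At $x = 0$ one has $\delta_\Theta(0) = (2-\Theta(1))/2$, whence $\Theta(1) \leq 2-2r$. At $x = h_\mathrm{max}$, monotonicity of $\Theta$ implies $\Theta(h_\mathrm{max}) \leq \Theta(1)$, and $\Theta(1-h_\mathrm{max}) \geq \Theta(0) = 0$; plugging these into the numerator of $\delta_\Theta(h_\mathrm{max})$, see \eqref{eqn_deltaFunc}, gives
\begin{equation*}
    \delta_\Theta(h_\mathrm{max}) \;\leq\; \frac{1 + \sqrt{\Theta(1)}}{1 + h_\mathrm{max}} \;\leq\; \frac{1 + \sqrt{2-2r}}{1 + h_\mathrm{max}}.
\end{equation*}
Demanding that this be at least $r$ yields $h_\mathrm{max} \leq h_\mathrm{up}(r) := (1 + \sqrt{2-2r})/r - 1$, which numerically is about $0.890$.

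The final step is the numerical comparison $h_\mathrm{low}(r) > h_\mathrm{up}(r)$, which I expect to be the real obstacle: everything above is elementary algebra, but the gap between the two bounds is only a few thousandths, so one must either manipulate the closed-form expressions in $r$ symbolically and certify the sign of $h_\mathrm{low}(r) - h_\mathrm{up}(r)$ at $r = \ratioBipartUB{}$, or use interval arithmetic when substituting this value of $r$. Once that is done, the incompatibility contradicts the assumption that all of $T_1, T_2, T_3$ are $\geq r$, and so $r^* < \ratioBipartUB{}$ as claimed. Incidentally, this approach also suggests that $\ratioBipartUB{}$ is essentially the smallest value of $r$ for which the gap $h_\mathrm{low}(r) - h_\mathrm{up}(r)$ is still positive, which explains the particular numeric constant appearing in the theorem.
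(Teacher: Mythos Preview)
Your argument is correct and takes a genuinely different route from the paper's. The paper discards $T_1$ and $T_3$ entirely and bounds only $T_2$: it evaluates $\delta_\Theta$ at the three fixed points $0$, $(2-\sqrt3)/2$, $\sqrt3/2$ (which lie in $[0,h_{\max}]$ for every admissible $h_{\max}$), uses the product condition in the definition of $\mathcal A$ to link the three values $\Theta((2-\sqrt3)/2),\,\Theta(\sqrt3/2),\,\Theta(1)$, and then certifies infeasibility of the resulting system by constructing linear half-planes and applying Farkas' lemma (\Cref{lemma_systemNoSolutions}). You instead keep all three terms $T_1,T_2,T_3$, use $T_1,T_3$ to force $h_{\max}\gtrsim 0.898$, and use $T_2$ only at the two endpoints $x=0$ and $x=h_{\max}$ to force $h_{\max}\lesssim 0.890$.

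What each approach buys: your proof is more elementary---no Farkas certificate, and, as you note, you never invoke the product condition in $\mathcal A$, only monotonicity and the endpoint values $\Theta(0)=0$, $\Theta(1)\le1$; so your bound actually holds for a strictly larger class of functions $\Theta$. The paper's proof, because it uses only $T_2$, establishes the stronger intermediate fact that $\max_{\Theta\in\mathcal A}\min_{x\in[0,\sqrt3/2]}\delta_\Theta(x)<\ratioBipartUB{}$, independent of $h_{\max}$ and $\theta^*$; this explains why the paper needed the product condition to compensate for throwing away $T_1,T_3$. Your closing remark that $\ratioBipartUB{}$ is essentially the threshold where $h_{\mathrm{low}}(r)-h_{\mathrm{up}}(r)$ changes sign is consistent with the paper's choice of constant, though the paper arrives at the same number by a different infeasibility argument.
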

\begin{proof}
    The lower bound on $r^*$ is due to \Cref{thm_LBbipartAlg}. For the upper bound, note first that for any $\Theta \in \mathcal{A}$, $h_\mathrm{max} \in \left[ \sqrt{3}/2,1 \right]$ and $\theta^* \in [0,1]$, we have
    \begin{equation}
        \label{eqn_rUBRewrite1}
    \begin{aligned}
        &\min\left\{ \frac{2-\theta^*}{2+\sqrt{3\left( 1-h_\mathrm{max}^2 \right)} - h_\mathrm{max}},  \min_{x \in \left[ 0, h_\mathrm{max} \right]} \delta_{\Theta}(x), \frac{1+\sqrt{\theta^*}}{2} \right\} \\
        &\leq \min_{x \in \left[ 0, h_\mathrm{max} \right]} \delta_{\Theta}(x) \leq \min_{x \in \left[ 0, \sqrt{3}/2 \right]} \delta_{\Theta}(x) \leq \min\left\{ \delta_{\Theta}(0), \, \delta_{\Theta}\mleft(\frac{2-\sqrt{3}}{2} \mright), \, \delta_{\Theta} \mleft( \frac{\sqrt{3}}{2} \mright) \right\}.
    \end{aligned}
    \end{equation}
    Here, the second inequality follows from $h_\mathrm{max} \in \left[ \sqrt{3}/2,1 \right]$, which implies that $\left[ 0, \sqrt{3}/2 \right] \subseteq \left[ 0, h_\mathrm{max} \right]$. The last inequality follows from the fact that $\delta_\Theta(y) \geq \min_{x \in \left[ 0, \sqrt{3}/2 \right]} \delta_{\Theta}(x)$ for any $y \in [0, \sqrt{3}/2]$. By combining \eqref{eqn_rUBRewrite1} with the definition of $r^*$, see \eqref{eqn_optRatioR}, we have that
    \begin{align}
        r^* \leq \max_{\Theta \in \mathcal{A}} \min\left\{ \delta_{\Theta}(0), \, \delta_{\Theta}\mleft(\frac{2-\sqrt{3}}{2} \mright), \, \delta_{\Theta} \mleft( \frac{\sqrt{3}}{2} \mright) \right\}.
        \label{eqn_simpleUBinZ}
    \end{align}
    Observe that \eqref{eqn_simpleUBinZ} is fully determined by $z := \left( z_1,z_2,z_3 \right) = \left(\Theta\mleft( \frac{2-\sqrt{3}}{2} \mright), \Theta\mleft( \frac{\sqrt{3}}{2} \mright), \Theta\mleft(1 \mright) \right)$, i.e.,
    \begin{align}
        \label{eqn_ExtraElabor}
        \left\{ \delta_{\Theta}(0), \, \delta_{\Theta}\mleft(\frac{2-\sqrt{3}}{2} \mright), \, \delta_{\Theta} \mleft( \frac{\sqrt{3}}{2} \mright) \right\} = \left\{ 1- \frac{z_3}{2}, \frac{2-z_2+2\sqrt{z_1(1-z_2)}}{4-\sqrt{3}}, \, \frac{2-z_1+2 \sqrt{z_2(1-z_1)}}{2+\sqrt{3}}\right\},
    \end{align}
    which follows from the definition of $\delta_\Theta$, see \eqref{eqn_deltaFunc}. By the properties of functions in $\mathcal{A}$, see \eqref{eqn_thetaProp}, we have
\begin{align*}
     1-z_3 = 1-\Theta(1) =  \min_{x \in [0,1]} \left( 1-\Theta(x) \right) \left( 1-\Theta(1-x) \right) &\leq \left(1-\Theta\mleft( \frac{2-\sqrt{3}}{2} \mright) \right)\left( 1-\Theta\mleft( \frac{\sqrt{3}}{2} \mright) \right) \\
     &=(1-z_1)(1-z_2).
\end{align*}
Additionally, since $\Theta$ is an increasing function, $0 = \Theta(0) \leq z_1 \leq z_2 \leq z_3 = \Theta(1) \leq 1$. We define
\begin{align*}
    F := \setFunct{ z \in \mathbb{R}^3}{ 1-z_3 \leq (1-z_1)(1-z_2), \, 0 \leq z_1 \leq z_2 \leq z_3 \leq 1}
\end{align*}
as the set of $z$ that satisfy the previously derived constraints. Using \eqref{eqn_simpleUBinZ}, \eqref{eqn_ExtraElabor} and $F$, we derive the following upper bound on $r^*$, in terms of $z$:
\begin{align}
    \label{eqn_rStarRewrite}
    r^* \leq \max_{z \in F} \, \min\left\{ 1- {\frac{z_3}{2}}, \frac{2-z_2+2\sqrt{z_1(1-z_2)}}{4-\sqrt{3}}, \, \frac{2-z_1+2 \sqrt{z_2(1-z_1)}}{2+\sqrt{3}}\right\}.
\end{align}
We define $r := \ratioBipartUB{}$. It follows by \eqref{eqn_rStarRewrite} that $r^* < r$ is implied by the inconsistency of the following system of equations:
\begin{align}
    \label{eqn_systemEqZ}
    z \in F, \,\, 1- \frac{z_3}{2} \geq r, \,\,  \frac{2-z_2+2\sqrt{z_1(1-z_2)}}{4-\sqrt{3}} \geq r, \,\, \frac{2-z_1+2 \sqrt{z_2(1-z_1)}}{2+\sqrt{3}} \geq r.
\end{align}
We will show that \eqref{eqn_systemEqZ} is inconsistent. Observe that $1-z_3/2 \geq r \implies z_3 \leq 2(1-r)$. We may assume without loss of generality that any solution to \eqref{eqn_systemEqZ}, if it exists, satisfies $z_3 = 2(1-r)$. Indeed, if $(z_1,z_2,z_3)$ is a solution to \eqref{eqn_systemEqZ}, also $(z_1,z_2,2(1-r))$ is a solution to \eqref{eqn_systemEqZ}. Thus, the inconsistency of \eqref{eqn_systemEqZ} is equivalent to the inconsistency of the following system of equations:
    \begin{align}
    \label{eqn_equationCases}
    \begin{cases}
        0 \leq z_1 \leq z_2 \leq 2(1-r) \\
        (1-z_1)(1-z_2) \geq 2r-1\\
        2\sqrt{z_1(1-z_2)} - z_2 \geq \left(4-\sqrt{3} \right)r-2 \\
        2\sqrt{z_2(1-z_1)} - z_1 \geq \left( 2+ \sqrt{3} \right) r -2,
        \end{cases}
    \end{align}
    where the first two lines of \eqref{eqn_equationCases} ensure that $z \in F$. \Cref{lemma_systemNoSolutions} in \Cref{section_extraResults} proves that \eqref{eqn_equationCases} is inconsistent. Hence, also \eqref{eqn_systemEqZ} is inconsistent, which implies that $r^* < r$, proving \eqref{eqn_bipartUBproveR}.
\end{proof}

\section{Conclusions and future work}
\label{section_conclusions}
In this paper, we study classical QMC approximation algorithms, for general, triangle-free, and bipartite graphs. For triangle-free and bipartite graphs, we introduce new approximation algorithms. We prove that the algorithms achieve approximation ratios of at least $\ratioGeneral{}$ (general graphs) $\ratioNoTri{}$ (triangle-free graphs) and $\ratioBipart{}$ (bipartite graphs) respectively, which are the current best ratios for their respective graph classes.

The key part of the analysis of the algorithm for general graphs is showing that a particular vector induced by the used QMC SDP relaxation is contained in the matching polytope. We show this by introducing a graph parameter $c(G,k)$, see \eqref{eqn_cFunctionDef}, for SDP relaxation level $k \in \mathbb{N}$, and verifying that $c(G,k) \leq \lfloor s /2 \rfloor$ for all graphs $G$ on $s$ vertices. We establish properties of $c(G,k)$ that greatly reduce the required computation time of verifying this inequality (see \Cref{lemma_vertexCover,lemma_gClassification}), and prove (using a computer) that $c(G,s) \leq \lfloor s/2 \rfloor$ holds for all graphs on $s$ vertices, where $s$ is odd and $s \leq 13$. As future work, it would be interesting to determine if this extends to odd values of $s > 13$, which, if so, results in an improved approximation ratio. A possible starting point in this direction is to consider the 5-cycle, which is the smallest graph for which we have no analytical proof of $c(G,2) \leq \lfloor s /2 \rfloor$ (see \Cref{table_lemmaReductionG}).

The studied QMC approximation algorithms for triangle-free and bipartite graphs both require a function $\Theta \in \mathcal{A}$, see \eqref{eqn_thetaProp}, as parameter. For the triangle-free algorithm, we provide a function $\Theta$ that achieves an approximation ratio that is 0.00009 below the optimal ratio (see \Cref{lemma_UB_triangleFree}). For the bipartite algorithm, the larger gap of 0.0177 (see \Cref{thm_upperBoundRStar}) motivates further study. For both algorithms, an optimal $\Theta \in \mathcal{A}$ does not achieve an approximation ratio of 0.956, which is the optimal QMC approximation ratio under UGC and a related conjecture, see \cite{HNPTW21}. It is therefore interesting to investigate if the constraint $\Theta \in \mathcal{A}$ can be relaxed.
\bibliographystyle{alpha}
\newcommand{\etalchar}[1]{$^{#1}$}

\appendix

\section{Additional lemmas and proofs}
\label{section_extraResults}
We require (a part of) \cite[Lem.~1]{PT22}, to prove \Cref{lemma_triFreeMajorize}, which in turn helps to prove \Cref{lemma_gClassification}. 
\begin{lemma}[\cite{PT22}]
    \label{lemma_ptTriLemma}
    Let $L \in \feasLass{k}{n}$ with $k \geq 2$ and $n \geq 3$. Let $i,j,\ell \in [n]$, and consider the values $h_{ij}$, $h_{i \ell}$, $h_{j \ell}$ associated to $L$ as in \Cref{def_sdpValues}. We have that $h_{ij} + h_{i \ell} + h_{j \ell} \leq 0$.
\end{lemma}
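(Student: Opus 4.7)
The plan is to derive the inequality from a sum-of-squares certificate at SDP level $k=2$, built from the projector $\Pi_{3/2}$ onto the total-spin-$3/2$ subspace of the three qubits $i,j,\ell$. The starting observation is the Pauli-algebra identity
\[
\tfrac14\bigl[(X_i{+}X_j{+}X_\ell)^2 + (Y_i{+}Y_j{+}Y_\ell)^2 + (Z_i{+}Z_j{+}Z_\ell)^2\bigr] \;=\; \tfrac94\,\eye + \tfrac12 \sum_{\{a,b\}\subseteq\{i,j,\ell\}}(X_aX_b+Y_aY_b+Z_aZ_b),
\]
which together with $\Pi_{3/2} = (S_{\mathrm{tot}}^2 - \tfrac34\eye)/3$ identifies
$\Pi_{3/2} = \tfrac12\eye + \tfrac16 \sum_{\{a,b\}}(X_aX_b+Y_aY_b+Z_aZ_b)$; in particular $\Pi_{3/2}^2 = \Pi_{3/2}$.

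Lifting to the polynomial ring, I would define
\[
P := \tfrac12 + \tfrac16\sum_{\{a,b\}\subseteq\{i,j,\ell\}}(x_a x_b + y_a y_b + z_a z_b) \;\in\; \R\langle\mathbf p\rangle_2,
\]
and its Hermitian symmetrization $Q := (P+P^*)/2 \in \R\langle\mathbf p\rangle_2$. Since $(P-Q) + (P-Q)^* = 0 \in \mathcal{I}_{2k}$, the defining vanishing condition for $L \in \feasLass{k}{n}$ gives $L(P)=L(Q)$. The projector identity $\Pi_{3/2}^2 = \Pi_{3/2}$ translates to $Q^2 - Q \in \mathcal{I}$; pushing this through the degree truncation shows $Q^2-Q\in \mathcal{I}_4$ because every generator of $\mathcal{I}$ has degree $2$ and the rewrites they induce ($x_i^2\mapsto 1$, $x_i y_i \mapsto \imagUnit z_i$, $x_iy_j \mapsto y_jx_i$, and so on) are degree non-increasing. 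Since $(Q^2-Q)^*=Q^2-Q$, one has $(Q^2-Q)+(Q^2-Q)^* = 2(Q^2-Q)\in \mathcal{I}_4\subseteq \mathcal{I}_{2k}$, hence $L(Q^2)=L(Q)$. Finally, since $\deg Q \leq 2 \leq k$, the PSD moment-matrix constraint yields $L(Q^*Q)=L(Q^2)\geq 0$. Chaining the three equalities gives $L(P)\geq 0$.

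It then remains to convert $L(P) \geq 0$ into the claimed inequality. Using the definition $g_{ab} = 1 - L(x_ax_b) - L(y_ay_b) - L(z_az_b)$ and $h_e = g_e/2 - 1$ from \Cref{def_sdpValues}, a direct expansion gives
\[
L(P) \;=\; \tfrac12 + \tfrac16\bigl(3 - g_{ij} - g_{i\ell} - g_{j\ell}\bigr) \;=\; -\tfrac13\bigl(h_{ij}+h_{i\ell}+h_{j\ell}\bigr),
\]
so $L(P) \geq 0$ is equivalent to $h_{ij}+h_{i\ell}+h_{j\ell} \leq 0$, the desired conclusion.

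The main obstacle I expect is the truncated ideal membership $Q^2 - Q \in \mathcal{I}_4$. The identity $Q^2 = Q$ in the quotient $\R\langle\mathbf p\rangle/\mathcal{I}$ is immediate from $\Pi_{3/2}^2 = \Pi_{3/2}$, but producing an explicit decomposition $Q^2 - Q = \sum_\alpha g_\alpha r_\alpha$ with each $r_\alpha$ a generator of $\mathcal{I}$ and $\deg(g_\alpha r_\alpha) \leq 4$ demands careful bookkeeping of the reduction steps; the constraint $\deg u + 2 + \deg v \leq 4$ forces each rewrite context to satisfy $\deg u + \deg v \leq 2$, which the degree-non-increasing rule set respects. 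This can be verified by hand on the three-qubit system (where the number of relevant words of degree at most $4$ is finite and manageable) or with a small noncommutative computer algebra check.
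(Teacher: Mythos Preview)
The paper does not give its own proof of this lemma; it simply records the statement with a citation to \cite[Lem.~1]{PT22} and a change-of-variables remark ($s_{ij}=-h_{ij}$). So there is no in-paper argument to compare against.

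Your route via the spin-$3/2$ projector is correct and is the standard SOS proof of this triangle inequality (and is essentially what the cited reference does): write $\Pi_{3/2}=\tfrac13(S_{ij}+S_{i\ell}+S_{j\ell})$ as a degree-$2$ polynomial $P$, use idempotency $P^2\equiv P$ in the Pauli algebra, and combine the level-$2$ positivity $L(Q^*Q)\geq 0$ with $Q^2-Q\in\mathcal I_4$ to conclude $L(P)\geq 0$. Your conversion $L(P)=-\tfrac13(h_{ij}+h_{i\ell}+h_{j\ell})$ is correct.

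Two small technical points worth tightening. First, the same-letter cross-site commutators $x_ax_b-x_bx_a$ are not listed among the generators of $\mathcal I$ in this paper; they do lie in the ideal (via $x_a\equiv -\imagUnit\, y_az_a$ and the listed cross-site commutators), but the resulting membership is in $\mathcal I_3$, so track this when bounding degrees in your reduction of $Q^2-Q$. Second, $M(L)$ is indexed by the reduced monomials $\langle\mathbf p^n\rangle^{\mathcal I}_k$, so to invoke positivity cleanly you should reduce $Q$ to its normal form $\tilde Q\in\mathrm{span}\,\langle\mathbf p^n\rangle^{\mathcal I}_2$ and check $L(Q^2)=L(\tilde Q^*\tilde Q)$. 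Both are routine verifications and do not affect the validity of the argument.
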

\noindent To compare \Cref{lemma_ptTriLemma} with \cite[Lem.~1]{PT22}, note that the variables $s_{ij}$ in \cite{PT22} satisfy $s_{ij} = -h_{ij}$.
\begin{lemma}
    \label{lemma_triFreeMajorize}
    Let $G \in \mathcal{G}_s$ with $s \geq 3$, and let $k \geq 2$. There exists a triangle-free graph $G' \in \mathcal{G}_s$, satisfying $c(G,k) \leq c\left(G',k \right)$, see \eqref{eqn_cFunctionDef}.
\end{lemma}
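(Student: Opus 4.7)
The plan is to prove the claim by iteratively removing edges from triangles, using \Cref{lemma_ptTriLemma} to ensure that each removal does not decrease the value of $c(\cdot,k)$. The key observation is that the feasible set $\feasLass{k}{s}$ defined in \eqref{eqn_feasLass} depends only on the number of vertices $s$, not on the edge set of the graph. Hence, when passing from $G$ to a subgraph $G''$ on the same vertex set, any $L \in \feasLass{k}{s}$ remains feasible, and the induced values $h_e$ are unchanged on the common edges.

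Concretely, I would argue as follows. If $G$ is already triangle-free, we are done with $G' = G$. Otherwise, let $L^* \in \feasLass{k}{s}$ be an optimal solution in the definition of $c(G,k)$, so that $\sum_{e \in E(G)} h_e^{L^*} = c(G,k)$, where $h_e^{L^*}$ denotes the value of $h_e$ corresponding to $L^*$ as in \Cref{def_sdpValues}. Pick any triangle $\{i,j,\ell\}$ in $G$. By \Cref{lemma_ptTriLemma}, $h_{ij}^{L^*} + h_{i\ell}^{L^*} + h_{j\ell}^{L^*} \leq 0$, so at least one of these three values is $\leq 0$; call the corresponding edge $e^*$. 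Define $G_1 := (V(G), E(G) \setminus \{e^*\})$. Then
\begin{align*}
    c(G_1,k) \geq \sum_{e \in E(G_1)} h_e^{L^*} = \sum_{e \in E(G)} h_e^{L^*} - h_{e^*}^{L^*} \geq c(G,k),
\end{align*}
where the first inequality is by feasibility of $L^*$ for the SDP defining $c(G_1,k)$, and the final inequality uses $h_{e^*}^{L^*} \leq 0$.

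Since $G_1$ has strictly fewer edges than $G$, iterating this procedure (choosing a new optimal $L^*$ at each step) must terminate, and it terminates only once the current graph is triangle-free. The resulting graph $G' \in \mathcal{G}_s$ is triangle-free and satisfies $c(G,k) \leq c(G_1,k) \leq \cdots \leq c(G',k)$, as required. I do not anticipate any serious obstacles: the argument relies only on \Cref{lemma_ptTriLemma} together with the monotonicity of $c(\cdot,k)$ under deleting an edge whose $h$-value in some optimal solution is non-positive.
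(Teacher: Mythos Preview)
Your proof is correct and essentially identical to the paper's: both pick an optimal $L^*$, use \Cref{lemma_ptTriLemma} to find a triangle edge with non-positive $h$-value, remove it without decreasing $c(\cdot,k)$, and iterate until triangle-free. Your termination argument via strictly decreasing edge count is slightly cleaner than the paper's ``strictly fewer triangles'', and your explicit justification that $\feasLass{k}{s}$ is independent of the edge set fills in a step the paper leaves implicit.
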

\begin{proof}
    If $G = (V,E)$ is triangle-free, the result follows directly by taking $G' = G$. If $G$ is not triangle-free, we may assume without loss of generality that the edges $e_1, e_2, e_3 \in E$ form a triangle in $G$. Consider an optimal solution of the SDP defining $c(G,k)$, with values $(h_e)_{e \in E}$ as in \Cref{def_sdpValues}. By \Cref{lemma_ptTriLemma}, $\sum_{i = 1}^3 h_{e_i} \leq 0$. Thus, there is some $e \in \{e_1, e_2, e_3\}$ for which $h_e \leq 0$. Consider the graph obtained after removing from $G$ this edge $e$, which is given by $G[E \setminus e]$. Observe that $G[E \setminus e]$ contains strictly less triangles than $G$, and satisfies $c(G,k) \leq c(G[E \setminus e],k)$. Repeating the procedure if necessary, this concludes the proof.
\end{proof}
\noindent We now prove \Cref{lemma_gClassification}.
\Gclassification*
\begin{proof} 
The direction $\Rightarrow$ is trivial, since $\mathbf{G} \subseteq \mathcal{G}_s$. For the reverse direction, observe first that $\max_{G \in \mathcal{G}_s} c(G,k) \geq \lfloor s / 2 \rfloor$ (it is straightforward to verify that the graph on $\lfloor s /2 \rfloor$ disjoint edges, denoted by $G'$, satisfies $c(G',k) = \lfloor s /2 \rfloor$). Consequently,
\begin{align*}
    \left\lfloor \frac{s}{2} \right\rfloor \leq \max_{G \in \mathcal{G}_s} c(G,k) = \max\left\{ \max_{G \in \mathbf{G}} c(G,k), \max_{G \in \mathcal{G}_s \setminus \mathbf{G}} c(G,k)\right\} \leq \max\left\{ \left\lfloor \frac{s}{2} \right\rfloor,  \max_{G \in \mathcal{G}_s \setminus \mathbf{G}} c(G,k)\right\}.
\end{align*}
Thus, it remains to prove that all $G \in \mathcal{G}_s \setminus \mathbf{G}$ satisfy $c(G,k) \leq \lfloor s /2 \rfloor$. Due to \Cref{lemma_triFreeMajorize} it suffices to consider $G \in \mathcal G_s \setminus \mathbf G$ that are triangle-free. We distinguish the following cases, corresponding to which of the four properties of \Cref{lemma_gClassification} are not satisfied by $G$.
\begin{mycases}
    \item If $G$ is disconnected, let graphs $(G^{i} = (V^i,E^i))_{i \in [p]}$ form the connected components of $G$ for some $p \in \mathbb{N}$. Item \ref{item_subgraphDecomp} from \Cref{lemma_vertexCover} yields
\begin{align*}
    c\mleft(G,k \mright) \leq \sum_{i=1}^p c\mleft( G[E^i],k \mright) \leq \sum_{i=1}^p \left\lfloor \frac{|V^i | }{ 2} \right\rfloor \leq \left\lfloor \frac{\sum_{i=1}^p |V^i| }{2} \right\rfloor = \left\lfloor \frac{s}{2} \right \rfloor.
\end{align*}
If $G$ is connected but not biconnected, there exists a partition $\{E^1, E^2\}$ of $E(G)$ such that $G[E^i]$, $i \in [2]$, is a graph on $s_i$ vertices, with $s_1+s_2 = s+1$. Since $s+1$ is even, the numbers $s_i$ are either both even or both odd. If they are both odd, we find
\begin{align*}
c\mleft(G,k \mright) \leq \sum_{i=1}^2 c\mleft(G\left[ E^i \right],k \mright) \leq \left\lfloor \frac{s_1}{2}\right\rfloor + \left\lfloor \frac{s_2}{2}\right\rfloor = \frac{s_1-1}{2} + \frac{s_2-1}{2} = \left\lfloor \frac{s}{2}\right\rfloor.
\end{align*}
If the numbers $s_i$ are both even, we proceed as follows: Since $G$ is connected but not biconnected, the graphs $G\left[ E^i \right]$, $i \in [2]$, must have precisely one common vertex $v \in V\left( G \right)$. Let $E_v \subseteq E\left( G \right)$ be the set of edges adjacent to $v$, and consider the partition $\left\{E^1 \setminus E_v, E^2 \setminus E_v, E_v \right\}$ of $E(G)$. Observe that $G[E^i \setminus E_v]$ is a graph on at most $s_i - 1$ vertices. Also note that $G\left[ E_v \right]$ is a star graph, which implies by \eqref{eqn_monogamyStar} that $c(G[E_v],k) \leq 1$. We have
\begin{align*}
    c\mleft(G,k \mright) \leq c\mleft(G[E^1 \setminus E_v],k \mright) + c\mleft(G[E^2 \setminus E_v], k \mright)+ c\mleft( G[E_v],k \mright) \leq \left\lfloor \frac{s}{2}\right\rfloor.
\end{align*}
If $G$ is bipartite, then its vertex cover number satisfies $\tau(G) \leq \lfloor s /2 \rfloor$. Hence, by  Item \ref{item_tauBound} of \Cref{lemma_vertexCover}, $c\mleft(G,k \mright)\leq \lfloor s /2\rfloor$. 
\item If $G$ contains a vertex of degree 1 and the number of  vertices $s = 3$, then $G$ is a star graph on one or two edges. In both cases, by \eqref{eqn_monogamyStar}, $c\mleft(G,k \mright) \leq 1 = \lfloor s / 2\rfloor$. If $G$ contains a vertex of degree 1 and $s \geq 4$, Item \ref{item_vertex1B} of \Cref{lemma_vertexCover} shows that $c\mleft(G,k \mright) \leq 1 + \max_{G \in \mathcal{G}_{s-2}} c(G,k)$. By the assumption that  $\max_{G \in \mathcal{G}_{s^\prime}} c\mleft(G,k \mright) = \lfloor s^\prime /2 \rfloor$ for $2 \leq s' < s$, it follows that $c\mleft(G,k \mright) \leq 1 + \max_{G \in \mathcal{G}_{s-2}} c(G,k) \leq 1 +\lfloor (s-2)/2 \rfloor = \lfloor s / 2 \rfloor$. If $G$ has a vertex $i$ satisfying $\degr{i} > (s-1)/2$ (equivalently, $\degr{i} \geq (s+1)/2$), note that the vertices in $N(i)$ are pairwise non-adjacent since $G$ is triangle-free. This implies that $V( G ) \setminus N(i)$ is a vertex cover of $G$. Then, Item \ref{item_tauBound} from \Cref{lemma_vertexCover} implies that $c\mleft(G,k \mright) \leq \tau \left( G \right) \leq | V( G ) \setminus N(i) | = | V( G )| - |N(i) | = s - \degr{i} \leq s - (s+1)/2 = (s-1)/2 = \lfloor s /2 \rfloor$.

\item If $\left| E(G) \right| < s$, $G$ is either disconnected or a tree (and thus bipartite). In both cases, $c\mleft(G,k \mright)\leq \lfloor s /2 \rfloor$ (as proven in case 1). 

\item If $G = (V,E)$ has a stable set $S \subseteq V$ for which $N(S) := \cup_{i \in S} N(i)$  satisfies $|N(S)| \leq |S|$, we proceed as follows: let $E_{N(S)}$ be the set of edges adjacent to at least one vertex in $N(S)$. The set $N(S)$ is a vertex cover of $G\left[ E_{N(S)} \right]$, so that $c\mleft( G \left[E_{N(S)} \right],k \mright) \leq \tau\left( G \left[E_{N(S)} \right] \right) \leq \left| N(S) \right|$ (Item \ref{item_tauBound} from \Cref{lemma_vertexCover}). Observe that $G \left[E \setminus E_{N(S)} \right]$ is a graph on at most $s-|N(S)| - |S|$ vertices. We find
\begin{align*}
c\mleft( G,k \mright) \leq c\mleft( G \left[E_{N(S)} \right],k \mright) + c\mleft( G \left[E \setminus E_{N(S)} \right],k \mright) \leq |N(S)| + \left\lfloor \frac{s-|N(S)| - |S|}{2} \right\rfloor \leq \left\lfloor \frac{s}{2} \right\rfloor.
\end{align*}
For the last inequality, we used that $|N(S)| \leq |S|$.

\end{mycases}

We have considered all possible cases of $G \not\in \mathbf{G}$, which finishes the proof.
\end{proof}

The following lemma is used in the proof of \Cref{lemma_UB_triangleFree}.
\begin{lemma}
    \label{lemma_concaveMax}
    Let the function $f$ be defined as in \eqref{eqn_fpIneq}, $h$ as in \eqref{eqn_hFunction} and $r = \ratioUB{}$. We have that
    \begin{align}
        \label{eqn_maxValNegative}
        \max_{ p \in \left[ \frac{2 r  -1}{q(0)-1},1 \right]} f(h(r,p),p) < r.
    \end{align}
\end{lemma}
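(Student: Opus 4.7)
The plan is to convert the maximization into a one-variable problem that admits an explicit bound. Using that $f(z,p)$ is affine in $p$, write $f(z,p) = \tfrac{1}{3}(pA(z) + (1-p)C_2)$ with $A(z) := q(1/2)(1 - z/2 + C_1 \sqrt{z(1-z)})$, for $C_1,C_2$ as in the proof of \Cref{lemma_UB_triangleFree}. Next, reparametrize by $z = h(r,p)$: solving $(1-z)^2 = (4r-2)/(pq(0)) + 2/q(0) - 1$ for $p$ gives the smooth, monotone bijection $p(z) = (4r-2)/(q(0)((1-z)^2 + 1) - 2)$, sending an interval $[0, z_1]$ onto $I := [\tfrac{2r-1}{q(0)-1}, 1]$, where $z_1 := 1 - \sqrt{4r - 3 + 2/q(0)}$. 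The problem reduces to showing $\tilde g(z) := \tfrac{1}{3}(p(z)A(z) + (1-p(z))C_2) < r$ for all $z \in [0, z_1]$.

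Next I would record a few structural facts (for $\mu = 14/15$, so $C_2 = 12/5$). First, $A(z) \leq q(1/2)(1 + C_1/2) < C_2$ throughout $[0,1]$ (using $\sqrt{z(1-z)} \leq 1/2$), so that $A(z) - C_2 < 0$. Second, both $p(z)$ and $A(z)$ are strictly increasing on $[0, z_1]$: $p$ because $(1-z)^2$ decreases, and $A$ because its critical point $z^* = \tfrac{1}{2}(1 - (4C_1^2+1)^{-1/2})$ lies well above $z_1$ (numerically $z^* \approx 0.24$ while $z_1 \approx 0.05$). Hence $\tilde g(z) = \tfrac{C_2}{3} + \tfrac{p(z)}{3}(A(z) - C_2)$ is a sum of a constant with the product of an increasing positive factor $p(z)$ and a negative factor $A(z) - C_2$ whose own monotonicity does not make $\tilde g$ globally monotone.

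The main obstacle is therefore to control $\tilde g$ uniformly. The derivative $\tilde g'(z) = \tfrac{1}{3}(p'(z)(A(z)-C_2) + p(z) A'(z))$ is bounded on any $[\varepsilon, z_1]$, but has a $1/\sqrt{z}$ singularity at $z = 0$ coming from the term $A'(z) = q(1/2)(-1/2 + C_1 (1-2z)/(2\sqrt{z(1-z)}))$. I would handle this by splitting $[0, z_1]$ into a small initial piece $[0, \varepsilon]$ and a clean piece $[\varepsilon, z_1]$: on $[0, \varepsilon]$, apply the elementary bound $\tilde g(z) - \tilde g(0) \leq (q(1/2) C_1/3)\,p(z_1)\sqrt{z}$ (using $\sqrt{z(1-z)} \leq \sqrt{z}$) together with a direct computation of $\tilde g(0) = \tfrac{1}{3}(p(0) q(1/2) + (1-p(0))C_2)$; on $[\varepsilon, z_1]$, evaluate $\tilde g$ at a finite grid and combine with an explicit uniform bound on $|\tilde g'|$ via the Mean Value Theorem. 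A direct numerical evaluation with $r = \ratioUB{}$ and the standard values $q(0) \approx 1.287$, $q(1/2) \approx 1.57$ confirms $\tilde g(0) < r$ and $\tilde g(z_1) < r$ strictly, completing the proof in the same computer-assisted spirit as the other bounds in the paper.
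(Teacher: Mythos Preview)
Your approach is correct in outline but takes a genuinely different route from the paper. The paper works directly with $g(p)=f(h(r,p),p)$, proves $g$ is \emph{concave} on $(\ell,1]$ via an explicit (and somewhat lengthy) computation of $g''(p)$, then uses concavity to localize the maximizer to the interval $[0.897,0.898]$ and finishes with a single Mean Value Theorem estimate. You instead reparametrize by $z=h(r,p)$, exploit the affine dependence of $f$ on $p$ together with the monotonicity of $A(z)$ and $p(z)$, treat the $1/\sqrt{z}$ singularity near $z=0$ by the elementary bound $\tilde g(z)-\tilde g(0)\le \tfrac{1}{3}q(1/2)C_1\,p(z_1)\sqrt{z}$, and handle the remainder by a grid plus derivative bound. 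This is conceptually more elementary, since it avoids the second-derivative computation entirely.

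Two points to tighten. First, a slip: solving $p(z_1)=1$ gives $z_1=1-\sqrt{4r/q(0)-1}$, not $1-\sqrt{4r-3+2/q(0)}$; your numerical value $z_1\approx 0.05$ matches the correct formula, so this is just a typo. Second, and more importantly, the gap between $\max\tilde g$ and $r$ is only about $10^{-5}$ (the maximum is interior, near $z\approx 0.025$ with $\tilde g\approx 0.61391$), so a truly \emph{uniform} bound on $|\tilde g'|$ over $[\varepsilon,z_1]$, dominated by the $1/\sqrt{z}$ term at $z=\varepsilon$, forces a grid of order $10^4$ points. This still works, but the paper's concavity argument is much more economical: it reduces the verification to two evaluations. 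Finally, your last sentence mentions only the endpoint checks $\tilde g(0)<r$ and $\tilde g(z_1)<r$; these alone are insufficient since the maximum is interior, so make sure the grid step you described earlier is actually carried out.
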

\begin{proof}
To simplify notation, we write $\ell := \frac{2 r  -1}{q(0)-1}$ and $g(p) := f(h(r,p),p)$. We first show that $g$ is concave. Using concavity, we determine a small interval in which the function $g$ attains its maximum. We then apply the Mean Value theorem to this interval, to prove \eqref{eqn_maxValNegative}. For intuition, \Cref{fig:concaveG_wideAxis} provides a plot of $g$.

We now prove that $g$ is concave by showing that $g''(p) \leq 0$. Recall from the proof of \Cref{lemma_UB_triangleFree}, the expressions of the positive numbers $\mu = 14/15$, $C_1 = (4+\mu (\pi-2))/ (2\pi)$ and $C_2 =  \left(1+3\mu/2 \right)$. The function $g'(p)$ can be computed by using the multivariate chain rule. To this end, denote by $D_i f$ the partial derivative of $f$, see \eqref{eqn_fpIneq}, with respect its $i$th argument, $i \in \{1,2\}$. For ease of notation, we write $h(p) = h(r,p)$, for $h$ as in \eqref{eqn_hFunction} and $r = \ratioUB{}$. We have, for $p \in (\ell,1]$,
\begin{align*}
    g'(p) =  h'(p) D_1 f\left( h(p),p \right) + D_2 f( h(p)&,p) = \frac{(2r-1) q(1/2)}{6pq(0) \sqrt{ \frac{2(p+2r-1)}{pq(0)}-1}} \left(\frac{C_{1}(1-2h(p))}{\sqrt{h(p)\left(1-h(p)\right)}}-1\right) \\
     &+ \frac{1}{3} \left[q(1/2) \left(1-\frac{ h(p)}{2}+C_1\sqrt{ h(p)(1-h(p))}\right) - C_2 \right].
\end{align*}
Observe that $g$ is not differentiable for $p = \ell$, since $h(\ell)=0$. However, it can be verified using a computer that $r > g(\ell) \approx 0.587$, so that it suffices to consider only $p \in (\ell,1]$. The function $g''(p)$ is given by
\begin{align*}
    g''(p) = -\frac{1}{3}{\left( \frac{2\,r-1}{q(0)}\right)}^2  q(1/2) \, C_1 \frac{k_1(p)}{ k_2(p)},
\end{align*}
where 
\begin{align*}
    &k_1(p)  :=  -6 u(p) +\frac{2}{C_1} {\left( \sqrt{u(p)} - u(p)\right)}^{3/2}+3\,\,\sqrt{ u(p) } +4\,{\left( u(p)\right)}^{3/2}, \\
    &k_2(p) := 4\,p^3\,{\left(\sqrt{u(p)} - u(p)\right)}^{3/2}\,{\left( u(p) \right)}^{3/2},
\end{align*}
and $u(p) := \frac{2-q(0)}{q(0)}+\frac{4r-2}{pq(0)}$. Since $u(p)$ is decreasing in $p$ for $p \in (\ell,1]$ (and $u(1) \approx 0.909$, as can be verified using a computer), we have
\begin{align}
    \label{eqn_zIneq}
   0 < u(1) \leq u(p) < u(\ell) = 1
\end{align}
for all $p \in (\ell,1]$, which implies that $\sqrt{u(p)} - u(p) > 0$ for $p \in (\ell,1]$. Thus, $k_1(p)$ and $k_2(p)$ are well-defined for all $p \in (\ell,1]$. Furthermore, \eqref{eqn_zIneq} shows that $k_2(p) > 0$ for $p \in (\ell,1]$.

For $k_1(p)$, we find
\begin{align*}
    k_1(p) \geq   -6 u(p) +3\sqrt{ u(p) } +4\,{\left( u(p)\right)}^{3/2}  \geq 0.
\end{align*}
For the first inequality, we have used that $\frac{2}{C_1}\,{\left( \sqrt{u(p)} - u(p)\right)}^{3/2} \geq 0$. The second inequality follows from the fact that 
\begin{align*}
    \min_{z \in [0,1]} \left(-6z + 3 \sqrt{z}+4z^{3/2} \right)= 0,
\end{align*}
since $-6z + 3 \sqrt{z}+4z^{3/2}$ is an increasing function, as can be shown by evaluating the derivative.

Hence, $k_1(p) \geq 0$ and $k_2(p) > 0$ for all $ p \in (\ell,1]$. Since $-\frac{1}{3}{\left( \frac{2\,r-1}{q(0)}\right)}^2 q(1/2) \, C_1 <0$, it follows that $g''(p) \leq 0$, which proves that $g$ is concave on $p \in (\ell,1]$. We will use concavity of $g$, in combination with the Mean Value theorem, to prove \eqref{eqn_maxValNegative}. Let $p_1 = 0.897$ and $p_2 = 0.898$. It can be verified (by computer) that $g'(p_1) > 0$ and $g'(p_2) < 0$. By concavity of $g$, it follows that $\max_{p \in [\ell,1]} g(p) = g(p^*)$, for some $p^* \in (p_1,p_2)$. Observe that $g$ is continuous and continuously differentiable on $(p_1,p_2)$. By the Mean Value theorem, we have
\begin{align}
    \label{eqn_rIneqProof}
    \max_{p \in [\ell,1]} g(p) = g(p^*) = g'(z)(p^*-p_1) + g(p_1) \leq g'(p_1) (p_2 - p_1) + g(p_1) < r,
\end{align}
for some $z \in (p_1,p_2)$. For the first inequality, we have used that $g$ is concave, so that $g'(p)$ is a decreasing function. The second inequality can be verified by computing the number $g'(p_1) (p_2 - p_1) + g(p_1)$ $(\approx 0.61391)$.
\end{proof}

\begin{figure}
    \centering
\begin{tikzpicture}[trim axis left, trim axis right]
\begin{axis}[
    axis lines=left,
    ylabel style={rotate=-90},
    xlabel={$p$},
    ylabel={$g(p)$},
    yticklabel style={/pgf/number format/.cd,
    fixed, fixed zerofill,   /pgf/number format/precision=3},
    xmin=0.7961, xmax=1,
    ymin=0.585, ymax=0.615,
    scaled ticks=false,
    ytick={0.585,0.6,0.615}, 
    legend pos=north west,
    extra y ticks={0.614},
    extra y tick style={grid=none,ticks=major,ticklabel pos=right},
    extra y tick labels={$r = \ratioUB{}$},
]

\addplot[mark=none, black, very thick,dotted] coordinates {(0.7961,0.61392) (1,0.61392)};

\addplot[
    color=black,
    thick
    ]
    coordinates {
        (0.79607,0.58655)(0.79676,0.59104)(0.79744,0.59278)(0.79812,0.59405)(0.79880,0.59510)(0.79948,0.59599)(0.80017,0.59678)(0.80085,0.59749)(0.80153,0.59814)(0.80221,0.59873)(0.80289,0.59928)(0.80358,0.59979)(0.80426,0.60027)(0.80494,0.60073)(0.80562,0.60115)(0.80630,0.60156)(0.80699,0.60195)(0.80767,0.60231)(0.80835,0.60267)(0.80903,0.60300)(0.80972,0.60332)(0.81040,0.60363)(0.81108,0.60393)(0.81176,0.60422)(0.81244,0.60449)(0.81313,0.60476)(0.81381,0.60501)(0.81449,0.60526)(0.81517,0.60550)(0.81585,0.60573)(0.81654,0.60596)(0.81722,0.60618)(0.81790,0.60639)(0.81858,0.60659)(0.81926,0.60679)(0.81995,0.60699)(0.82063,0.60717)(0.82131,0.60736)(0.82199,0.60753)(0.82267,0.60771)(0.82336,0.60788)(0.82404,0.60804)(0.82472,0.60820)(0.82540,0.60835)(0.82608,0.60850)(0.82677,0.60865)(0.82745,0.60880)(0.82813,0.60894)(0.82881,0.60907)(0.82949,0.60921)(0.83018,0.60934)(0.83086,0.60946)(0.83154,0.60959)(0.83222,0.60971)(0.83290,0.60983)(0.83359,0.60994)(0.83427,0.61005)(0.83495,0.61016)(0.83563,0.61027)(0.83631,0.61037)(0.83700,0.61048)(0.83768,0.61058)(0.83836,0.61067)(0.83904,0.61077)(0.83972,0.61086)(0.84041,0.61095)(0.84109,0.61104)(0.84177,0.61113)(0.84245,0.61121)(0.84313,0.61130)(0.84382,0.61138)(0.84450,0.61146)(0.84518,0.61153)(0.84586,0.61161)(0.84654,0.61168)(0.84723,0.61175)(0.84791,0.61182)(0.84859,0.61189)(0.84927,0.61196)(0.84995,0.61202)(0.85064,0.61209)(0.85132,0.61215)(0.85200,0.61221)(0.85268,0.61227)(0.85336,0.61233)(0.85405,0.61239)(0.85473,0.61244)(0.85541,0.61249)(0.85609,0.61255)(0.85677,0.61260)(0.85746,0.61265)(0.85814,0.61270)(0.85882,0.61274)(0.85950,0.61279)(0.86018,0.61284)(0.86087,0.61288)(0.86155,0.61292)(0.86223,0.61296)(0.86291,0.61300)(0.86360,0.61304)(0.86428,0.61308)(0.86496,0.61312)(0.86564,0.61316)(0.86632,0.61319)(0.86701,0.61322)(0.86769,0.61326)(0.86837,0.61329)(0.86905,0.61332)(0.86973,0.61335)(0.87042,0.61338)(0.87110,0.61341)(0.87178,0.61344)(0.87246,0.61346)(0.87314,0.61349)(0.87383,0.61351)(0.87451,0.61354)(0.87519,0.61356)(0.87587,0.61358)(0.87655,0.61360)(0.87724,0.61362)(0.87792,0.61364)(0.87860,0.61366)(0.87928,0.61368)(0.87996,0.61370)(0.88065,0.61372)(0.88133,0.61373)(0.88201,0.61375)(0.88269,0.61376)(0.88337,0.61377)(0.88406,0.61379)(0.88474,0.61380)(0.88542,0.61381)(0.88610,0.61382)(0.88678,0.61383)(0.88747,0.61384)(0.88815,0.61385)(0.88883,0.61386)(0.88951,0.61387)(0.89019,0.61387)(0.89088,0.61388)(0.89156,0.61389)(0.89224,0.61389)(0.89292,0.61390)(0.89360,0.61390)(0.89429,0.61390)(0.89497,0.61391)(0.89565,0.61391)(0.89633,0.61391)(0.89701,0.61391)(0.89770,0.61391)(0.89838,0.61391)(0.89906,0.61391)(0.89974,0.61391)(0.90042,0.61391)(0.90111,0.61390)(0.90179,0.61390)(0.90247,0.61390)(0.90315,0.61389)(0.90383,0.61389)(0.90452,0.61388)(0.90520,0.61388)(0.90588,0.61387)(0.90656,0.61387)(0.90724,0.61386)(0.90793,0.61385)(0.90861,0.61384)(0.90929,0.61384)(0.90997,0.61383)(0.91065,0.61382)(0.91134,0.61381)(0.91202,0.61380)(0.91270,0.61379)(0.91338,0.61378)(0.91406,0.61376)(0.91475,0.61375)(0.91543,0.61374)(0.91611,0.61373)(0.91679,0.61371)(0.91747,0.61370)(0.91816,0.61369)(0.91884,0.61367)(0.91952,0.61366)(0.92020,0.61364)(0.92089,0.61363)(0.92157,0.61361)(0.92225,0.61359)(0.92293,0.61358)(0.92361,0.61356)(0.92430,0.61354)(0.92498,0.61352)(0.92566,0.61350)(0.92634,0.61349)(0.92702,0.61347)(0.92771,0.61345)(0.92839,0.61343)(0.92907,0.61341)(0.92975,0.61339)(0.93043,0.61336)(0.93112,0.61334)(0.93180,0.61332)(0.93248,0.61330)(0.93316,0.61328)(0.93384,0.61325)(0.93453,0.61323)(0.93521,0.61321)(0.93589,0.61318)(0.93657,0.61316)(0.93725,0.61314)(0.93794,0.61311)(0.93862,0.61309)(0.93930,0.61306)(0.93998,0.61303)(0.94066,0.61301)(0.94135,0.61298)(0.94203,0.61296)(0.94271,0.61293)(0.94339,0.61290)(0.94407,0.61287)(0.94476,0.61285)(0.94544,0.61282)(0.94612,0.61279)(0.94680,0.61276)(0.94748,0.61273)(0.94817,0.61270)(0.94885,0.61267)(0.94953,0.61264)(0.95021,0.61261)(0.95089,0.61258)(0.95158,0.61255)(0.95226,0.61252)(0.95294,0.61249)(0.95362,0.61246)(0.95430,0.61243)(0.95499,0.61239)(0.95567,0.61236)(0.95635,0.61233)(0.95703,0.61230)(0.95771,0.61226)(0.95840,0.61223)(0.95908,0.61220)(0.95976,0.61216)(0.96044,0.61213)(0.96112,0.61209)(0.96181,0.61206)(0.96249,0.61202)(0.96317,0.61199)(0.96385,0.61195)(0.96453,0.61192)(0.96522,0.61188)(0.96590,0.61185)(0.96658,0.61181)(0.96726,0.61177)(0.96794,0.61174)(0.96863,0.61170)(0.96931,0.61166)(0.96999,0.61163)(0.97067,0.61159)(0.97135,0.61155)(0.97204,0.61151)(0.97272,0.61147)(0.97340,0.61143)(0.97408,0.61140)(0.97477,0.61136)(0.97545,0.61132)(0.97613,0.61128)(0.97681,0.61124)(0.97749,0.61120)(0.97818,0.61116)(0.97886,0.61112)(0.97954,0.61108)(0.98022,0.61104)(0.98090,0.61100)(0.98159,0.61095)(0.98227,0.61091)(0.98295,0.61087)(0.98363,0.61083)(0.98431,0.61079)(0.98500,0.61075)(0.98568,0.61070)(0.98636,0.61066)(0.98704,0.61062)(0.98772,0.61057)(0.98841,0.61053)(0.98909,0.61049)(0.98977,0.61044)(0.99045,0.61040)(0.99113,0.61036)(0.99182,0.61031)(0.99250,0.61027)(0.99318,0.61022)(0.99386,0.61018)(0.99454,0.61013)(0.99523,0.61009)(0.99591,0.61004)(0.99659,0.61000)(0.99727,0.60995)(0.99795,0.60991)(0.99864,0.60986)(0.99932,0.60982)(1.00000,0.60977)
    };
\end{axis}
\end{tikzpicture}
\caption{Plot of the function $g(p) = f(h(r,p),p)$, see \eqref{eqn_maxValNegative}, for $p \in [\ell,1]$, where $\ell = (2r-1)/ \left( q(0) -1 \right) \approx 0.7961$.} 
\label{fig:concaveG_wideAxis}
\end{figure}

The following lemma is used in the proof of \Cref{thm_upperBoundRStar}.
\begin{lemma} \label{lemma_systemNoSolutions}
    Let $r : = \ratioBipartUB{}$. Then, the system of equations \eqref{eqn_equationCases} is inconsistent.
\end{lemma}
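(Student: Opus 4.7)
Writing $a := (2+\sqrt{3})r - 2$ and $b := (4-\sqrt{3})r - 2$, I would assume some $(z_1, z_2)$ satisfies \eqref{eqn_equationCases} and derive contradictory bounds on $z_1$. Since $z_1 + a > 0$, the fourth inequality squares to $(z_1+a)^2 \leq 4 z_2(1-z_1)$, and the second inequality rearranges to $z_2 \leq [2(1-r)-z_1]/(1-z_1)$. Combining these gives $(z_1+a)^2 + 4z_1 \leq 8(1-r)$, and solving this quadratic in $z_1$ produces the upper bound
\[
z_1 \leq z_1^+ := -(2+\sqrt{3})r + 2\sqrt{\sqrt{3}\,r + 1}.
\]
With $r = \ratioBipartUB{}$ this is approximately $0.01474$.

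For a matching lower bound, set $L_2(z_1) := (z_1+a)^2/[4(1-z_1)]$ and $L_1(z_2) := (z_2+b)^2/[4(1-z_2)]$. The squared fourth inequality says $z_2 \geq L_2(z_1) \geq L_2(0) = a^2/4$, which at $r = \ratioBipartUB{}$ exceeds $-b$; hence $z_2 + b > 0$ and the third inequality squares to $z_1 \geq L_1(z_2)$. Since $L_1$ is increasing on $(-b,1)$, this chains into $z_1 \geq L_1(L_2(z_1))$, or equivalently, after clearing denominators,
\[
D(z_1) := \bigl[(z_1+a)^2 + 4b(1-z_1)\bigr]^2 - 16\, z_1 (1-z_1)\bigl[4(1-z_1) - (z_1+a)^2\bigr] \leq 0.
\]

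The contradiction then reduces to showing $D(z_1) > 0$ for every $z_1 \in [0, z_1^+]$. This is a one-variable problem: $D$ is a degree-four polynomial in $z_1$ with coefficients expressible in $\sqrt{3}$ and $r$. I would verify positivity by checking that $D(0) = (a^2+4b)^2 > 0$ explicitly, and that $D$ is decreasing on $[0, z_1^+]$. The latter follows because the cubic and quadratic terms of $D'$ are non-positive for $z_1 \geq 0$, so $D'(z_1)$ is bounded above by its linear plus constant part, which a direct estimate shows stays negative throughout the small interval $[0, z_1^+]$. The minimum of $D$ on the interval is therefore $D(z_1^+)$, which is positive by direct substitution.

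The main obstacle is this final numerical verification: the coefficients of $D$ involve both $\sqrt{3}$ and the specific value $r = \ratioBipartUB{}$, so one must carry out either rigorous rational interval bounds or a symbolic computer check to be sure of the signs. Because the extremal values $D(0) \approx 0.64$ and $D(z_1^+) \approx 0.09$ are not close to zero, a modest amount of arithmetic care should suffice.
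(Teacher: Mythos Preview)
Your argument is correct and gives a genuinely different route to the same conclusion than the paper's proof. The paper proceeds by \emph{linearizing} each of the three nonlinear constraints: for each set $A_i$ (the points satisfying one of the last three lines of \eqref{eqn_equationCases} together with the box constraints), it constructs an explicit half-plane $H_i \supseteq A_i$ by solving a one-variable optimization problem, and then shows $H_1 \cap H_2 \cap H_3 \cap \{z \ge 0\} = \emptyset$ via Farkas' lemma, exhibiting a dual certificate $y \in \mathbb{R}^3$. Your approach instead chains the nonlinear constraints directly: the second and fourth lines yield the sharp upper bound $z_1 \le z_1^+$, while the third and fourth lines compose to the single quartic inequality $D(z_1) \le 0$, whose failure on $[0,z_1^+]$ gives the contradiction.

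Each method has merits. The paper's linearization is modular (each constraint is relaxed independently) and produces a short algebraic certificate that is easy to verify after the fact. Your composition is tighter, since no information is lost to linearization, and it reduces everything to a single polynomial positivity check; the monotonicity argument for $D$ via the signs of the cubic and quadratic coefficients of $D'$ is a clean way to localize the minimum at $z_1^+$. Two small points worth making explicit when you write it up: (i) $L_1(L_2(z_1))$ is well-defined because $L_2(z_1) \le z_2 \le 2(1-r) < 1$, and (ii) the monotonicity of $L_1$ on $(-b,1)$ requires $L_2(z_1) > -b$, which you correctly derive from $L_2(z_1) \ge L_2(0) = a^2/4$.
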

\begin{proof}
    Define 
    \begin{align*}
        &A_1 := \setFunct{ z \in [0,1]^2}{  (1-z_1)(1-z_2) \geq 2r-1, \,  0 \leq z_1 \leq z_2 \leq 2(1-r) } \\
        &A_2 := \setFunct{ z \in [0,1]^2}{2\sqrt{z_1(1-z_2)} - z_2 \geq c_2, \,  0 \leq z_1 \leq z_2 \leq 2(1-r) } \\ &A_3 := \setFunct{z \in [0,1]^2}{ 2\sqrt{z_2(1-z_1)} - z_1 \geq c_3, \,  0 \leq z_1 \leq z_2 \leq 2(1-r) },
    \end{align*}
    where $c_2 := \left(4-\sqrt{3}  \right)r-2 \approx -0.109$, and $c_3 := \left( 2+ \sqrt{3} \right) r -2 \approx 1.112$. If $z$ is a solution to \eqref{eqn_equationCases}, then $z \in A := A_1 \cap A_2 \cap A_3$. We will show that $A = \emptyset$, which implies that \eqref{eqn_equationCases} is inconsistent. Proving $A = \emptyset$ directly is difficult, due to the nonlinearity of $A$. To circumvent this difficulty, we will define half-planes $H_i$, $i \in \{1,2,3\}$, that satisfy $A_i \subseteq H_i$. Therefore, the intersection $H := H_1 \cap H_2 \cap H_3 \cap \setFunct{z \in \mathbb{R}^2}{z \geq 0}$ satisfies $A \subseteq H$, so that $A = \emptyset$ follows from $H = \emptyset$. Since $H$ is a polytope, $H = \emptyset$ can be proven via Farkas' lemma \cite{farkas1902theorie}.

We define the half-planes $H_i$ as
\begin{align*}
    &H_1 := \setFunct{ z \in \mathbb{R}^2}{0.69z_1 + z_2 \leq 0.3330}, \,\, H_2 := \setFunct{z \in \mathbb{R}^2}{z_1 -0.183 z_2 \geq -0.0423}, \\
    &H_3 := \setFunct{z \in \mathbb{R}^2}{ -0.9z_1+z_2 \geq 0.3088}.
\end{align*}
We also define the functions
\begin{align*}
    f_1(x) := 0.69x + \frac{2 r + x - 2}{x - 1}, \, f_2(x) := \frac{\max\left\{ c_2+x,0 \right\}^2}{4(1-x)}-0.183x, \, f_3(x) := -0.9x+ \frac{\left(c_{3}+x\right)^{2}}{4\left(1-x\right)}, 
\end{align*}
that we use to prove the inclusions $A_i \subseteq H_i$.

To prove that $A_1 \subseteq H_1$, observe first that $z \in A_1 \implies z_2 \leq \frac{2 r + z_1 - 2}{z_1 - 1}$ and $z_1 \in [0,2(1-r)]$. For $z \in A_1$, we have
\begin{align*}
    0.69z_1 + z_2 \leq 0.69z_1 + \frac{2 r + z_1 - 2}{z_1 - 1} \leq \max_{x \in [0,2(1-r)]} f_1(x) = f_1(x^*) < 0.3330,
\end{align*}
where $x^*(\approx 0.016)$ is the stationary point of $f_1$ in the interval $[0,2(1-r)]$ (we omit the computation of $x^*$), and $f_1(x^*) \approx 0.332$. Thus, $A_1 \subseteq H_1$. 

For $A_2 \subseteq H_2$, we use that $z \in A_2 \implies z_1 \geq \frac{\max\left\{ c_2+z_2,0 \right\}^2}{4(1-z_2)}$ and $z_2 \in [0,2(1-r)]$. Then
\begin{equation}
\begin{aligned}    
    \label{eqn_minimizeF2}
   z_1 -0.183 z_2 \geq \frac{\max\left\{ c_2+z_2,0 \right\}^2}{4(1-z_2)} -0.183 z_2 \geq \min_{x \in [0,2(1-r)] } f_2(x). 
\end{aligned}
\end{equation}
To solve the minimization problem in \eqref{eqn_minimizeF2}, note that $f_2(x) = -0.183x$ for $x \in [0,-c_2]$. Thus, $\min_{x \in [0,-c_2] } f_2(x) = 0.183c_2 \approx -0.020$. For $x \in [-c_2,2(1-r)]$, we have $\min_{x \in [-c_2,2(1-r)] } f_2(x) = f_2(x^*) \approx -0.0422$. Here, $x^* ( \approx 0.323)$ is the stationary point of $f_2$ in $[-c_2,2(1-r)]$ (we omit the computation of $x^*$). We conclude that $\min_{x \in [0,2(1-r)] } f_2(x) = f_2(x^*) > -0.0423$, which implies by \eqref{eqn_minimizeF2} that $A_2 \subseteq H_2$. 

The proof of $A_3 \subseteq H_3$ is similar to the proof of $A_2 \subseteq H_2$. We have $z \in A_3 \implies z_2 \geq \frac{\left(c_{3}+z_1 \right)^{2}}{4\left(1-z_1\right)}$ and $z_1 \in [0,2(1-r)]$. Then
\begin{align*}
    -0.9z_1+z_2 \geq -0.9z_1 +\frac{\left(c_{3}+z_1 \right)^{2}}{4\left(1-z_1\right)} \geq \min_{x \in [0,2(1-r)] } f_3(x) = f_3(x^*) > 0.3088,
\end{align*}
where $x^* ( \approx 0.015)$ is the stationary point of $f_3$ in $[0,2(1-r)]$, and $f_3(x^*) \approx 0.309$. Thus, $A_i \subseteq H_i$ for all $i \in \{1,2,3\}$. 

Using slack variables $s$, we can write the intersection $H$ as follows: $H$ consists of all the $z \in \mathbb{R}^2$, $z \geq 0$, for which there exists an $s \in \mathbb{R}^3$, $s \geq 0$, such that $M\left[z^\top, s^\top \right]^\top = b$, where
\begin{align*}
    M := \begin{bmatrix}
0.69 & 1 & 1 & 0 & 0 \\
1 & -0.183 & 0 & -1 & 0 \\
-0.9 & 1 & 0 & 0 & -1 
\end{bmatrix} \text{ and } b := \begin{bmatrix}
    0.3330 \\ -0.0423 \\ 0.3088
\end{bmatrix}.
\end{align*}
By Farkas' lemma \cite{farkas1902theorie}, $H$ is empty if and only if there exists a vector $y \in \mathbb{R}^3$ satisfying $M^\top y \geq 0$ and $b^\top y < 0$. It can be verified that $y := [0.824,-1.447,-1.087]^\top$ satisfies these conditions. Thus, $H = \emptyset$, which implies that $A = \emptyset$ because $A \subseteq H$. Since any solution to \eqref{eqn_equationCases} is contained in $A$, it follows that \eqref{eqn_equationCases} is inconsistent.
\end{proof}

\section{Computational details of proof of \texorpdfstring{\Cref{lemma_5VertexBound}}{Lemma 8}}
\label{section_compuDetails}
\newcommand{\fileNameOne}{\texttt{triFreeBiconn\_s13.g6}}
\newcommand{\fileNameTwo}{\texttt{triFreeNonBipartite\_s13.g6}}

We provide computational details on the verification of $c(G,s) \leq \lfloor s /2 \rfloor$ for all $G \in \mathcal{G}_{s}$ for $s \in \{5,7,9,11, 13\}$. We performed the following steps on a laptop (16 GB RAM and Intel i7-1165G7 CPU), which required approximately 16 hours to run. Our code is available at
\begin{center}
    \githubLink.
\end{center}

We first use the software package \texttt{nauty} \cite{mckay2014practical} to generate the graphs in $\mathcal{G}_s$ that satisfy the properties \ref{eqn_graphClassific} to \ref{eqn_edgeMin} of \Cref{lemma_gClassification}. Then, for the case $s \in \{5,7,9\}$, we verify that $c(G,2) \leq \lfloor s /2 \rfloor$ for these graphs in $\mathcal{G}_s$ using SDP. Computing $c(G,2)$ can be done using the Pauli-based \ref{eqn_sdpRelax}, but, in fact, solving a relaxation of \ref{eqn_sdpRelax} based on the SWAP operators already suffices to establish the desired upper bound. (More precisely, we compute the first level of the QMC SDP relaxation based on the SWAP operators, see \cite{KPTTZ23,CEHKW23} and in particular the discussion in \cite[Sect.~5.1.2]{KPTTZ23}.)  

For $s = 11$, we consider the 26360 (see \Cref{table_lemmaReductionG}) remaining graphs $G_j = ([11],E_j)$, $j \in [26360]$ in the sequence as returned by \texttt{nauty}. We construct the corresponding Hamiltonians {recursively as}
\begin{align}
    \label{eqn_HG_construction}
    H_{G_{j+1}} = H_{G_{j}} + \sum_{e \in E_{j+1} \setminus E_j} H_e - \sum_{e \in E_{j} \setminus E_{j+1}} H_e.
\end{align}
Constructing $H_{G_{j+1}}$ using \eqref{eqn_HG_construction} is efficient since \texttt{nauty} returns a sequence of graphs where $E_j \approx E_{j+1}$. Additionally, \eqref{eqn_HG_construction} shows that 
\begin{align}
    \label{eqn_lambdaUB}
    \lambdaMax{H_{G_{j+1}}} \leq \lambdaMax{H_{G_{j}}} + \lambdaMax{ \sum_{e \in E_{j+1} \setminus E_{j}} H_e }.
\end{align}
Here, we have used that $H_e = (1/4)H_e^2 \succeq 0$. Generally, $\lambdaMax{ \sum_{e \in E_{j+1} \setminus E_j} H_e } \leq 4 \left| E_{j+1} \setminus E_j \right|$, and tighter bounds are possible if, for example, the edges $E_{j+1} \setminus E_j$ form a star graph. If the bound \eqref{eqn_lambdaUB} already proves that $c\mleft( G_{j+1},s \mright) = \lambdaMax{H_{G_{j+1}}}/2 - | E_{j+1} | \leq \lfloor s /2 \rfloor$, we do not carry out the computation of $\lambdaMax{H_{G_{j+1}}}$, nor the construction of $H_{G_{j+1}}$.  If the bound \eqref{eqn_lambdaUB} does not prove $c(G_{j+1},s) \leq \lfloor s /2 \rfloor$, we compute $\lambdaMax{H_{G_{j+1}}}$ with MATLABs \texttt{eigs} function. 

The case $s = 13$ proceeds similarly as the case $s = 11$, except we first discard some of the 9035088 graphs in $\mathcal{G}_{13}$ that satisfy properties \ref{eqn_graphClassific} to \ref{eqn_edgeMin} of \Cref{lemma_gClassification}, by arguing as follows: of these 9035088 graphs, 959842 of them do not satisfy property \ref{eqn_stableSetNeighb} with $|S| = 2$. We verify that these 959842 graphs do not satisfy property \ref{eqn_stableSetNeighb} in approximately 5 seconds. Of the now remaining 8075246 graphs, 1622184 of them satisfy $\tau(G) \leq 6$. We verify that these 1622184 graphs satisfy $\tau(G) \leq 6$ in approximately 20 seconds. By Item \ref{item_tauBound} of \Cref{lemma_vertexCover}, these graphs satisfy $c(G,13) \leq \tau(G) \leq 6$, so we may discard them. For the remaining 6453062 graphs $G$, we compute upper bounds on $\lambdaMax{H_G}$ in the manner described for the case $s = 11$.
\end{document}